\newcommand{\LukA}{\textnormal{\L}\forall}
\newcommand{\RPLA}{\textnormal{RPL}\forall}
\newcommand{\GLukA}{\textnormal{G\L}\forall}
\newcommand{\GaLukA}{\textnormal{G}_{\textnormal{a}}\textnormal{\L}\forall}
\newcommand{\GzeroLukA}{\textnormal{G}^0\textnormal{\L}\forall}
\newcommand{\GoneLukA}{\textnormal{G}^1\textnormal{\L}\forall}
\newcommand{\GonezLukA}{\textnormal{G}^{\hat{1}}\textnormal{\L}\forall}
\newcommand{\GtwoLukA}{\textnormal{G}^2\textnormal{\L}\forall}
\newcommand{\GthreeLukA}{\textnormal{G}^3\textnormal{\L}\forall}
\newcommand{\GiLukA}{\textnormal{G}^i\textnormal{\L}\forall}
\newcommand{\HLukA}{\textnormal{H\L}\forall}
\newcommand{\HRPLA}{\textnormal{HRP}\forall}
\newcommand{\HcLukA}{\widehat{\textnormal{H}}\textnormal{\L}\forall}
\newcommand{\HcRPLA}{\widehat{\textnormal{H}}\textnormal{RP}\forall}
\newcommand{\RePl}[3]{[#1]^{#2}_{#3}} 
\newcommand{\SpV}[1]{\mathfrak{#1}} 
\newcommand{\MsV}[1]{\Vert{#1}\Vert} 
\newtheoremstyle{LTheoremStyle}{2mm}{2mm}{\slshape}{}{\mdseries\scshape}{.}{ }{}
\theoremstyle{LTheoremStyle}
\newtheorem{ltheorem}{Theorem}
\newtheorem{llemma}[ltheorem]{Lemma}
\newtheorem{lproposition}[ltheorem]{Proposition}
\newtheorem{lcorollary}[ltheorem]{Corollary}
\theoremstyle{remark}
\newtheorem{remark}{Remark}
\begin{document}
\thispagestyle{empty}

\begin{center} \bf \Large 
COMPARING SEVERAL CALCULI FOR FIRST-ORDER INFINITE-VALUED {\L}UKASIEWICZ LOGIC

\bigskip

Alexander S. Gerasimov
\end{center}
\begin{center}
\href{mailto:alexander.s.gerasimov@ya.ru}
     {\textcolor{black}{\texttt{alexander.s.gerasimov@ya.ru}}}
\end{center}

\bigskip\medskip

\begin{minipage}{0.9\textwidth} \footnotesize
\textbf{Abstract.} 
From the viewpoint of provability, we compare 
some Gentzen-type hypersequent calculi 
for first-order infinite-valued {\L}ukasiewicz logic and 
for first-order rational Pavelka logic with each other and 
with H{\'a}jek's Hilbert-type calculi for these logics.
The key aspect of our comparison is a density elimination proof
for one of the hypersequent calculi considered.

\smallskip
\textbf{Keywords:}
many-valued logic; mathematical fuzzy logic;
first-order infinite-valued {\L}ukasiewicz logic; 
first-order rational Pavelka logic; 
proof theory; hypersequents; density elimination; conservative extension.
\end{minipage}

\medskip

\section{Introduction}

Mathematical fuzzy logics provide formal foundations for approximate reasoning.
Among the most important such logics are 
first-order in\-fi\-nite-valued {\L}ukasiewicz logic $\LukA$ and 
its expansion by rational truth constants,
first-order rational Pavelka logic $\RPLA$
(see \cite{Hajek1998, HMFL2011,HMFL2015}).

For the logic $\LukA$, as well as for the logic $\RPLA$, 
besides equivalent Hilbert-type calculi (see, e.g., \cite{Hajek1998}), 
only the Gentzen-type calculi mentioned below are known.

For $\LukA$, the article \cite{BaazMetcalfe2010} presents
an analytic hypersequent calculus $\GLukA$ with structural inference rules,
and establishes that $\GLukA$ extended with the cut rule 
and a Hilbert-type calculus for $\LukA$ from \cite{Hajek1998}
prove the same $\LukA$-sentences.

With the aim of developing proof search methods for $\LukA$ and $\RPLA$,
we introduced the following calculi.
First, excluding all the structural inference rules from $\GLukA$, 
in \cite{Ger2017} we obtained a cumulative\footnote{
   We say that
   a hypersequent calculus is \emph{cumulative} if so is its every rule; 
   and a hypersequent rule is \emph{cumulative} if, for its every application, 
   each premise contains the conclusion
   (cf. \cite[Section 3.5.11]{Troelstra2000}).
} 
hypersequent calculus $\GoneLukA$ for $\RPLA$, and 
showed that any $\GLukA$-provable sentence is $\GoneLukA$-provable.
(Also, in \cite{Ger2017} we introduced a variant $\GtwoLukA$ of $\GoneLukA$, 
which is suitable for bottom-up proof search for prenex $\RPLA$-sentences.)
Next, in \cite{Ger201t} we presented a hypersequent calculus $\GthreeLukA$ 
for $\RPLA$ without structural inference rules; 
this calculus is repetition-free, in the sense that 
designations of multisets of formulas are not repeated 
in any premise of its rules.
As shown in \cite{Ger201t}, $\GthreeLukA$ is well-sui\-ted to 
bottom-up proof search for arbitraty $\RPLA$-sentences, and
any $\GoneLukA$-provable sentence (and so any $\GLukA$-provable sentence)
is $\GthreeLukA$-provable.

In the present article, from the viewpoint of provability, 
we compare $\GthreeLukA$ with $\GLukA$ in more detail,
and compare $\GthreeLukA$ with Hilbert-type calculi for $\RPLA$ and $\LukA$ 
from \cite{Hajek1998}.
The key part of our comparison is a proof of the admissibility of
some variants of the density rule for an auxiliary hypersequent calculus;
the features and value of the proof are discussed in the concluding section,
in the context of works related to density elimination.

The article is organized as follows.
In Section \ref{sec:Prelim} we describe the syntax and semantics of the logics
$\LukA$ and $\RPLA$; then we formulate the calculi $\GLukA$ and $\GthreeLukA$, 
as well as the calculus $\GoneLukA$ and a new calculus $\GzeroLukA$,
which help us to compare $\GLukA$ and $\GthreeLukA$.
In Section \ref{sec:InitRelat} we show 
that $\GzeroLukA$ is a conservative extension of $\GLukA$ and
that any $\GzeroLukA$-provable sentence is $\GoneLukA$-provable
(and hence $\GthreeLukA$-provable).
In Section \ref{sec:FurtherRelat}
we establish the admissibility in $\GzeroLukA$ of two variants 
of the density rule (they underlie some rules of $\GthreeLukA$),
and using this, show that $\GthreeLukA$ and $\GzeroLukA$ are equivalent;
hence we conclude that $\GthreeLukA$ is a conservative extension of $\GLukA$.
In Section \ref{sec:HilRelat} we formulate Hilbert-type calculi 
$\HRPLA$ and $\HLukA$ for $\RPLA$ and $\LukA$, respectively; 
next we establish that any $\HRPLA$-provable sentence is provable in $\GthreeLukA$
extended with the cut rule (on $\RPLA$-formulas), 
and that any $\LukA$-sentence is provable in $\HLukA$ iff it is 
provable in $\GthreeLukA$ extended with the cut rule on $\LukA$-formulas.
In Section \ref{sec:Concl} we summarize our results, discuss our 
proof of density admissibility, and pose some problems for further research.

\section{Preliminaries}
\label{sec:Prelim}

Let us define $\LukA$- and $\RPLA$-formulas of a given signature
(it may contain predicate and function symbols of any nonnegative arities).
The notion of a \emph{term} is standard.
\emph{Atomic $\LukA$-formulas} are the truth constant $\bar{0}$ and
predicate symbols with terms as their arguments.
\emph{Atomic $\RPLA$-formulas} are atomic $\LukA$-formulas and
truth constants $\bar{r}$ for all positive rational numbers ${r \leqslant 1}$.\,
$\LukA$- and $\RPLA$-\emph{formulas} are built up as usual from atomic
$\LukA$- and $\RPLA$-formulas, respectively, 
using the following \emph{logical symbols}: 
the binary connective $\to$ and the quantifiers $\forall, \exists$.

An \emph{interpretation} ${\langle \mathcal{D}, \mu \rangle}$
of a given signature is defined as in classical logic, except that 
the map $\mu$ takes each $n$-ary predicate symbol $P$ to
a predicate ${\mu(P): \mathcal{D}^n \to [0,1]}$,\,
where $[0,1]$ is an interval of real numbers.
Let ${M = \langle \mathcal{D}, \mu \rangle}$ be an interpretation. 
Then an $M$-\emph{valuation} is a map of the set of all individual variables 
to~$\mathcal{D}$.
For an $M$-valuation $\nu$, an individual variable $x$, and 
an element ${d\in\mathcal{D}}$, by $\nu[x\mapsto d]$ 
we denote the $M$-valuation that may differ from $\nu$ only on $x$ and 
obeys the condition \,${\nu[x\mapsto d](x) = d}$.

The \emph{value $|t|_{M,\nu}$ of a term $t$} under an interpretation $M$ and 
an $M$-valuation $\nu$ is defined in the standard manner.
The \emph{truth value $|C|_{M,\nu}$ of an $\RPLA$-formula $C$} 
under an interpretation ${M = \langle \mathcal{D}, \mu \rangle}$ and 
an $M$-valuation $\nu$ is defined as follows:

(1)~${|\bar{r}|_{M,\nu}=r}$;

(2)~${|P(t_1,\ldots,t_n)|_{M,\nu} = \mu(P)(|t_1|_{M,\nu},\ldots,|t_n|_{M,\nu})}$ 
for an $n$-ary predicate symbol $P$ and terms $t_1,\ldots,t_n$;

(3)~${|A \to B|_{M,\nu} = \min(1-|A|_{M,\nu}+|B|_{M,\nu}, \,1)}$;

(4)~${|\forall x A|_{M,\nu} = \inf_{d\in\mathcal{D}} |A|_{M,\nu[x\mapsto d]}}$;

(5)~${|\exists x A|_{M,\nu} = \sup_{d\in\mathcal{D}} |A|_{M,\nu[x\mapsto d]}}$.

An $\RPLA$-formula $C$ is called \emph{valid} 
(also written ${\vDash C}$) if ${|C|_{M,\nu} = 1}$ 
for every interpretation $M$ and every $M$-valuation~$\nu$.

The result of substituting a term $t$ for all free occurrences of an individual 
variable $x$ in an $\RPLA$-formula $A$ is denoted by ${\RePl{A}{x}{t}}$.
The provability (resp. unprovability) of an object $\alpha$ 
in a calculus $\mathfrak{C}$\, is written as \,${\vdash_\mathfrak{C} \alpha}$ 
(resp. ${\nvdash_\mathfrak{C} \alpha}$).
By a proof in a calculus, we mean a proof tree. 
In depicting a proof tree $D$, if we place a designation over a node $N$ of $D$
and do not separate the designation from $N$ by a horizontal line, 
then we regard the designation as one for the proof tree whose root is~$N$
and that is a subtree of $D$.

The letters $k$, $l$, $m$, $n$ stand for nonnegative integers.
An expression $k..n$ denotes the set
${\{ k, k+1, \ldots, n\}}$ if ${k \leqslant n}$, and the empty set otherwise.

In what follows, we work with a fixed signature that includes 
a countably infinite set of nullary function symbols called \emph{parameters}.

Let us formulate the auxiliary hypersequent calculus $\GzeroLukA$ and
define accompanying notions and notation commom to several calculi considered.

We introduce two countably infinite disjoint sets of new words and
call such words \emph{semipropositional variables of type}~0 and 
\emph{of type}~1, respectively.
An $\RPLA$-formula as well as a semipropositional variable (of any type) 
is called a \emph{formula}.

An $\RPLA^1_0$-\emph{sequent} (or simply a \emph{sequent}) 
is written as ${\Gamma\Rightarrow\Delta}$ and
is an ordered pair of finite multisets $\Gamma$ and $\Delta$ 
consisting of formulas.
An $\RPLA^1_0$-\emph{hypersequent} (\emph{hypersequent} for short)
is a finite multiset of sequents and is written as 
${\Gamma_1\Rightarrow\Delta_1\,|\ldots|\,\Gamma_n\Rightarrow\Delta_n}$ or
${\big[ \Gamma_i\Rightarrow\Delta_i \big]_{i\in 1..n}}$.

A sequent not containing logical symbols is called \emph{atomic}.
Suppose that $\mathcal{H}$ is a hypersequent; then by $\mathcal{H}_{at}$ 
we denote the hypersequent obtained from $\mathcal{H}$ by removing 
all non-atomic sequents.

We define an \emph{hs-interpretation} as an interpretation
$\langle \mathcal{D}, \mu \rangle$ in which the map $\mu$ additionally takes 
each semipropositional variable of type~0 to a real number in $[0, +\infty)$ and
each semipropositional variable of type~1 to a real number in $(-\infty, 1]$.
For a semipropositional variable $\SpV{p}$,
an hs-interpretation ${M = \langle \mathcal{D}, \mu \rangle}$, and
an $M$-valuation $\nu$,
the value $\mu(\SpV{p})$ will also be written as $|\SpV{p}|_M$ 
and as $|\SpV{p}|_{M,\nu}$.

For a finite multiset $\Gamma$ of formulas, a sequent
${\Gamma\Rightarrow\Delta}$, an hs-in\-ter\-pre\-ta\-tion $M$, 
and an $M$-valuation $\nu$, we put
$${\MsV{\Gamma}_{M,\nu} = \sum_{A\in\Gamma} (|A|_{M,\nu}-1)},$$
where the summation is performed taking multiplicities of multiset elements
into account, and \,${\sum_{A\in\varnothing} (\ldots) = 0}$.
A sequent ${\Gamma\Rightarrow\Delta}$ is called \emph{true} 
under an hs-interpretation $M$ and an $M$-valuation $\nu$\, if
$$\MsV{\Gamma}_{M,\nu} \leqslant \MsV{\Delta}_{M,\nu}.$$ 

Following \cite[Definition 1]{BaazMetcalfe2010}, we say that
a hypersequent $\mathcal{H}$ is \emph{valid} (and write ${\vDash \mathcal{H}}$)
if, for every hs-interpretation $M$ and every $M$-va\-lu\-a\-tion $\nu$,
some sequent in $\mathcal{H}$ is true under $M$ and $\nu$. 
Note that, for an $\RPLA$-formula $A$, 
\,${\vDash A}$ iff \,${\vDash (\Rightarrow A)}$.
To denote that a hypersequent $\mathcal{G}$ is not valid, we write
${\nvDash \mathcal{G}}$.

Unless otherwise specified, below
the letters $A$, $B$, and $C$ denote any $\RPLA$-formulas, 
$\Gamma$, $\Delta$, $\Pi$, and $\Sigma$ any finite multisets of formulas, 
$S$ any sequent,
$\mathcal{G}$ and $\mathcal{H}$ any hypersequents, 
$x$ any individual variable, 
$t$ any closed term,
$a$ any parameter,
and $r$ any rational number such that ${0 \leqslant r \leqslant 1}$;
all these letters may have subscripts and superscripts.
Also $\SpV{p}_i$ (${i=0,1}$) denotes any semipropositional variable of type~$i$.

The language of the calculus $\GzeroLukA$ consists of all possible hypersequents.
A hypersequent $\mathcal{H}$ is called an \emph{axiom of $\GzeroLukA$}
if ${\vDash \mathcal{H}_{at}}$.
(Axioms of $\GzeroLukA$ can be recognized by a polynomial algorithm
in much the same way as described in \cite[Section 4.2]{Ger2017}.)

The inference rules of the calculus $\GzeroLukA$ are: 
\begin{center}
$\dfrac{\mathcal{G}\,|\,\Gamma, A\to B \Rightarrow \Delta\,|\,\Gamma \Rightarrow \Delta\,|\,\Gamma,B \Rightarrow A,\Delta}
 {\mathcal{G}\,|\,\Gamma, A\to B \Rightarrow \Delta}~(\to\:\Rightarrow)^0$,\\[\medskipamount]
$\dfrac{\mathcal{G}\,|\,\Gamma \Rightarrow A\to B, \Delta\,|\,\Gamma \Rightarrow \Delta;  \quad  \mathcal{G}\,|\,\Gamma \Rightarrow A\to B, \Delta\,|\,\Gamma,A \Rightarrow B,\Delta}
 {\mathcal{G}\,|\,\Gamma \Rightarrow A\to B, \Delta}~(\Rightarrow\:\to)^0$,\\[\medskipamount]
$\dfrac{\mathcal{G}\,|\,\Gamma, \forall x A \Rightarrow\Delta\,|\,\Gamma, \RePl{A}{x}{t} \Rightarrow\Delta}
 {\mathcal{G}\,|\,\Gamma, \forall x A \Rightarrow\Delta}~(\forall\Rightarrow)^0$,\\[\medskipamount]
$\dfrac{\mathcal{G}\,|\,\Gamma \Rightarrow \forall x A, \Delta\,|\,\Gamma \Rightarrow \RePl{A}{x}{a}, \Delta}
 {\mathcal{G}\,|\,\Gamma \Rightarrow \forall x A, \Delta}~(\Rightarrow\forall)^0$,\\[\medskipamount]
$\dfrac{\mathcal{G}\,|\,\Gamma \Rightarrow \exists x A, \Delta\,|\,\Gamma \Rightarrow \RePl{A}{x}{t}, \Delta}
 {\mathcal{G}\,|\,\Gamma \Rightarrow \exists x A, \Delta}~(\Rightarrow\exists)^0$,\\[\medskipamount]
$\dfrac{\mathcal{G}\,|\,\Gamma, \exists x A \Rightarrow\Delta\,|\,\Gamma, \RePl{A}{x}{a} \Rightarrow\Delta}
 {\mathcal{G}\,|\,\Gamma, \exists x A \Rightarrow\Delta}~(\exists\Rightarrow)^0$,
\end{center}
where $a$ does not occur in the conclusion of ${(\Rightarrow\forall)^0}$ or
${(\exists\Rightarrow)^0}$.

For convenience in comparing calculi, 
we also introduce the calculus $\GonezLukA$ that is obtained from $\GzeroLukA$
by replacing the inference rule ${(\to\:\Rightarrow)^0}$ with
$$\dfrac{\mathcal{G}\,|\,\Gamma, A\to B \Rightarrow \Delta\,|\,\Gamma, \SpV{p}_1 \Rightarrow \Delta\,|\,B \Rightarrow \SpV{p}_1, A}
  {\mathcal{G}\,|\,\Gamma, A\to B \Rightarrow \Delta}~(\to\:\Rightarrow)^{\hat{1}},$$
where $\SpV{p}_1$ does not occur in the conclusion.

The calculus $\GoneLukA$ \cite{Ger2017} is obtained from $\GonezLukA$ 
by restricting the language of $\GonezLukA$ to
hypersequents not containing semipropositional variables of type~0;
such hypersequents are called $\RPLA^1$-\emph{hypersequents}.

The rule of $\GiLukA$ (${i=1,\hat{1}}$) that 
corresponds to a rule of $\GzeroLukA$ is denoted just as the latter 
but with the superscript $i$ instead of~0.

\begin{remark} \label{rem:GonezGone}
It is clear that, for an $\RPLA^1$-hypersequent $\mathcal{H}$,
a $\GonezLukA$-proof of $\mathcal{H}$ is a $\GoneLukA$-proof of $\mathcal{H}$, 
and conversely. 
\end{remark}
 
The calculus $\GthreeLukA$ \cite{Ger201t} is obtained from $\GzeroLukA$ 
by replacing all the inference rules with the following ones:
\begin{center} 
$\dfrac{\mathcal{G}\,|\, \Gamma, \SpV{p}_1 \Rightarrow \Delta\,|\,B \Rightarrow \SpV{p}_1, A}
 {\mathcal{G}\,|\,\Gamma, A\to B \Rightarrow \Delta}~(\to\:\Rightarrow)^3$,\\[\medskipamount]
$\dfrac{\mathcal{G}\,|\, \Gamma \Rightarrow \Delta;  \quad  \mathcal{G}\,|\, \Gamma,A \Rightarrow B,\Delta}
 {\mathcal{G}\,|\,\Gamma \Rightarrow A\to B, \Delta}~(\Rightarrow\:\to)^3$,\\[\medskipamount]
$\dfrac{\mathcal{G}\,|\, \Gamma, \SpV{p}_1 \Rightarrow\Delta\,|\,\forall x A \Rightarrow \SpV{p}_1\,|\,\RePl{A}{x}{t} \Rightarrow \SpV{p}_1}
 {\mathcal{G}\,|\, \Gamma, \forall x A \Rightarrow\Delta}~(\forall\Rightarrow)^3$, \hfill
$\dfrac{\mathcal{G}\,|\, \Gamma \Rightarrow \RePl{A}{x}{a}, \Delta}
 {\mathcal{G}\,|\,\Gamma \Rightarrow \forall x A, \Delta}~(\Rightarrow\forall)^3$,\\[\medskipamount]
$\dfrac{\mathcal{G}\,|\, \Gamma \Rightarrow \SpV{p}_0, \Delta\,|\,\SpV{p}_0 \Rightarrow \exists x A\,|\,\SpV{p}_0 \Rightarrow \RePl{A}{x}{t}}
 {\mathcal{G}\,|\, \Gamma \Rightarrow \exists x A, \Delta}~(\Rightarrow\exists)^3$, \hfill
$\dfrac{\mathcal{G}\,|\, \Gamma, \RePl{A}{x}{a} \Rightarrow\Delta}
 {\mathcal{G}\,|\,\Gamma, \exists x A \Rightarrow\Delta}~(\exists\Rightarrow)^3$,
\end{center}
where $\SpV{p}_1$ does not occur in the conclusion of ${(\to\:\Rightarrow)^3}$
or ${(\forall\Rightarrow)^3}$,
$\SpV{p}_0$ does not occur in the conclusion of ${(\Rightarrow\exists)^3}$,
and $a$ does not occur in the conclusion of ${(\Rightarrow\forall)^3}$ or
${(\exists\Rightarrow)^3}$.

For an application of an inference rule of $\GiLukA$ (${i=0,1,\hat{1},3}$),
the \emph{principal} formula occurrence and 
the \emph{principal} sequent occurrence 
are defined in essentially the same manner as in 
\cite[\S~49]{Kleene1967} and \cite[Section 3.5.1]{Troelstra2000}.
The notion of an \emph{ancestor} of a sequent occurrence in 
a $\GiLukA$-proof (${i=0,1,\hat{1},3}$) is defined much as 
the notion of an ancestor of a formula occurrence is defined 
in \cite[\S~49]{Kleene1967}.  

Now we formulate the calculus $\GLukA$ \cite{BaazMetcalfe2010},
using parameters instead of free individual variables, which are
syntactically distinct from bound individual variables 
in \cite{BaazMetcalfe2010}.
The language of $\GLukA$ consists of all possible 
$\LukA$-\emph{hypersequents}, i.e., hypersequents that
do not contain semipropositional variables and, of truth constants, 
may contain only~$\bar{0}$.

The axiom schemes of $\GLukA$ are:
${A \Rightarrow A~(\text{id}),  \   \Rightarrow(\Lambda),  \ 
 \bar{0} \Rightarrow A~(\bar{0}\Rightarrow)}$, 
where $A$ is an $\LukA$-formula.

The inference rules of $\GLukA$ are:
\begin{center}
$\dfrac{\mathcal{G}}{\mathcal{G}\,|\, S}$~(ew), \quad
$\dfrac{\mathcal{G}\,|\, S \,|\, S}
       {\mathcal{G}\,|\, S}$~(ec), \quad
$\dfrac{\mathcal{G}\,|\,\Gamma\Rightarrow\Delta}
       {\mathcal{G}\,|\,\Gamma,C\Rightarrow\Delta}$~(wl),\\[\medskipamount]
$\dfrac{\mathcal{G}\,|\,\Gamma_1,\Gamma_2\Rightarrow\Delta_1,\Delta_2}
       {\mathcal{G}\,|\,\Gamma_1\Rightarrow\Delta_1\,|\,\Gamma_2\Rightarrow\Delta_2}$~(split), \quad
$\dfrac{\mathcal{G}\,|\,\Gamma_1\Rightarrow\Delta_1;  \quad  \mathcal{G}\,|\,\Gamma_2\Rightarrow\Delta_2}
       {\mathcal{G}\,|\,\Gamma_1,\Gamma_2\Rightarrow\Delta_1,\Delta_2}$~(mix),\\[\medskipamount]
$\dfrac{\mathcal{G}\,|\,\Gamma,B \Rightarrow A,\Delta}
       {\mathcal{G}\,|\,\Gamma, A\to B \Rightarrow \Delta}~(\to\:\Rightarrow)$, \quad
$\dfrac{\mathcal{G}\,|\,\Gamma \Rightarrow \Delta;  \quad  \mathcal{G}\,|\,\Gamma,A \Rightarrow B,\Delta}
       {\mathcal{G}\,|\,\Gamma \Rightarrow A\to B, \Delta}~(\Rightarrow\:\to)$,\\[\medskipamount]
$\dfrac{\mathcal{G}\,|\,\Gamma, \RePl{A}{x}{t} \Rightarrow\Delta}
       {\mathcal{G}\,|\,\Gamma, \forall x A \Rightarrow\Delta}~(\forall\Rightarrow)$, \quad
$\dfrac{\mathcal{G}\,|\,\Gamma \Rightarrow \RePl{A}{x}{a}, \Delta}
       {\mathcal{G}\,|\,\Gamma \Rightarrow \forall x A, \Delta}~(\Rightarrow\forall)$,\\[\medskipamount]
$\dfrac{\mathcal{G}\,|\,\Gamma \Rightarrow \RePl{A}{x}{t}, \Delta}
       {\mathcal{G}\,|\,\Gamma \Rightarrow \exists x A, \Delta}~(\Rightarrow\exists)$, \quad
$\dfrac{\mathcal{G}\,|\,\Gamma, \RePl{A}{x}{a} \Rightarrow\Delta}
       {\mathcal{G}\,|\,\Gamma, \exists x A \Rightarrow\Delta}~(\exists\Rightarrow)$,
\end{center}
where all the premises and conclusions are $\LukA$-hypersequents, and 
$a$ does not occur in the conslusion of ${(\Rightarrow\forall)}$ or
${(\exists\Rightarrow)}$.
The first five of these rules are called \emph{structural};
the others, \emph{logical}. 

For each calculus formulated above, its every one-premise rule in whose premise
$a$, $t$, or $\SpV{p}_i$ (${i=0,1}$) figurates, and
for any application of the rule, 
the $a$, $t$, or $\SpV{p}_i$ is called, respectively, 
the \emph{proper} parameter, \emph{proper} term, or 
\emph{proper} semipropositional variable of the application.

A \emph{proof of \textnormal{(}for\textnormal{)} an $\RPLA$-formula $A$}
in any of the hypersequent calculi given above is
a proof of the hypersequent \,${\Rightarrow A}$ in the respective calculus.

\section{Initial relationships between\\ the hypersequent calculi considered}
\label{sec:InitRelat}

\begin{ltheorem}  \label{Th:GzeroConsExtGLukA}
$\GzeroLukA$ is a conservative extension of \,$\GLukA$;
i.e., for any $\LukA$-hypersequent $\mathcal{H}$,
\,${\vdash_{\GzeroLukA} \mathcal{H}}$ iff \:${\vdash_{\GLukA} \mathcal{H}}$.
\end{ltheorem}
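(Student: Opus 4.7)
The plan is to prove both implications separately, each by induction on a given proof, showing that the axioms and inference rules of the source calculus are derivable or admissible in the target calculus. For the forward direction ($\vdash_{\GLukA} \mathcal{H} \Rightarrow \vdash_{\GzeroLukA} \mathcal{H}$), the axiom $(\Lambda)$ and the instances of $(\bar{0}\Rightarrow)$ with atomic right-hand side are $\GzeroLukA$-axioms directly (their atomic parts are valid). The remaining instances of $(\bar{0}\Rightarrow)$ and the identity axiom $A \Rightarrow A$ I would prove by induction on the complexity of $A$, applying the $\GzeroLukA$ rule for the outermost logical symbol of $A$ and observing that the resulting side premises contain the empty sequent $\Rightarrow$ (whose atomic part is trivially valid) or reduce to smaller identity instances. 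Admissibility in $\GzeroLukA$ of the structural rules of $\GLukA$ must be established next: (ew), (ec), and (wl) should go through by a routine induction on proof height, while (split) and (mix) demand a more careful case analysis on the last rule applied. Finally, each $\GLukA$ logical rule is simulated by its cumulative $\GzeroLukA$-counterpart, with the extra cumulative sequents supplied by admissible weakening.

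For the reverse direction ($\vdash_{\GzeroLukA} \mathcal{H} \Rightarrow \vdash_{\GLukA} \mathcal{H}$, when $\mathcal{H}$ is an $\LukA$-hypersequent), I first observe that a $\GzeroLukA$-proof whose endsequent is an $\LukA$-hypersequent consists entirely of $\LukA$-hypersequents, since no $\GzeroLukA$ rule introduces semipropositional variables or rational truth constants into a premise beyond what is present in its conclusion. A $\GzeroLukA$-axiom that is an $\LukA$-hypersequent has valid atomic part and is $\GLukA$-provable by the atomic completeness of $\GLukA$ from \cite{BaazMetcalfe2010}, which relies on the structural rules together with the atomic axioms. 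Each $\GzeroLukA$ rule is then simulated in $\GLukA$ by applying the corresponding $\GLukA$ logical rule to the ``decomposed'' sequent of the cumulative premise (e.g., $\Gamma, B \Rightarrow A, \Delta$ for $(\to\:\Rightarrow)^0$), producing extra copies of the conclusion sequent that can be merged by (ec) after using (wl) where needed.

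The hard part will be the admissibility of (split) and (mix) in $\GzeroLukA$. Unlike weakening and contraction, these rules act non-locally on sequents within a hypersequent, and since $\GzeroLukA$ has no explicit structural rule, their admissibility must be extracted from the cumulative logical rules by induction on proof height with a case analysis on the last rule applied. For (mix), which combines two derivations into one with an enlarged sequent, a double induction on the two premises' proof heights is likely needed, in the spirit of a cut-elimination argument.
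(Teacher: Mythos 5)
Your overall two-directional strategy matches the paper's. The direction from $\GzeroLukA$ to $\GLukA$ is handled exactly as you describe: axioms via the completeness of $\GLukA$ for quantifier-free $\LukA$-hypersequents followed by (ew), and each cumulative rule simulated by the corresponding $\GLukA$ logical rule applied to the decomposed sequent, with the duplicate copies of the conclusion sequent removed by (wl) and (ec). The converse direction likewise reduces to admissibility of the structural rules of $\GLukA$ in $\GzeroLukA$ plus simulation of the logical rules via admissible (ew). Two of your choices diverge from the paper. First, you propose deriving the non-atomic instances of (id) and $(\bar{0}\Rightarrow)$ inside $\GzeroLukA$ by induction on $A$; the paper instead cites Lemma~6 of \cite{Ger2016}, which says that $\GLukA$ proves the same $\LukA$-hypersequents as its variant $\GaLukA$ whose axioms (id) and $(\bar{0}\Rightarrow)$ are restricted to atomic $A$, so only atomic instances need to be recognized as $\GzeroLukA$-axioms. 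Your route is more self-contained but leaves you the (doable, somewhat tedious) induction to carry out. Second, for (mix) the paper does not run a cut-elimination-style double induction: it performs a proof-grafting construction, inserting $\Gamma_2\Rightarrow\Delta_2$ into the ancestors of $\Gamma_1\Rightarrow\Delta_1$ throughout the first proof and then attaching a transformed copy of the second proof at each affected leaf, finally checking that each resulting leaf is still an axiom by adding the truth inequalities of one atomic sequent from each original leaf. Since the merged sequent never gets smaller, a genuine double induction with permutation of inferences would be considerably more delicate here; the grafting argument sidesteps it.

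One step of your plan fails as stated: (wl) is not obtained by a routine induction on proof height. If $C$ is non-atomic and the sequent being weakened is atomic at a leaf, then $\Gamma, C \Rightarrow \Delta$ is no longer atomic, drops out of the atomic part of that leaf, and the leaf may cease to be an axiom --- for instance, the axiom consisting of the single empty sequent $\Rightarrow$ weakens to $A \to B \Rightarrow$, whose atomic part is the empty hypersequent and hence not valid (the weakened hypersequent is still provable, but it is not an axiom, so the base case of the height induction breaks). The paper instead proves (wl) by induction on the number of logical symbol occurrences in $C$: it adds $C$ to all ancestors of the distinguished sequent in the proof tree and then decomposes $C$ by backward applications of the appropriate $\GzeroLukA$ rule at the affected leaves, recovering the original atomic sequent alongside. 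This is the same decomposition idea you already invoke for the identity axioms, so the repair is within reach, but it must be made explicit; as written, ``routine induction on proof height'' for (wl) is a gap.
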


\begin{proof}[\textsc{Proof}]
Let $\mathcal{H}$ be an $\LukA$-hypersequent.
If ${\vdash_{\GzeroLukA} \mathcal{H}}$, 
then ${\vdash_{\GLukA} \mathcal{H}}$ by Lemma \ref{Lem:GzeroImpGLukA} below.
Let us prove the converse. 

We get the calculus $\GaLukA$ from $\GLukA$ 
by taking $A$ in the axiom schemes (id) and ${(\bar{0}\Rightarrow)}$
to be an atomic $\LukA$-formula.
Lemma 6 in \cite{Ger2016} guarantees that
\,${\vdash_{\GLukA} \mathcal{H}}$ iff \,${\vdash_{\GaLukA} \mathcal{H}}$.

So it suffices to show that \,${\vdash_{\GaLukA} \mathcal{H}}$ implies
\,${\vdash_{\GzeroLukA} \mathcal{H}}$.
Any axiom of $\GaLukA$ is obviously an axiom of $\GzeroLukA$.
All the structural rules of $\GaLukA$ 
are admissible for $\GzeroLukA$ by Lemma \ref{Lem:StructRulesAdmGzero} below.
Since the rule (ew) is admissible for $\GzeroLukA$, it follows easily that 
all the logical rules of $\GaLukA$ are admissible for $\GzeroLukA$.
\end{proof}

\begin{llemma}  \label{Lem:GzeroImpGLukA}
For any $\LukA$-hypersequent $\mathcal{H}$,\,
if \:${\vdash_{\GzeroLukA} \mathcal{H}}$, 
then \,${\vdash_{\GLukA} \mathcal{H}}$.
\end{llemma}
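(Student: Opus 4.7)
The plan is to induct on the length of a given $\GzeroLukA$-proof $D$ of $\mathcal{H}$. Because no inference rule of $\GzeroLukA$ ever introduces a semipropositional variable or a truth constant other than $\bar{0}$ when read from conclusion to premises, every hypersequent appearing in $D$ is automatically an $\LukA$-hypersequent, so the induction never leaves the language of $\GLukA$.

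For the base case, $\mathcal{H}$ is a $\GzeroLukA$-axiom, which means the atomic $\LukA$-hypersequent $\mathcal{H}_{at}$ is valid. I would invoke (or reprove from \cite{BaazMetcalfe2010}) the auxiliary fact that every valid atomic $\LukA$-hypersequent is $\GLukA$-provable: the axioms (id), $(\bar{0}\Rightarrow)$, $(\Lambda)$ together with the structural rules (split), (mix), (wl) reduce this to a combinatorial statement about multisets of atoms that is equivalent to validity, i.e., to $\MsV{\Gamma}_{M,\nu} \leqslant \MsV{\Delta}_{M,\nu}$ holding on some component. Once $\vdash_{\GLukA} \mathcal{H}_{at}$ is in hand, iterated use of (ew) restores the non-atomic sequents of $\mathcal{H}$, giving $\vdash_{\GLukA} \mathcal{H}$. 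This atomic completeness step is the one genuinely nontrivial piece of the argument.

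For the inductive step, I would simulate each rule of $\GzeroLukA$ in $\GLukA$, exploiting cumulativity: the conclusion already sits as a sequent inside each premise, so it suffices to apply the corresponding $\GLukA$-rule to the remaining sequent(s) of the premise and then collapse with (ec). Concretely, for $(\to\:\Rightarrow)^0$ I first use (wl) to add $A\to B$ to the left of the middle sequent $\Gamma\Rightarrow\Delta$, then apply $\GLukA$'s $(\to\:\Rightarrow)$ to $\Gamma, B \Rightarrow A, \Delta$, and finish with two applications of (ec). For $(\Rightarrow\:\to)^0$, the two $\GzeroLukA$-premises directly match the two premises of $\GLukA$'s $(\Rightarrow\:\to)$ when one reads $\mathcal{G}\,|\,\Gamma\Rightarrow A\to B, \Delta$ as the outer context; a single (ec) removes the resulting duplicate. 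The four quantifier rules are handled uniformly: apply the similarly-named $\GLukA$-rule to the second sequent of the $\GzeroLukA$-premise and contract, the proviso on the proper parameter transferring verbatim.

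The main obstacle is thus confined to the atomic completeness lemma underpinning the base case; the rule-by-rule simulations are a mechanical exercise that relies only on having (ew), (ec), and (wl) in $\GLukA$ together with the cumulativity built into each rule of $\GzeroLukA$.
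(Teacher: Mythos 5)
Your proposal is correct and follows essentially the same route as the paper: the base case appeals to the completeness of $\GLukA$ for valid atomic (quantifier-free) $\LukA$-hypersequents (the paper cites \cite[Theorem 6.24]{MOG2009}) followed by (ew), and the inductive step simulates each cumulative $\GzeroLukA$-rule in $\GLukA$ via the corresponding logical rule together with (wl) and (ec), exactly as in the paper's explicit treatment of ${(\to\:\Rightarrow)^0}$ and ${(\forall\Rightarrow)^0}$. Your preliminary observation that every hypersequent in the $\GzeroLukA$-proof stays within the $\GLukA$-language is a point the paper leaves implicit, and it is correct and worth making.
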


\begin{proof}[\textsc{Proof}]
Let $\mathcal{H}$ be an $\LukA$-hypersequent.

If $\mathcal{H}$ is an axom of $\GzeroLukA$, then
\,${\vDash \mathcal{H}_{at}}$, and
by the completeness of $\GLukA$ for quantifier-free $\LukA$-hypersequents
\cite[Theorem 6.24]{MOG2009}, we get \,${\vdash_{\GLukA} \mathcal{H}_{at}}$,
whence \,${\vdash_{\GLukA} \mathcal{H}}$ by the rule (ew).  

To conclude the proof, it is sufficient to show that 
all the rules of $\GzeroLukA$ are derivable in $\GLukA$ 
if their premises and conclusions are restricted to $\LukA$-hypersequents.
For the rule ${(\to\:\Rightarrow)^0}$, we have:
\begin{center}
\def\ScoreOverhang{0pt}
\AxiomC{$\mathcal{G}\,|\,\Gamma, A\to B \Rightarrow \Delta\,|\,\Gamma \Rightarrow \Delta\,|\,\Gamma,B \Rightarrow A,\Delta$}
\RightLabel{$(\to\:\Rightarrow)$}
\UnaryInfC{$\mathcal{G}\,|\,\Gamma, A\to B \Rightarrow \Delta\,|\,\Gamma \Rightarrow \Delta\,|\,\Gamma, A\to B \Rightarrow \Delta$}
\RightLabel{(wl)}
\UnaryInfC{$\mathcal{G}\,|\,\Gamma, A\to B \Rightarrow \Delta\,|\,\Gamma, A\to B \Rightarrow \Delta\,|\,\Gamma, A\to B \Rightarrow \Delta$}
\RightLabel{$\text{(ec)} \!\times\! 2$.}
\UnaryInfC{$\mathcal{G}\,|\,\Gamma, A\to B \Rightarrow \Delta$}
\DisplayProof
\end{center}
For the rule ${(\forall\Rightarrow)^0}$, we have:
\begin{center}
\def\ScoreOverhang{0pt}
\AxiomC{$\mathcal{G}\,|\,\Gamma, \forall x A \Rightarrow\Delta\,|\,\Gamma, \RePl{A}{x}{t} \Rightarrow\Delta$}
\RightLabel{$(\forall\Rightarrow)$}
\UnaryInfC{$\mathcal{G}\,|\,\Gamma, \forall x A \Rightarrow\Delta\,|\,\Gamma, \forall x A \Rightarrow\Delta$}
\RightLabel{(ec).}
\UnaryInfC{$\mathcal{G}\,|\,\Gamma, \forall x A \Rightarrow\Delta$}
\DisplayProof
\end{center}
The other rules of $\GzeroLukA$ are treated similarly to
${(\forall\Rightarrow)^0}$.
\end{proof}

\begin{llemma}  \label{Lem:StructRulesAdmGzero}
The following rules are admissible for $\GzeroLukA$:
\begin{center}
$\dfrac{\mathcal{G}}{\mathcal{G}\,|\, S}~\textnormal{(ew)}^0$, \quad
$\dfrac{\mathcal{G}\,|\, S \,|\, S}
       {\mathcal{G}\,|\, S}~\textnormal{(ec)}^0$, \quad
$\dfrac{\mathcal{G}\,|\,\Gamma\Rightarrow\Delta}
       {\mathcal{G}\,|\,\Gamma,C\Rightarrow\Delta}~\textnormal{(wl)}^0$,\\[\medskipamount]
$\dfrac{\mathcal{G}\,|\,\Gamma_1,\Gamma_2\Rightarrow\Delta_1,\Delta_2}
       {\mathcal{G}\,|\,\Gamma_1\Rightarrow\Delta_1\,|\,\Gamma_2\Rightarrow\Delta_2}~\textnormal{(split)}^0$, \quad
$\dfrac{\mathcal{G}\,|\,\Gamma_1\Rightarrow\Delta_1;  \quad  \mathcal{G}\,|\,\Gamma_2\Rightarrow\Delta_2}
       {\mathcal{G}\,|\,\Gamma_1,\Gamma_2\Rightarrow\Delta_1,\Delta_2}~\textnormal{(mix)}^0$.
\end{center}
Moreover, the rules $\textnormal{(ew)}^0$, $\textnormal{(ec)}^0$, and
$\textnormal{(split)}^0$ are height-preserving admissible, 
or briefly hp-admissible, for $\GzeroLukA$.
\end{llemma}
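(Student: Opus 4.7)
The plan is to prove all five admissibility claims by induction on the height of $\GzeroLukA$-derivations, taking advantage of the fact that the six rules of $\GzeroLukA$ are cumulative, so each rule application carries its entire conclusion up into the premise. As a preliminary I would establish height-preserving admissibility of renaming of parameters and of semipropositional variables; this is a standard induction that will be used throughout to discharge eigenvariable conditions whenever a structural rule introduces new material clashing with a proper parameter.

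For the three hp-admissibility claims I would argue by direct induction on derivation height. For $\textnormal{(ew)}^0$ the axiom case uses that adding a sequent to $\mathcal{H}$ leaves $\mathcal{H}_{at}$ either unchanged or enlarged by one atomic sequent, both of which preserve validity; inductively, each rule is re-applied with the new sequent $S$ absorbed into its side-hypersequent $\mathcal{G}$. For $\textnormal{(ec)}^0$ and $\textnormal{(split)}^0$ I would case-split on whether the contracted or split sequent is principal in the last rule: if not, the IH applies to the premise(s) and the rule is re-applied; if so, cumulativity yields several descendants of the principal sequent in the premise, and the IH must be invoked on each before re-applying the rule. Axiom preservation for $\textnormal{(split)}^0$ reduces to the elementary fact that $\MsV{\Gamma_1}+\MsV{\Gamma_2}\leqslant\MsV{\Delta_1}+\MsV{\Delta_2}$ forces at least one of $\MsV{\Gamma_1}\leqslant\MsV{\Delta_1}$ or $\MsV{\Gamma_2}\leqslant\MsV{\Delta_2}$.

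For $\textnormal{(wl)}^0$ I would induct on the complexity of the weakening formula $C$. For atomic $C$ the analogous induction on derivation height already goes through, using $|C|\leqslant 1$ and hence $\MsV{\Gamma,C}\leqslant\MsV{\Gamma}$. For $C = A\to B$ one extends the given derivation by $\textnormal{(ew)}^0$ with the two additional sequents from the premise of $(\to\:\Rightarrow)^0$ and then applies that rule. For $C = \forall x A$ (resp.\ $\exists x A$) one inserts $\Gamma, C\Rightarrow\Delta$ via $\textnormal{(ew)}^0$, invokes the IH on the smaller formula $\RePl{A}{x}{t}$ (resp.\ $\RePl{A}{x}{a}$ with a fresh $a$) to weaken the original $\Gamma\Rightarrow\Delta$-sequent, and then applies the matching quantifier rule.

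The chief obstacle is $\textnormal{(mix)}^0$. I would proceed by induction on the sum of the heights of the two premise proofs, with a case analysis on the last rules. When the principal sequent of the last rule lies in the shared $\mathcal{G}$, one aligns hypersequent contexts by $\textnormal{(ew)}^0$ on the other premise and re-applies the rule after invoking the IH. The hard sub-case is when the principal sequent equals one of the sequents to be mixed: cumulativity then forces $\Gamma_2\Rightarrow\Delta_2$ to be merged into several descendants of $\Gamma_1\Rightarrow\Delta_1$ in a single IH step, because chaining successive mix-applications would not shrink the height measure. I expect to overcome this by strengthening the induction hypothesis to a simultaneous \emph{multi-mix} variant that fuses the second premise into any designated family of target sequents in the first at once; the resulting stronger statement still admits the same double induction on heights, with the renaming lemma and the hp-admissibility of $\textnormal{(ew)}^0$ as the essential technical tools.
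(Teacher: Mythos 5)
Your treatment of $\textnormal{(ew)}^0$, $\textnormal{(ec)}^0$, and $\textnormal{(split)}^0$ (height induction, cumulativity, case split on whether the affected sequent is principal, the arithmetic observation that $\MsV{\Gamma_1}+\MsV{\Gamma_2}\leqslant\MsV{\Delta_1}+\MsV{\Delta_2}$ forces one of the two split inequalities) is the paper's argument. For $\textnormal{(wl)}^0$ your induction on the complexity of $C$ also matches, and your handling of $C=A\to B$ is in fact \emph{simpler} than the paper's: since the conclusion of $(\to\:\Rightarrow)^0$ is literally contained in its premise, two applications of $\textnormal{(ew)}^0$ followed by that rule do the job directly, whereas the paper reuses for this case the same ancestor-replacement machinery it needs for atomic $C$. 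Do note, however, that for \emph{atomic} $C$ a naive single-occurrence height induction does not quite close: when the last rule acts on the distinguished sequent, its premise contains several designated descendants ($\Gamma'\Rightarrow\Delta$ and $\Gamma',B\Rightarrow A,\Delta$ as well as $\Gamma',A\to B\Rightarrow\Delta$), all of which must receive the extra $C$ before the rule can be re-applied; you need either a simultaneous multi-occurrence weakening statement or, as the paper does, a global replacement of all ancestors of the distinguished occurrence followed by the check that leaves stay axioms (using $|C|\leqslant 1$).

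The real divergence is $\textnormal{(mix)}^0$. The paper does not induct at all: it performs a global surgery, adding $\Gamma_2,\Delta_2$ to every ancestor of $\Gamma_1\Rightarrow\Delta_1$ in a proof $D_1$, grafting onto each leaf a suitably transformed copy of a proof $D_2$ of the second premise, and then verifying \emph{semantically} that every resulting leaf is an axiom (if no sequent of the two $\mathcal{G}$-parts is true, some atomic ancestor on each side is true, and the mix of two true atomic sequents is true). Your double induction with a strengthened IH is a legitimate alternative, but as stated it has a gap: a ``multi-mix'' that designates a family of targets only \emph{in the first premise} is not a strong enough induction hypothesis. Once the first derivation bottoms out at an axiom, you must continue by analysing $D_2$; if the last rule of $D_2$ has the mixed sequent $\Gamma_2\Rightarrow\Delta_2$ as its principal sequent, its premise contains several descendants of that sequent, each of which must be mixed with \emph{every} designated atomic sequent coming from the first side. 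So the IH has to be bilateral --- a family of targets on each side, producing all pairwise mixes --- and you must then discharge the conclusion by one rule application per member of the opposite family (compare the paper's Lemma~\ref{Lem:GzeroAdmGenToRightRule}, which does exactly this bookkeeping in the density proof). Finally, your proposal never states the base case that carries the actual content, namely that mixing each atomic sequent of one valid atomic hypersequent with each atomic sequent of another yields a valid atomic hypersequent; without this the whole induction has nothing to bottom out on. Both repairs are routine, but they are precisely where the work lies.
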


\begin{proof}[\textsc{Proof}]
1.~It is clear that $\textnormal{(ew)}^0$ is hp-admissible (for $\GzeroLukA$).

2. To establish the hp-admissibility of $\textnormal{(ec)}^0$, we note
that all the rules of $\GzeroLukA$ are cumulative and 
proceed just as in the proof of Lemma 5 in \cite{Ger2017}.

3. To show that $\textnormal{(wl)}^0$ is admissible, we use induction on
the number of logical symbol occurrences in $C$.
Let ${\mathcal{H}_1 = (\mathcal{G}\,|\,\Gamma\Rightarrow\Delta)}$ and
${\mathcal{H}_2 = (\mathcal{G}\,|\,\Gamma,C\Rightarrow\Delta)}$.
We can assume that 
there is a \mbox{($\GzeroLukA$-)proof} $D_1$ for $\mathcal{H}_1$
such that no proper parameter from $D_1$ occurs in~$C$.

3.1. Suppose that $C$ is atomic or is of the form ${(A\to B)}$.
From $D_1$ we construct a proof search tree $D_2^0$ for $\mathcal{H}_2$ 
as follows.
For each occurrence $\mathcal{S}$ of a sequent of the form
${\Pi\Rightarrow\Sigma}$,\, 
if $\mathcal{S}$ is an ancestor of the distinguished occurrence of 
the sequent ${\Gamma\Rightarrow\Delta}$ in the root of $D_1$, then
we replace $\mathcal{S}$
by an occurrence $\mathcal{S}'$ of the sequent ${\Pi,C\Rightarrow\Sigma}$.
We also mark $\mathcal{S}'$ if 
$\mathcal{S}$ is an atomic sequent occurrence in a leaf of~$D_1$.

If $C$ is atomic, then $D_2^0$ is a proof for~$\mathcal{H}_2$.

Suppose $C$ is of the form ${(A\to B)}$, and
$\mathcal{S}_0,\ldots,\mathcal{S}_{l-1}$ are all distinct
marked sequent occurrences in~$D_2^0$.

We expand $D_2^0$ by performing the following for each ${i=0,\ldots,l-1}$: 
on the only branch $\mathcal{B}_i$ of $D_2^i$ containing $\mathcal{S}_i$,
apply the rule ${(\to\:\Rightarrow)^0}$ backward to 
the ancestor of $\mathcal{S}_i$ in the leaf on $\mathcal{B}_i$,
and denote by $D_2^{i+1}$ the tree obtained as a result of 
this backward application.

Note that if $\mathcal{S}_i$ is an occurrence of a sequent of the form
${\Pi_i,C\Rightarrow\Sigma_i}$, then 
the atomic sequent ${\Pi_i\Rightarrow\Sigma_i}$ is on the continuation of
the branch $\mathcal{B}_i$ in $D_2^{i+1}$.
Therefore, it is easy to see that $D_2^l$ is a proof for~$\mathcal{H}_2$.

3.2. Suppose that $C$ is of the form $Q x A$, where $Q$ is a quantifier.
By the induction hypothesis, 
there is a proof for
${\mathcal{H} = (\mathcal{H}_2 \,|\, \Gamma, \RePl{A}{x}{a} \Rightarrow \Delta)}$,
where $a$ is a parameter not occurring in $\mathcal{H}_2$.
By applying the rule ${(Q\Rightarrow)^0}$ to the distinguished occurrence of
$\RePl{A}{x}{a}$ in $\mathcal{H}$, we get a proof for~$\mathcal{H}_2$.

4. The proof of the hp-admissibility of $\textnormal{(split)}^0$ 
is very similar to the proof of Lemma 7 in \cite{Ger2017}.\footnote{ 
  \label{fn:SplitMix}
  See also Section \ref{secApp:SplitMixAdmGzero} 
  (of the appendix) on p.~\pageref{secApp:SplitMixAdmGzero}.
}

5. The proof of the admissibility of $\textnormal{(mix)}^0$ can be easily 
obtained from the proof of Lemma 8 in \cite{Ger2017} by identifying
the notion of a completable ancestor of a sequent occurrence
with the notion of an ancestor of a sequent occurrence
(the former notion is used in \cite{Ger2017}).$^{\ref{fn:SplitMix}}$
\end{proof}

\begin{ltheorem}  \label{Th:GzeroImpGonez}
If \:${\vdash_{\GzeroLukA} \mathcal{H}}$, 
then \,${\vdash_{\GonezLukA} \mathcal{H}}$.
\end{ltheorem}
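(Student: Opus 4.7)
The plan is an induction on the height of the given $\GzeroLukA$-proof of $\mathcal{H}$. Every axiom of $\GzeroLukA$ is already an axiom of $\GonezLukA$ (both are defined by $\vDash \mathcal{H}_{at}$), and every rule of $\GzeroLukA$ other than $(\to\:\Rightarrow)^0$ is also a rule of $\GonezLukA$; in those inductive steps I would simply apply the inductive hypothesis to the premises and re-apply the same rule. So the substantive task is to simulate $(\to\:\Rightarrow)^0$ inside $\GonezLukA$.

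Suppose the last step of the $\GzeroLukA$-proof is $(\to\:\Rightarrow)^0$ with premise $\mathcal{H}' := \mathcal{G}\,|\,\Gamma, A\to B \Rightarrow \Delta\,|\,\Gamma \Rightarrow \Delta\,|\,\Gamma, B \Rightarrow A, \Delta$ and conclusion $\mathcal{G}\,|\,\Gamma, A\to B \Rightarrow \Delta$. By the IH I have a $\GonezLukA$-proof $D'$ of $\mathcal{H}'$. I would pick a semipropositional variable $\SpV{p}_1$ of type~$1$ not occurring in $D'$, and plan to finish by one application of $(\to\:\Rightarrow)^{\hat{1}}$; this reduces the remaining task to producing a $\GonezLukA$-proof of
$$\mathcal{H}'' := \mathcal{G}\,|\,\Gamma, A\to B \Rightarrow \Delta\,|\,\Gamma, \SpV{p}_1 \Rightarrow \Delta\,|\,B \Rightarrow \SpV{p}_1, A.$$

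I would obtain the proof of $\mathcal{H}''$ by transforming $D'$: by cumulativity of every rule of $\GonezLukA$, both the ancestor of $\Gamma \Rightarrow \Delta$ and the ancestor of $\Gamma, B \Rightarrow A, \Delta$ from the root of $D'$ persist at every node, so I may replace each ancestor of $\Gamma \Rightarrow \Delta$ by $\Gamma, \SpV{p}_1 \Rightarrow \Delta$ and each ancestor of $\Gamma, B \Rightarrow A, \Delta$ by $B \Rightarrow \SpV{p}_1, A$. Axioms remain axioms by a direct check: if a model makes the original atomic $\Gamma \Rightarrow \Delta$ true, so is its substitute since $|\SpV{p}_1| - 1 \leqslant 0$; and if a model makes the original atomic $\Gamma, B \Rightarrow A, \Delta$ true, then for every $|\SpV{p}_1| \leqslant 1$ at least one of $\Gamma, \SpV{p}_1 \Rightarrow \Delta$ and $B \Rightarrow \SpV{p}_1, A$ is true, as can be seen by comparing $|\SpV{p}_1|$ with $1 + \MsV{\Delta}_{M,\nu} - \MsV{\Gamma}_{M,\nu}$.

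I expect the main obstacle to be verifying that each rule application in the transformed $D'$ remains a valid $\GonezLukA$-inference. The troublesome case is a rule whose principal formula lies in the $\Gamma$ or $\Delta$ part of the ancestor of $\Gamma, B \Rightarrow A, \Delta$, because after the substitution that ancestor is $B \Rightarrow \SpV{p}_1, A$ and no longer contains $\Gamma$ or $\Delta$. I would handle this by first normalizing $D'$ so that any such inference acts instead on the co-occurring ancestor of $\Gamma \Rightarrow \Delta$, which carries the same formulas and survives the substitution. These permutations rely on the admissibility in $\GonezLukA$ of external weakening, external contraction, and left-weakening by $\SpV{p}_1$, whose proofs follow Lemma~\ref{Lem:StructRulesAdmGzero} essentially verbatim, since $(\to\:\Rightarrow)^{\hat{1}}$, like $(\to\:\Rightarrow)^0$, is cumulative.
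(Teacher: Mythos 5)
Your overall skeleton is the same as the paper's: reduce everything to the admissibility of $(\to\:\Rightarrow)^0$ for $\GonezLukA$, and finish with one application of $(\to\:\Rightarrow)^{\hat{1}}$ after producing a proof of $\mathcal{H}''$. Your semantic check at the leaves is also sound: it is exactly the soundness of adding $\SpV{p}_1$ to both sides of $\Gamma,B\Rightarrow A,\Delta$ and then splitting that sequent into $\Gamma,\SpV{p}_1\Rightarrow\Delta$ and $B\Rightarrow\SpV{p}_1,A$. But the step you yourself flag as the main obstacle is a genuine gap, and the fix you sketch does not work. Replacing an ancestor of $\Gamma,B\Rightarrow A,\Delta$ by $B\Rightarrow\SpV{p}_1,A$ deletes the $\Gamma$- and $\Delta$-parts from that sequent, and your proposed ``normalization''---making any inference whose principal formula lies in those parts act instead on the co-occurring ancestor of $\Gamma\Rightarrow\Delta$---is not a permutation of the proof: in a cumulative calculus a rule application adds new sequents derived from its principal sequent occurrence, so redirecting the inference to a different sequent changes every hypersequent above that point, and the two families of ancestors evolve independently in $D'$ (they need not carry ``the same formulas'' anywhere except at the root). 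External weakening, external contraction, and left-weakening by $\SpV{p}_1$ cannot accomplish this redistribution; what is actually needed is the rule $\textnormal{(split)}^0$.

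That is precisely how the paper proceeds: it first establishes (by reference to Lemma 7 of \cite{Ger2017}; see also Section~\ref{secApp:SplitMixAdmGzero}) that $\textnormal{(split)}^0$ and $\textnormal{(ec)}^0$ are hp-admissible for $\GonezLukA$, together with the evidently hp-admissible rules that add $\SpV{p}_1$ to both sides of a sequent and that weaken a sequent on the left by $\SpV{p}_1$. Then the passage from $\mathcal{H}'$ to $\mathcal{H}''$ is a four-step composition of these admissible rules (Figure~\ref{fig:GzeroRuleToRightAdmGonez}), with no surgery on the proof tree at all. The split-admissibility proof is where the global bookkeeping you are trying to improvise actually lives---it handles exactly the case of a logical rule acting inside the sequent being split, by applying the induction hypothesis to all distinguished sequent occurrences in the premise and contracting. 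To repair your argument, either invoke $\textnormal{(split)}^0$ for $\GonezLukA$ explicitly and reduce your substitution to it, or reproduce that induction in full; as written, the normalization claim is unsupported and the cited structural rules are insufficient.
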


\begin{proof}[\textsc{Proof}]
All axioms of $\GzeroLukA$ are axioms of $\GonezLukA$.
All the rules of $\GzeroLukA$, except for the rule ${(\to\:\Rightarrow)^0}$, 
are rules of $\GonezLukA$.
Hence, it suffices to establish that ${(\to\:\Rightarrow)^0}$ is admissible for
$\GonezLukA$.

For this, we use the rules
$$
\dfrac{\mathcal{G}\,|\,\Gamma\Rightarrow\Delta}
  {\mathcal{G}\,|\,\Gamma,\SpV{p}_1 \Rightarrow \SpV{p}_1,\Delta}~(\textnormal{sp}_1\!\!\Rightarrow\!\!\textnormal{sp}_1)^0
\qquad \text{and} \qquad
\dfrac{\mathcal{G}\,|\,\Gamma\Rightarrow\Delta}
  {\mathcal{G}\,|\,\Gamma,\SpV{p}_1\Rightarrow\Delta}~\textnormal{(wl)}^0_{\textnormal{sp}_1},
$$
whose hp-admissibility for $\GonezLukA$ is obvious.
We also use the rules $\text{(ec)}^0$ and $\text{(split)}^0$ 
from Lemma \ref{Lem:StructRulesAdmGzero}, noting that
the proofs of their hp-ad\-mis\-si\-bi\-lity for $\GonezLukA$ 
are entirely analogous to the proofs of Lemmas 5 and 7 in \cite{Ger2017}.

We obtain the conclusion of the rule ${(\to\:\Rightarrow)^0}$ from its premise
by rules, which are admissible for $\GonezLukA$, 
as shown in Figure \ref{fig:GzeroRuleToRightAdmGonez},
where $\SpV{p}_1$ does not occur in 
\,${\mathcal{G}\,|\,\Gamma, A\to B \Rightarrow \Delta}$.
\begin{figure}[!tb]
\begin{center}
\def\ScoreOverhang{0pt}
\AxiomC{$\mathcal{G}\,|\,\Gamma, A\to B \Rightarrow \Delta\,|\,\Gamma \Rightarrow \Delta\,|\,\Gamma,B \Rightarrow A,\Delta$}
\RightLabel{$(\textnormal{sp}_1\!\!\Rightarrow\!\!\textnormal{sp}_1)^0$}
\UnaryInfC{$\mathcal{G}\,|\,\Gamma, A\to B \Rightarrow \Delta\,|\,\Gamma \Rightarrow \Delta\,|\,\Gamma,B,\SpV{p}_1 \Rightarrow \SpV{p}_1,A,\Delta$}
\RightLabel{$\text{(split)}^0$}
\UnaryInfC{$\mathcal{G}\,|\,\Gamma, A\to B \Rightarrow \Delta\,|\,\Gamma \Rightarrow \Delta\,|\,\Gamma,\SpV{p}_1 \Rightarrow \Delta\,|\,B \Rightarrow \SpV{p}_1,A$}
\RightLabel{$\textnormal{(wl)}^0_{\textnormal{sp}_1}$}
\UnaryInfC{$\mathcal{G}\,|\,\Gamma, A\to B \Rightarrow \Delta\,|\,\Gamma,\SpV{p}_1 \Rightarrow \Delta\,|\,\Gamma,\SpV{p}_1 \Rightarrow \Delta\,|\,B \Rightarrow \SpV{p}_1,A$}
\RightLabel{$\text{(ec)}^0$}
\UnaryInfC{$\mathcal{G}\,|\,\Gamma, A\to B \Rightarrow \Delta\,|\,\Gamma,\SpV{p}_1 \Rightarrow \Delta\,|\,B \Rightarrow \SpV{p}_1,A$}
\RightLabel{$(\to\:\Rightarrow)^{\hat{1}}$,}
\UnaryInfC{$\mathcal{G}\,|\,\Gamma, A\to B \Rightarrow \Delta$}
\DisplayProof
\caption{Obtaining the conclusion of the rule ${(\to\:\Rightarrow)^0}$ from its premise.}
\label{fig:GzeroRuleToRightAdmGonez}
\end{center}
\end{figure}
Thus ${(\to\:\Rightarrow)^0}$ is admissible for $\GonezLukA$.
\end{proof}

\begin{ltheorem}  \label{Th:GonezImpGthree}
If \:${\vdash_{\GonezLukA} \mathcal{H}}$, 
then \,${\vdash_{\GthreeLukA} \mathcal{H}}$.
\end{ltheorem}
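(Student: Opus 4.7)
The plan is to show every axiom of $\GonezLukA$ is an axiom of $\GthreeLukA$ and every inference rule of $\GonezLukA$ is admissible for $\GthreeLukA$, after which induction on a given $\GonezLukA$-proof of $\mathcal{H}$ yields the result. Since both calculi share the same language of hypersequents and the same axiom scheme (those $\mathcal{H}$ with $\vDash\mathcal{H}_{at}$), the axiom clause is immediate, and the task reduces to simulating each of the six rules $(\to\:\Rightarrow)^{\hat{1}}$, $(\Rightarrow\:\to)^0$, $(\forall\Rightarrow)^0$, $(\Rightarrow\forall)^0$, $(\Rightarrow\exists)^0$, $(\exists\Rightarrow)^0$ inside $\GthreeLukA$.

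Before treating the individual rules, I would assemble a toolkit of structural rules admissible for $\GthreeLukA$: external weakening $\textnormal{(ew)}^3$, external contraction $\textnormal{(ec)}^3$, $\textnormal{(split)}^3$, the rule $(\textnormal{sp}_1\!\!\Rightarrow\!\!\textnormal{sp}_1)^3$ that adjoins a type-1 semipropositional variable to both sides of a sequent, and its type-0 dual $(\textnormal{sp}_0\!\!\Rightarrow\!\!\textnormal{sp}_0)^3$. The hp-admissibility of $\textnormal{(ew)}^3$, $(\textnormal{sp}_1\!\!\Rightarrow\!\!\textnormal{sp}_1)^3$, and $(\textnormal{sp}_0\!\!\Rightarrow\!\!\textnormal{sp}_0)^3$ is essentially immediate; for $\textnormal{(ec)}^3$ and $\textnormal{(split)}^3$ I would invoke the corresponding results established in \cite{Ger201t}.

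Armed with this toolkit, I would simulate each $\GonezLukA$-rule by exploiting the cumulativity of its $(\cdot)^0$ or $(\cdot)^{\hat{1}}$ formulation: since the premise already contains the principal sequent $S_0$ of the conclusion, setting $\mathcal{G}' := \mathcal{G}\,|\,S_0$ and feeding the premise into the corresponding $\GthreeLukA$-rule (with $\mathcal{G}'$ in the side-hypersequent slot) produces the hypersequent $\mathcal{G}\,|\,S_0\,|\,S_0$, which is then collapsed by $\textnormal{(ec)}^3$. For $(\to\:\Rightarrow)^{\hat{1}}$, $(\Rightarrow\:\to)^0$, $(\Rightarrow\forall)^0$ and $(\exists\Rightarrow)^0$ this amounts to a single invocation of the matching $\GthreeLukA$-rule followed by $\textnormal{(ec)}^3$. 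For $(\forall\Rightarrow)^0$ the premise must first be reshaped: starting from $\mathcal{G}'\,|\,\Gamma, \RePl{A}{x}{t} \Rightarrow \Delta$ I would apply $(\textnormal{sp}_1\!\!\Rightarrow\!\!\textnormal{sp}_1)^3$ with a fresh $\SpV{p}_1$, then $\textnormal{(split)}^3$ to separate off $\RePl{A}{x}{t} \Rightarrow \SpV{p}_1$, then $\textnormal{(ew)}^3$ to insert $\forall x A \Rightarrow \SpV{p}_1$, and finally $(\forall\Rightarrow)^3$ followed by $\textnormal{(ec)}^3$. The case of $(\Rightarrow\exists)^0$ is the exact dual, using $\SpV{p}_0$ and $(\Rightarrow\exists)^3$.

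The main obstacle underlying every simulation above is the admissibility of $\textnormal{(ec)}^3$: the $\GthreeLukA$-rules are not cumulative, so the standard reduction that pushes a duplicate sequent upward into the premise does not immediately cope with the case where the duplicated sequent is the principal one of the last rule applied. All side conditions on eigenparameters and on semipropositional variables can be ensured by picking the auxiliary $\SpV{p}_i$ fresh and, if necessary, by renaming eigenparameters in the given $\GonezLukA$-proof.
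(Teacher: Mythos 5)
Your proposal is correct and follows essentially the same route as the paper, whose proof simply defers to Lemma 6 and Theorem 2 of \cite{Ger201t} (admissibility of structural rules for $\GthreeLukA$ plus a rule-by-rule simulation exploiting the cumulativity of the $\GonezLukA$-rules) with the superscript $\hat{1}$ substituted for $1$. The only load-bearing ingredients you do not discharge yourself, namely the admissibility of $\textnormal{(ec)}^3$ and $\textnormal{(split)}^3$ for $\GthreeLukA$, are precisely the ones the paper also imports from that reference, so nothing is missing relative to the paper's own argument.
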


\begin{proof}[\textsc{Proof}\nopunct]
is obtained from the proofs of Lemma 6 and Theorem 2 in \cite{Ger201t} 
by substituting the superscript $\hat{1}$ for the superscript 1 
(in $\GoneLukA$ and the designations of the rules of $\GoneLukA$).
\end{proof}

\section{The admissibility of variants of the density rule for $\GzeroLukA$\\
         and further relationships between\\ the hypersequent calculi considered}
\label{sec:FurtherRelat}
              
The primary goal of this section is to show that 
if a hypersequent is $\GthreeLukA$-provable, then it is $\GzeroLukA$-provable.
For this, we establish that all the rules of $\GthreeLukA$ are admissible for
$\GzeroLukA$.

As shown in the proof of the next lemma,
the rules ${(\to\:\Rightarrow)^3}$, ${(\forall\Rightarrow)^3}$, and
${(\Rightarrow\exists)^3}$ of $\GthreeLukA$ are based on the rules 
$$
\dfrac{\mathcal{G} \,|\,\Gamma, \SpV{p}_1 \Rightarrow \Delta \,|\,C \Rightarrow \SpV{p}_1}
  {\mathcal{G} \,|\,\Gamma, C \Rightarrow \Delta}~(\text{den}_1)
\quad \text{and} \quad
\dfrac{\mathcal{G} \,|\,\Gamma \Rightarrow \SpV{p}_0, \Delta \,|\,\SpV{p}_0 \Rightarrow C}
  {\mathcal{G} \,|\,\Gamma \Rightarrow C, \Delta}~(\text{den}_0),
$$
where $\SpV{p}_i$ does not occur in the conclusion of $(\text{den}_i)$, ${i=0,1}$.
The last two rules can be characterized as
nonstandard variants of the density rule, cf. \cite[Section~4.5]{MOG2009}.

\begin{remark} \label{rem:DensitySound}
The (standard) \emph{density rule} in the hypersequent formulation is:
$$\dfrac{\mathcal{G} \,|\, \Gamma, \SpV{p} \Rightarrow \Delta \,|\, \Pi \Rightarrow \SpV{p}, \Sigma}
  {\mathcal{G} \,|\, \Gamma, \Pi \Rightarrow \Delta, \Sigma}~(\text{den}),$$
where $\SpV{p}$ is a propositional variable not occurring in the conclusion;
see \cite[Section 4.5]{MOG2009}.
Given our definition of the validity of a hypersequent,
it is not hard to check that (den) is unsound, but becomes sound if
we expand the notion of a hypersequent by special variables 
interpreted by any real numbers, and require $\SpV{p}$ to be such a variable
not occurring in the conclusion.\footnote{
 See also Section \ref{secApp:DenSound} (of the appendix)
 on p.~\pageref{secApp:DenSound}.
}
Let us refer to this modified rule (den) as the \emph{nonstandard density rule}.
\end{remark}

\begin{llemma}  \label{Lem:IfDeniAdmGzero}
If the rules $(\textnormal{den}_1)$ and $(\textnormal{den}_0)$
are admissible for $\GzeroLukA$, then
\,${\vdash_{\GthreeLukA} \mathcal{H}}$ implies
\,${\vdash_{\GzeroLukA} \mathcal{H}}$.
\end{llemma}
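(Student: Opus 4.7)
The plan is to proceed by induction on the height of a $\GthreeLukA$-proof of $\mathcal{H}$, showing that each inference rule of $\GthreeLukA$ is admissible for $\GzeroLukA$ under the stated hypothesis. The base case is immediate because the two calculi share exactly the same notion of axiom: by construction $\GthreeLukA$ is obtained from $\GzeroLukA$ by replacing only the inference rules, so both admit as axioms the hypersequents $\mathcal{H}$ satisfying ${\vDash \mathcal{H}_{at}}$.

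Several cases will cost essentially nothing. The rule ${(\Rightarrow\:\to)^3}$ is literally identical to ${(\Rightarrow\:\to)^0}$. The single-premise rules ${(\Rightarrow\forall)^3}$ and ${(\exists\Rightarrow)^3}$ can each be simulated by first adjoining the conclusion-sequent to the premise via the hp-admissible external weakening $\textnormal{(ew)}^0$ from Lemma \ref{Lem:StructRulesAdmGzero}, and then applying ${(\Rightarrow\forall)^0}$ or ${(\exists\Rightarrow)^0}$ respectively; the proper-parameter side condition transfers because $a$ does not occur in the conclusion.

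For the three remaining rules ${(\to\:\Rightarrow)^3}$, ${(\forall\Rightarrow)^3}$, and ${(\Rightarrow\exists)^3}$, the uniform strategy is to transform the $\GthreeLukA$-premise into a hypersequent of the form ${\mathcal{G}\,|\,\Gamma, \SpV{p}_1 \Rightarrow \Delta\,|\,C \Rightarrow \SpV{p}_1}$ (or its $(\textnormal{den}_0)$-dual), and then apply $(\textnormal{den}_1)$ or $(\textnormal{den}_0)$ with $C$ taken as $A \to B$, $\forall x A$, or $\exists x A$, as appropriate. For ${(\forall\Rightarrow)^3}$ a single application of ${(\forall\Rightarrow)^0}$ directly effects this transformation, and dually ${(\Rightarrow\exists)^3}$ is handled via ${(\Rightarrow\exists)^0}$. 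In both cases the required side condition on $\SpV{p}_i$ in the density rule is inherited from the analogous side condition in the original $\GthreeLukA$-rule.

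The only case requiring more than one preparatory step is ${(\to\:\Rightarrow)^3}$, since ${(\to\:\Rightarrow)^0}$ takes three premise-sequents whereas ${(\to\:\Rightarrow)^3}$ has only two. Starting from the premise ${\mathcal{G}\,|\,\Gamma, \SpV{p}_1 \Rightarrow \Delta\,|\,B \Rightarrow A, \SpV{p}_1}$, I plan to introduce by $\textnormal{(ew)}^0$ the two auxiliary sequents ${A\to B \Rightarrow \SpV{p}_1}$ and ${\,\Rightarrow \SpV{p}_1}$, apply ${(\to\:\Rightarrow)^0}$ with empty context and succedent $\{\SpV{p}_1\}$ to collapse the three newly placed sequents into the single sequent ${A\to B \Rightarrow \SpV{p}_1}$, and finally apply $(\textnormal{den}_1)$ with $C = A\to B$ to reach the goal ${\mathcal{G}\,|\,\Gamma, A\to B \Rightarrow \Delta}$. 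The genuine difficulty of the section therefore lies not in this lemma, which is essentially bookkeeping, but in discharging its hypothesis: the real work is the admissibility of $(\textnormal{den}_0)$ and $(\textnormal{den}_1)$ for $\GzeroLukA$.
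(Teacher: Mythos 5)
Your proposal is correct and follows essentially the same route as the paper: simulate each $\GthreeLukA$-rule in $\GzeroLukA$ by adjoining the needed sequents with $\textnormal{(ew)}^0$, applying the corresponding cumulative $\GzeroLukA$-rule, and then discharging the extra $\SpV{p}_i$-sequent by $(\textnormal{den}_1)$ or $(\textnormal{den}_0)$; your derivations for ${(\to\:\Rightarrow)^3}$, ${(\forall\Rightarrow)^3}$, and ${(\Rightarrow\exists)^3}$ match the paper's exactly. One small inaccuracy: ${(\Rightarrow\:\to)^3}$ is \emph{not} literally identical to ${(\Rightarrow\:\to)^0}$ --- the latter is cumulative, repeating the principal sequent ${\Gamma \Rightarrow A\to B, \Delta}$ in both premises --- so this case also needs a preparatory $\textnormal{(ew)}^0$ step, exactly as in your handling of ${(\Rightarrow\forall)^3}$ and ${(\exists\Rightarrow)^3}$.
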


\begin{proof}[\textsc{Proof}]
Any axiom of $\GthreeLukA$ is an axiom of $\GzeroLukA$.
Assuming that $(\textnormal{den}_1)$ and $(\textnormal{den}_0)$ are
admissible for $\GzeroLukA$, we establish that all the rules of $\GthreeLukA$ 
are admissible for $\GzeroLukA$.
The conclusion of the rule ${(\to\:\Rightarrow)^3}$ is obtained from 
its premise as follows:
\begin{center}
\def\ScoreOverhang{0pt}
\AxiomC{$\mathcal{G}\,|\,\Gamma, \SpV{p}_1 \Rightarrow\Delta\,|\,B \Rightarrow \SpV{p}_1, A$}
\RightLabel{$\text{(ew)}^0 \!\times\! 2$}
\UnaryInfC{$\mathcal{G}\,|\,\Gamma, \SpV{p}_1 \Rightarrow\Delta\,|\,B \Rightarrow \SpV{p}_1, A \,| \Rightarrow \SpV{p}_1 \,|\,A\to B \Rightarrow \SpV{p}_1$}
\RightLabel{$(\to\:\Rightarrow)^0$}
\UnaryInfC{$\mathcal{G}\,|\,\Gamma, \SpV{p}_1 \Rightarrow\Delta \,|\,A\to B \Rightarrow \SpV{p}_1$}
\RightLabel{$(\text{den}_1)$,}
\UnaryInfC{$\mathcal{G}\,|\,\Gamma, A\to B \Rightarrow\Delta$}
\DisplayProof
\end{center}
$\text{(ew)}^0$ being admissible for $\GzeroLukA$ 
by Lemma \ref{Lem:StructRulesAdmGzero}.
The conclusion of the rule ${(\Rightarrow\exists)^3}$ is obtained from 
its premise thus:
\begin{center} 
\def\ScoreOverhang{0pt}
\AxiomC{$\mathcal{G}\,|\,\Gamma \Rightarrow \SpV{p}_0, \Delta \,|\,\SpV{p}_0 \Rightarrow \exists x A \,|\,\SpV{p}_0 \Rightarrow \RePl{A}{x}{t}$}
\RightLabel{$(\Rightarrow\exists)^0$}
\UnaryInfC{$\mathcal{G} \,|\,\Gamma \Rightarrow \SpV{p}_0, \Delta \,|\,\SpV{p}_0 \Rightarrow \exists x A$}
\RightLabel{$(\text{den}_0)$.}
\UnaryInfC{$\mathcal{G} \,|\,\Gamma \Rightarrow \exists x A, \Delta$}
\DisplayProof
\end{center}
The rule ${(\forall\Rightarrow)^3}$ is treated similarly to
${(\Rightarrow\exists)^3}$, but with an application of $(\text{den}_1)$.
Finally, the admissibility for $\GzeroLukA$ of the rules
${(\Rightarrow\:\to)^3}$, ${(\Rightarrow\forall)^3}$, and
${(\exists\Rightarrow)^3}$ follows easily from
the admissibility of $\text{(ew)}^0$. 
\end{proof}

Lemmas \ref{Lem:DenOneAdmGzero} and \ref{Lem:DenZeroAdmGzero} below
ensure that the rules $(\textnormal{den}_1)$ and $(\textnormal{den}_0)$ 
are admissible for $\GzeroLukA$.

\begin{llemma}[admissibility of a generalization of $(\textnormal{den}_1)$ 
                for $\GzeroLukA$]
\label{Lem:DenOneAdmGzero}
Suppose that ${m\geqslant 1}$, ${n\geqslant 1}$, 
\begin{gather*}
{\mathcal{H} = \Big(\, \mathcal{G} \,\big|\, \big[ \Gamma_i, \SpV{p}_1 \Rightarrow \Delta_i \big]_{i\in 1..m} \,\big|\, \big[ \Pi_j \Rightarrow \SpV{p}_1, \Sigma_j \big]_{j\in 1..n} \,\Big)},\\
{\mathcal{H}' = \Big(\, \mathcal{G} \,\big|\, \big[ \Gamma_i, \Pi_j \Rightarrow \Delta_i, \Sigma_j \big]^{i\in 1..m}_{j\in 1..n} \,\Big)},
\end{gather*}
$\SpV{p}_1$ does not occur in $\mathcal{H}'$,
the sequent \,${C \Rightarrow \SpV{p}_1}$ occurs in $\mathcal{H}$,
and \,${\vdash_{\GzeroLukA} \mathcal{H}}$.
Then \,${\vdash_{\GzeroLukA} \mathcal{H}'}$.
\end{llemma}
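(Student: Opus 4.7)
I would prove the lemma by induction on the height $h$ of a given $\GzeroLukA$-proof $D$ of $\mathcal{H}$. In the base case ($h = 0$), $\mathcal{H}$ is an axiom, so $\vDash \mathcal{H}_{at}$, and I aim to show $\vDash \mathcal{H}'_{at}$, making $\mathcal{H}'$ itself an axiom. Proceed by contrapositive: assume some hs-interpretation $M$ and $M$-valuation $\nu$ falsify every atomic sequent of $\mathcal{H}'_{at}$, and set $\alpha_i = \MsV{\Gamma_i}_{M,\nu} - \MsV{\Delta_i}_{M,\nu}$ for each $i$ with $\Gamma_i, \SpV{p}_1 \Rightarrow \Delta_i$ atomic and $\beta_j = \MsV{\Sigma_j}_{M,\nu} - \MsV{\Pi_j}_{M,\nu}$ for each $j$ with $\Pi_j \Rightarrow \SpV{p}_1, \Sigma_j$ atomic. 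The falsity of the mixed sequents gives $\alpha_i > \beta_j$ for all such pairs, so the open interval $(\max_i(1 - \alpha_i), \min_j(1 - \beta_j))$ is nonempty. I would pick $v$ from the intersection of this interval with $(-\infty, 1]$ and define $M'$ by $|\SpV{p}_1|_{M'} = v$; then $M', \nu$ falsifies every atomic sequent of $\mathcal{H}_{at}$, contradicting $\vDash \mathcal{H}_{at}$. The sequent $C \Rightarrow \SpV{p}_1$ given by the hypothesis plays the crucial role of ensuring $v \leq 1$ is feasible: when $C$ is atomic, this sequent lies in $\mathcal{H}_{at}$ and, since $C$ is an $\RPLA$-formula with $|C|_{M,\nu} \in [0,1]$, the corresponding $\beta$ equals $1 - |C|_{M,\nu}$ and is nonnegative, forcing $\min_j(1 - \beta_j) \leq 1$.

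For the inductive step ($h > 0$) I would do case analysis on the last rule $R$ of $D$. The crucial structural observation is that no $\GzeroLukA$-rule takes a semipropositional variable as its principal formula, so $\SpV{p}_1$ is always a side formula, carried from conclusion to premise in exactly the same positions. When $R$ acts on a sequent of $\mathcal{G}$, each premise still satisfies the lemma's hypotheses with the same $m$ and $n$ and with $C \Rightarrow \SpV{p}_1$ preserved; I apply the IH to each premise and reapply $R$, noting that any proper parameter of the original application, being fresh to $\mathcal{H}$, is still fresh to $\mathcal{H}'$. When $R$ acts on one of the $\Gamma_i, \SpV{p}_1 \Rightarrow \Delta_i$ (respectively, one of the $\Pi_j \Rightarrow \SpV{p}_1, \Sigma_j$), every premise carries additional sequents with $\SpV{p}_1$ on the same side, still fitting the schema with $m$ (respectively $n$) increased, and $C \Rightarrow \SpV{p}_1$ still occurs. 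The IH then yields a proof of the correspondingly enlarged mixed hypersequent, and I fold the extra mixed sequents back into $\mathcal{H}'$ by reapplying $R$ once for each index on the opposite side, using the hp-admissibility of $\textnormal{(ew)}^0$ and $\textnormal{(ec)}^0$ from Lemma~\ref{Lem:StructRulesAdmGzero} to align the two premises of any two-premise rule and to absorb duplicates.

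The hardest part will be the base case when $C$ is not atomic: then $C \Rightarrow \SpV{p}_1$ is excluded from $\mathcal{H}_{at}$, the constraint $\beta \geq 0$ above is no longer automatically provided by that sequent, and simultaneously the mixed sequents $\Gamma_i, C \Rightarrow \Delta_i$ drop out of $\mathcal{H}'_{at}$, so the contradiction argument above breaks down. I would handle this either by running an auxiliary induction on the complexity of $C$, in which the outermost connective of $C$ is unfolded by a backward application of the corresponding logical rule of $\GzeroLukA$ to the mixed sequents $\Gamma_i, C \Rightarrow \Delta_i$ in $\mathcal{H}'$, reducing to subgoals with a strictly simpler $C$ until the atomic base case applies; or by a direct argument exploiting the fact that, when $\mathcal{H}'_{at}$ is falsifiable while $\mathcal{H}_{at}$ is not, some $\Gamma_i \Rightarrow \Delta_i$ must be trivially true under the falsifying $M, \nu$, which combined with the admissibility of $\textnormal{(wl)}^0$ still yields a $\GzeroLukA$-derivation of $\mathcal{H}'$.
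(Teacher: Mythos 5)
Your overall strategy --- induction on the height of the proof of $\mathcal{H}$, a density-of-$\mathbb{R}$ argument in the axiom case with the sequent $C \Rightarrow \SpV{p}_1$ supplying the bound $\xi \leqslant 1$ needed for a type-1 variable, and a case split on where the last rule acts --- is the paper's strategy, and your atomic-$C$ base case is essentially the paper's item 1. But there are two genuine gaps. First, the non-atomic-$C$ base case, which you correctly flag as the crux, is left unresolved: your alternative (b) does not work (nothing forces some $\Gamma_i \Rightarrow \Delta_i$ to be true under a falsifying valuation, and $\textnormal{(wl)}^0$ goes the wrong way), and your alternative (a) is the right idea but aimed at the wrong object. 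The paper (Lemma~\ref{Lem:GzeroProofOfAxiomForDenOne}) unfolds $C$ by backward rule applications \emph{inside the proof of $\mathcal{H}$}, replacing each axiom leaf by a small derivation whose leaves all contain an atomic sequent of the form $C_\mathcal{L} \Rightarrow \SpV{p}_1$ or $\Rightarrow \SpV{p}_1$; the main induction then runs on this preprocessed proof, so its base case only ever meets the good situation. Note also that the invariant must admit the sequent $\Rightarrow \SpV{p}_1$ with empty antecedent, which your statement of the reduction does not cover.

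Second, the inductive step is seriously underestimated in the two hard subcases. When the last rule is the two-premise rule $(\Rightarrow\:\to)^0$ acting on $\Gamma_1, \SpV{p}_1 \Rightarrow \Delta_1$, the induction hypothesis yields proofs of $\mathcal{H}' \,\big|\, \big[ \Gamma_1, \Pi_j \Rightarrow \Delta_1', \Sigma_j \big]_{j\in 1..n}$ and $\mathcal{H}' \,\big|\, \big[ \Gamma_1, A, \Pi_j \Rightarrow B, \Delta_1', \Sigma_j \big]_{j\in 1..n}$, which differ in \emph{all} $n$ indexed sequents, whereas each of the $n$ needed applications of $(\Rightarrow\:\to)^0$ requires a pair differing in exactly one. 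Aligning them cannot be done with $\textnormal{(ew)}^0$ and $\textnormal{(ec)}^0$ alone: the paper's Lemmas~\ref{Lem:GzeroAdmGenToRightRule} and~\ref{Lem:GzeroExchange} essentially use $\textnormal{(mix)}^0$ and $\textnormal{(split)}^0$ (admissible by Lemma~\ref{Lem:StructRulesAdmGzero}) to merge $\Gamma_n,\Pi'$ with $\Gamma_k,\Pi''$ and re-split them crosswise. Similarly, when $(\Rightarrow\forall)^0$ or $(\exists\Rightarrow)^0$ acts on an indexed sequent, the induction hypothesis produces $n$ sequents all containing the \emph{same} proper parameter $a$, so folding them back one at a time violates the eigenvariable condition after the first application; one must first replace $a$ by $n$ distinct fresh parameters, which again goes through the exchange lemma (Lemma~\ref{Lem:GzeroAdmGenForallRight}). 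Your proposal does not address either obstruction.
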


\begin{proof}[\textsc{Proof}]
By Lemma \ref{Lem:GzeroProofOfAxiomForDenOne} below,
there exists a ($\GzeroLukA$-)proof $D$ of $\mathcal{H}$ in which
each leaf hypersequent $\mathcal{L}$ contains 
a sequent of the form \,${C_\mathcal{L} \Rightarrow \SpV{p}_1}$ or
\,${\Rightarrow \SpV{p}_1}$,
where $C_\mathcal{L}$ is an atomic $\RPLA$-formula.
We transform $D$ into a proof of $\mathcal{H}'$ using induction on
the height of~$D$.

1. Suppose that $\mathcal{H}$ is an axiom; i.e., ${\vDash \mathcal{H}_{at}}$.
Without loss of generality we assume that
$$\mathcal{H}_{at} = \Big(\, \mathcal{G}_{at} \,\big|\, 
  \big[ \Gamma_i, \SpV{p}_1 \Rightarrow \Delta_i \big]_{i\in 1..k} \,\big|\, 
  \big[ \Pi_j \Rightarrow \SpV{p}_1, \Sigma_j \big]_{j\in 1..l} \,\Big),$$
where \,${0 \leqslant k \leqslant m}$, \,${0 < l \leqslant n}$,\, and
the sequent \,${\Pi_1 \Rightarrow \SpV{p}_1, \Sigma_1}$\, has the form
\,${C_1 \Rightarrow \SpV{p}_1}$\, or \,${\Rightarrow \SpV{p}_1}$.
We put \,${\mathcal{H}'_{at} = (\mathcal{H}')_{at}}$.

1.1. Consider the case where ${k \neq 0}$. We have
$${\mathcal{H}'_{at} = \Big(\, \mathcal{G}_{at} \,\big|\, \big[ \Gamma_i, \Pi_j \Rightarrow \Delta_i, \Sigma_j \big]^{i\in 1..k}_{j\in 1..l} \,\Big)}.$$

We want to show that ${\vDash \mathcal{H}'_{at}}$.
Suppose otherwise; i.e.,
for some hs-interpretation $M$ and some $M$-valuation $\nu$, 
there is no true sequent in $\mathcal{G}_{at}$, and 
for all ${i\in 1..k}$ and ${j\in 1..l}$,
$$\MsV{\Delta_i}_{M,\nu} - \MsV{\Gamma_i}_{M,\nu} < \MsV{\Pi_j}_{M,\nu} - \MsV{\Sigma_j}_{M,\nu}.$$
By the density of the set $\mathbb{R}$ of all real numbers, 
there exists ${\xi \in \mathbb{R}}$ such that, 
for all ${i\in 1..k}$ and ${j\in 1..l}$,
$$\MsV{\Delta_i}_{M,\nu} - \MsV{\Gamma_i}_{M,\nu} < \xi-1 < \MsV{\Pi_j}_{M,\nu} - \MsV{\Sigma_j}_{M,\nu}.$$
In particular, ${\xi < \MsV{\Pi_1}_{M,\nu} - \MsV{\Sigma_1}_{M,\nu} + 1 =
                 \MsV{\Pi_1}_{M,\nu} + 1 \leqslant 1}$.

Define an hs-interpretation $M_1$ to be like $M$, but set
${|\SpV{p}_1|_{M_1} = \xi}$.
Since $\SpV{p}_1$ does not occur in $\mathcal{G}_{at}$, 
$\Gamma_i$, $\Delta_i$ (${i\in 1..k}$), $\Pi_j$, $\Sigma_j$ (${j\in 1..l}$), 
we see that no sequent in $\mathcal{H}_{at}$ is true 
under the hs-interpretation $M_1$ and $M_1$-valuation $\nu$.
Hence ${\nvDash \mathcal{H}_{at}}$, a contradiction. 

Therefore ${\vDash \mathcal{H}'_{at}}$, and so $\mathcal{H}'$ is an axiom.

1.2. Now consider the case where ${k = 0}$. Then
\,$\mathcal{H}_{at} = \Big( \mathcal{G}_{at} \,\big|\, 
    \big[ \Pi_j \Rightarrow \SpV{p}_1, \Sigma_j \big]_{j\in 1..l} \Big)$
and \,${\mathcal{H}'_{at} = \mathcal{G}_{at}}$.
Since $\SpV{p}_1$ does not occur in $\mathcal{G}_{at}$, $\Pi_j$, $\Sigma_j$
($j\in 1..l$), and 
hs-in\-ter\-pre\-ta\-ti\-ons can take $\SpV{p}_1$ to negative real numbers whose 
absolute values are arbitrarily large, 
we conclude that 
${\vDash \mathcal{H}_{at}}$ implies ${\vDash \mathcal{G}_{at}}$.
Thus ${\vDash \mathcal{H}'_{at}}$ and $\mathcal{H}'$ is an axiom.

2. Suppose that the root hypersequent $\mathcal{H}$ in $D$ is the conclusion of
an application $R$ of a rule $\mathcal{R}$, and 
$\mathcal{S}$ is the principal sequent occurrence in~$R$.

2.1. If $\mathcal{S}$ is in the distinguished occurrence of $\mathcal{G}$ 
in $\mathcal{H}$, then we apply the induction hypothesis to the proof of each 
premise of $R$, and next we get a proof of $\mathcal{H}'$ by $\mathcal{R}$.

2.2. Now suppose that $\mathcal{S}$ is not in the distinguished occurrence of 
$\mathcal{G}$ in $\mathcal{H}$, and for definiteness assume that
$\mathcal{S}$ is the distinguished occurrence of 
\,${\Gamma_1, \SpV{p}_1 \Rightarrow \Delta_1}$ in $\mathcal{H}$.

2.2.1. If $\mathcal{R}$ is the rule ${(\to\:\Rightarrow)^0}$, then
${\Gamma_1 = (\Gamma_1',\, A \to B)}$ for some $\Gamma_1'$,
and the proof $D$ has the form:
\begin{center}
\def\ScoreOverhang{0pt}
\AxiomC{$D_1$} \noLine 
\UnaryInfC{$\mathcal{H} \,|\, \Gamma_1', \SpV{p}_1 \Rightarrow \Delta_1 \,|\, \Gamma_1', B, \SpV{p}_1 \Rightarrow A, \Delta_1$}
\RightLabel{${(\to\:\Rightarrow)^0}$.} 
\UnaryInfC{$\mathcal{H}$}
\DisplayProof
\end{center}
By the induction hypothesis, we transform $D_1$ into a proof of
$$\mathcal{H}' \,\big|\, \big[ \Gamma_1', \Pi_j \Rightarrow \Delta_1, \Sigma_j \big]_{j\in 1..n}
  \,\big|\, \big[ \Gamma_1', B, \Pi_j \Rightarrow A, \Delta_1, \Sigma_j \big]_{j\in 1..n},$$
whence we obtain a proof for $\mathcal{H}'$
by $n$ applications of ${(\to\:\Rightarrow)^0}$.

2.2.2. The rules ${(\forall\Rightarrow)^0}$ and ${(\Rightarrow\exists)^0}$
are treated as $\mathcal{R}$ similarly to the rule ${(\to\:\Rightarrow)^0}$, 
see item~2.2.1.

2.2.3. If $\mathcal{R}$ is ${(\Rightarrow\:\to)^0}$, then
${\Delta_1 = (A \to B,\, \Delta_1')}$ for some $\Delta_1'$,
and the proof $D$ looks like this:
\begin{center}
\def\ScoreOverhang{0pt}
\AxiomC{$D_1$} \noLine 
\UnaryInfC{$\mathcal{H} \,|\, \Gamma_1, \SpV{p}_1 \Rightarrow \Delta_1'$;}
\AxiomC{$D_2$} \noLine 
\UnaryInfC{$\mathcal{H} \,|\, \Gamma_1, A, \SpV{p}_1 \Rightarrow B, \Delta_1'$}
\RightLabel{${(\Rightarrow\:\to)^0}$.}
\BinaryInfC{$\mathcal{H}$}
\DisplayProof
\end{center}
By the induction hypothesis applied to the proofs $D_1$ and $D_2$,
we construct proofs of
$$\mathcal{H}' \,\big|\, \big[ \Gamma_1, \Pi_j \Rightarrow \Delta_1', \Sigma_j \big]_{j\in 1..n}
  \quad \text{and} \quad
  \mathcal{H}' \,\big|\, \big[ \Gamma_1, A, \Pi_j \Rightarrow B, \Delta_1', \Sigma_j \big]_{j\in 1..n},$$
respectively; 
whence we get a proof of $\mathcal{H}'$ 
by Lemma \ref{Lem:GzeroAdmGenToRightRule} below.

2.2.4. If $\mathcal{R}$ is ${(\Rightarrow\forall)^0}$,
then ${\Delta_1 = (\forall x A,\, \Delta_1')}$ for some $\Delta_1'$,
and the proof $D$ has the form:
\begin{center}
\def\ScoreOverhang{0pt}
\AxiomC{$D_1$} \noLine 
\UnaryInfC{$\mathcal{H} \,|\, \Gamma_1, \SpV{p}_1 \Rightarrow \RePl{A}{x}{a}, \Delta_1'$}
\RightLabel{${(\Rightarrow\forall)^0}$,}
\UnaryInfC{$\mathcal{H}$}
\DisplayProof
\end{center}
where $a$ does not occur in $\mathcal{H}$ 
(and hence, $a$ does not occur in $\mathcal{H}'$).
Using the induction hypothesis, we transform $D_1$ into a proof of
$$\mathcal{H}' \,\big|\, \big[ \Gamma_1, \Pi_j \Rightarrow \RePl{A}{x}{a}, \Delta_1', \Sigma_j \big]_{j\in 1..n},$$
whence we obtain a proof of $\mathcal{H}'$ 
by Lemma~\ref{Lem:GzeroAdmGenForallRight}.

2.2.5. The rule ${(\exists\Rightarrow)^0}$ is treated 
similarly to the rule ${(\Rightarrow\forall)^0}$ in item 2.2.4,
using Lemma~\ref{Lem:GzeroAdmGenExistsLeft}.
\end{proof}

\begin{llemma}  \label{Lem:GzeroProofOfAxiomForDenOne}
Suppose that \,${\mathcal{H} = (\mathcal{G} \,|\, C \Rightarrow \SpV{p}_1)}$
is an axiom of $\GzeroLukA$.
Then a $\GzeroLukA$-proof of $\mathcal{H}$ can be constructed in which 
each leaf hypersequent $\mathcal{L}$ contains  
a sequent of the form \,${C_\mathcal{L} \Rightarrow \SpV{p}_1}$ or
\,${\Rightarrow \SpV{p}_1}$,
where $C_\mathcal{L}$ is an atomic $\RPLA$-formula.
\end{llemma}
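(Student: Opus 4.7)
The plan is induction on the number $k$ of logical symbol occurrences in $C$. The base case $k = 0$ is immediate: $C$ is atomic, so $\mathcal{H}$ itself already is a one-node proof, and its unique leaf $\mathcal{H}$ contains $C \Rightarrow \SpV{p}_1$ of the required form.

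For the inductive step I split on the outermost logical symbol of $C$. If $C = A \to B$, apply $(\to\:\Rightarrow)^0$ backward to $\mathcal{H}$ (with $\Gamma$ empty and $\Delta$ the singleton $\SpV{p}_1$); the premise
$$\mathcal{H}' = \mathcal{G} \,|\, A \to B \Rightarrow \SpV{p}_1 \,|\, \Rightarrow \SpV{p}_1 \,|\, B \Rightarrow A, \SpV{p}_1$$
already contains the sequent $\Rightarrow \SpV{p}_1$, which has the required form, so no further decomposition is needed along this branch --- it is enough to observe that $\mathcal{H}'$ is an axiom of $\GzeroLukA$. If $C = \forall x A$, apply $(\forall\Rightarrow)^0$ backward with any closed term $t$, obtaining the premise $\mathcal{H}' = \mathcal{G}' \,|\, \RePl{A}{x}{t} \Rightarrow \SpV{p}_1$, where $\mathcal{G}' = \mathcal{G} \,|\, C \Rightarrow \SpV{p}_1$; since $\RePl{A}{x}{t}$ contains $k-1$ logical symbol occurrences, the induction hypothesis (once $\mathcal{H}'$ is known to be an axiom) yields a proof of $\mathcal{H}'$ with leaves of the required form, and prepending the rule step gives the sought proof of $\mathcal{H}$. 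The case $C = \exists x A$ is handled analogously using $(\exists\Rightarrow)^0$ with a parameter $a$ not occurring in $\mathcal{H}$; such an $a$ exists because the signature has countably many parameters.

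The one routine verification common to all three inductive cases is that each premise $\mathcal{H}'$ built above is again an axiom of $\GzeroLukA$. This is essentially free from the cumulativity of the $\GzeroLukA$-rules: since $C$ is non-atomic in the inductive step, $C \Rightarrow \SpV{p}_1$ contributes nothing to $\mathcal{H}_{at}$ and hence $\mathcal{H}_{at} = \mathcal{G}_{at}$; and every sequent newly added by the backward step is either itself atomic (such as $\Rightarrow \SpV{p}_1$, or $\RePl{A}{x}{t} \Rightarrow \SpV{p}_1$ when $A$ is atomic) or else contains logical symbols and is therefore dropped when forming $\mathcal{H}'_{at}$. In either case $\mathcal{H}'_{at} \supseteq \mathcal{H}_{at}$, and adjoining sequents never destroys hypersequent validity, so $\vDash \mathcal{H}_{at}$ gives $\vDash \mathcal{H}'_{at}$.

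I do not foresee a genuine obstacle here: cumulativity makes the axiom-preservation check a one-liner, and the strict decrease of $k$ under the substitutions $A \mapsto \RePl{A}{x}{t}$ and $A \mapsto \RePl{A}{x}{a}$ guarantees termination of the induction. The only mild care needed is in the $\exists$-case, where the eigenvariable side condition forces us to pick $a$ fresh for $\mathcal{H}$; this is always possible thanks to the countably infinite supply of parameters in our fixed signature.
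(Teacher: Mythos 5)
Your proposal is correct and is essentially the paper's own argument: the paper observes that $C$ has the form $Q_1x_1\ldots Q_nx_nC'$ or $Q_1x_1\ldots Q_nx_n(A\to B)$ with $C'$ atomic and builds the same linear chain of $n$ backward quantifier-left applications followed, in the implication case, by one backward $(\to\:\Rightarrow)^0$ that introduces $\Rightarrow\SpV{p}_1$. You merely repackage this as an induction on the count of logical symbols and make explicit the (correct) cumulativity argument that each intermediate hypersequent remains an axiom, which the paper leaves implicit.
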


\begin{proof}[\textsc{Proof}]
The $\RPLA$-formula $C$ has the form 
$$Q_1 x_1 \ldots Q_n x_n C'  \quad \text{or} \quad  
  Q_1 x_1 \ldots Q_n x_n (A \to B),$$ 
where 
$Q_1,\ldots,Q_n$ are quantifiers and $C'$ is an atomic $\RPLA$-formula.
The desired proof can be obtained from $\mathcal{H}$ 
by $n$ backward applications of 
the rules ${(Q_1\Rightarrow)^0}$, \ldots, ${(Q_n\Rightarrow)^0}$
and if ${C = Q_1 x_1 \ldots Q_n x_n (A \to B)}$, by one more
backward application of the rule ${(\to\:\Rightarrow)^0}$.
\end{proof}

\begin{llemma}  \label{Lem:GzeroAdmGenToRightRule}
Suppose that ${n \geqslant 1}$,
$${\mathcal{H}_n' = \Big(\, \mathcal{G} \,\big|\, \big[ \Gamma_i \Rightarrow \Delta_i \big]_{i\in 1..n} \,\Big)}, 
  \quad  
  {\mathcal{H}_n'' = \Big(\, \mathcal{G} \,\big|\, \big[ \Gamma_i, A \Rightarrow B, \Delta_i \big]_{i\in 1..n} \,\Big)},$$
${\vdash_{\GzeroLukA} \mathcal{H}_n'}$,
\,${\vdash_{\GzeroLukA} \mathcal{H}_n''}$, and
${\big[ \Gamma_i \Rightarrow A \to B, \Delta_i \big]_{i\in 1..n} \!\subseteq \mathcal{G}}$.
Then \,${\vdash_{\GzeroLukA} \mathcal{G}}$.
\end{llemma}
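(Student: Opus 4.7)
The plan is induction on $n$. The base case $n=1$ is immediate: one application of $(\Rightarrow\:\to)^0$ with principal sequent $\Gamma_1\Rightarrow A\to B, \Delta_1$ (which lies in $\mathcal{G}$) converts the two premises $\mathcal{H}_1'=\mathcal{G}\,|\,\Gamma_1\Rightarrow\Delta_1$ and $\mathcal{H}_1''=\mathcal{G}\,|\,\Gamma_1, A\Rightarrow B, \Delta_1$ into the conclusion $\mathcal{G}$.

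For the inductive step $n\geq 2$, I would apply $(\Rightarrow\:\to)^0$ with principal sequent $\Gamma_n\Rightarrow A\to B, \Delta_n$, thereby reducing the goal $\mathcal{G}$ to proving the two premises $\mathcal{G}\,|\,\Gamma_n\Rightarrow\Delta_n$ and $\mathcal{G}\,|\,\Gamma_n, A\Rightarrow B, \Delta_n$. Each of these is an instance of the lemma for $n-1$, with the $n-1$ distinguished sequents $\Gamma_i\Rightarrow A\to B, \Delta_i$ ($i\in 1..n-1$) still present and with a new base hypersequent $\tilde{\mathcal{G}}:=\mathcal{G}\,|\,\Gamma_n\Rightarrow\Delta_n$ (respectively $\tilde{\mathcal{G}}':=\mathcal{G}\,|\,\Gamma_n, A\Rightarrow B, \Delta_n$). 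In each case one of the two required IH inputs is directly available: for $\tilde{\mathcal{G}}$ we have $\tilde{\mathcal{G}}\,|\,[\Gamma_i\Rightarrow\Delta_i]_{i\in 1..n-1}=\mathcal{H}_n'$, and symmetrically for $\tilde{\mathcal{G}}'$ we have $\mathcal{H}_n''$. After the IH yields $\tilde{\mathcal{G}}$ and $\tilde{\mathcal{G}}'$, the single application of $(\Rightarrow\:\to)^0$ above delivers $\mathcal{G}$.

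The main obstacle is furnishing the other IH input in each case: the "mixed" hypersequents $\tilde{\mathcal{G}}\,|\,[\Gamma_i, A\Rightarrow B, \Delta_i]_{i\in 1..n-1}$ and $\tilde{\mathcal{G}}'\,|\,[\Gamma_i\Rightarrow\Delta_i]_{i\in 1..n-1}$, which are not among our hypotheses. A naive attempt to derive them via further applications of $(\Rightarrow\:\to)^0$ collapses into circular dependencies, because one of the premises of any such application reduces, via the hp-admissible external contraction $(\text{ec})^0$ of Lemma \ref{Lem:StructRulesAdmGzero}, to the very hypersequent that one is trying to prove. I expect to resolve this by strengthening the inductive statement to cover every partially mixed hypersequent $\mathcal{G}\,|\,[\Gamma_i\Rightarrow\Delta_i]_{i\notin S}\,|\,[\Gamma_j, A\Rightarrow B, \Delta_j]_{j\in S}$ for arbitrary $S\subseteq 1..n$, and establishing the strengthened statement by a joint induction that carefully tracks, after each structural or logical step, which mixed hypersequents have been made derivable from the two extreme ones. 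The semantic backbone of the strengthening is an elementary interval argument comparing each quantity $\MsV{\Gamma_i}_{M,\nu}-\MsV{\Delta_i}_{M,\nu}$ with the value $|B|_{M,\nu}-|A|_{M,\nu}$: any hs-interpretation refuting a mixed hypersequent would refute one of the two extreme ones, contradicting their assumed provability. Once all $2^n$ mixed hypersequents are available, $\mathcal{G}$ is obtained by $n$ nested applications of $(\Rightarrow\:\to)^0$ arranged in a binary-tree pattern whose leaves are precisely these mixed hypersequents.
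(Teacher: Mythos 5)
You have the right skeleton --- induction on $n$, the base case by a single application of $(\Rightarrow\:\to)^0$ whose principal sequent lies in $\mathcal{G}$, and a correct diagnosis that the entire difficulty is furnishing the ``mixed'' hypersequents in which some components carry $\Gamma_i \Rightarrow \Delta_i$ and others carry $\Gamma_i, A \Rightarrow B, \Delta_i$. But your proposed resolution has a genuine gap at exactly the point you flag as the main obstacle. The ``semantic backbone'' you describe (an interval argument showing that any hs-interpretation refuting a mixed hypersequent would refute one of the two extremes) can at best establish that the mixed hypersequents are \emph{valid}, whereas the induction needs them to be \emph{provable} in $\GzeroLukA$. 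No completeness theorem for $\GzeroLukA$ is available to bridge that gap (none is stated in the paper, and first-order {\L}ukasiewicz logic admits no sound, complete, recursively axiomatized calculus), so ``valid, hence provable'' is not a legitimate step and the strengthened induction you sketch cannot be carried out as described.

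The paper closes this gap purely syntactically, via Lemma \ref{Lem:GzeroExchange}: from $\vdash_{\GzeroLukA} \mathcal{H}_n'$ and $\vdash_{\GzeroLukA} \mathcal{H}_n''$ one derives the single mixed hypersequent $\mathcal{G} \,\big|\, \big[ \Gamma_i \Rightarrow \Delta_i \big]_{i\in 1..(n-1)} \,\big|\, \Gamma_n, A \Rightarrow B, \Delta_n$ using only the admissible structural rules of Lemma \ref{Lem:StructRulesAdmGzero}. The engine is the $\textnormal{(mix)}^0$-then-$\textnormal{(split)}^0$ trick: merge a $\Gamma_n \Rightarrow \Delta_n$ component coming from one given proof with a $\Gamma_k, A \Rightarrow B, \Delta_k$ component coming from the other into one sequent, then re-split it the opposite way and contract with $\textnormal{(ec)}^0$. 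With that one mixed hypersequent in hand, a single application of $(\Rightarrow\:\to)^0$ yields $\mathcal{H}_{n-1}'$ (and symmetrically $\mathcal{H}_{n-1}''$), so the induction proceeds with only one exchange per step; there is no need for all $2^n$ mixed hypersequents or a binary tree of rule applications. To repair your argument, replace the semantic justification by this syntactic exchange.
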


\begin{proof}[\textsc{Proof}]
We proceed by induction on $n$. 
The case ${n=1}$ is trivial. 

Now suppose that ${n \geqslant 2}$.\,
By Lemma \ref{Lem:GzeroExchange} below,
from \,${\vdash_{\GzeroLukA} \mathcal{H}_n'}$ and
\,${\vdash_{\GzeroLukA} \mathcal{H}_n''}$\, it follows that the hypersequent
$${\mathcal{H}_n = \Big(\, \mathcal{G} \,\big|\, \big[ \Gamma_i \Rightarrow \Delta_i \big]_{i\in 1..(n-1)} \,\big|\, \Gamma_n,A \Rightarrow B,\Delta_n \,\Big)}$$
is $\GzeroLukA$-provable.
Applying the rule ${(\Rightarrow\:\to)^0}$ 
to $\mathcal{H}_n'$ and $\mathcal{H}_n$ gives 
$${\mathcal{H}_{n-1}' = \Big(\, \mathcal{G} \,\big|\, \big[ \Gamma_i \Rightarrow \Delta_i \big]_{i\in 1..(n-1)} \,\Big)}.$$
Likewise we obtain the $\GzeroLukA$-provable hypersequent
$${\mathcal{H}_{n-1}'' = \Big(\, \mathcal{G} \,\big|\, \big[ \Gamma_i,A \Rightarrow B,\Delta_i \big]_{i\in 1..(n-1)} \,\Big)}.$$
Finally, by applying the induction hypothesis to $\mathcal{H}_{n-1}'$ and
$\mathcal{H}_{n-1}''$, we get \,${\vdash_{\GzeroLukA} \mathcal{G}}$.
\end{proof}

\begin{llemma}  \label{Lem:GzeroExchange}
Suppose that ${n \geqslant 2}$, 
$$\mathcal{H}' \!=\! \Big( \mathcal{G} \,\big|\, \big[ \Gamma_i, \Pi' \Rightarrow \Sigma', \Delta_i \big]_{i\in 1..n} \Big),
  \;\:
  \mathcal{H}'' \!=\! \Big( \mathcal{G} \,\big|\, \big[ \Gamma_i, \Pi'' \Rightarrow \Sigma'', \Delta_i \big]_{i\in 1..n} \Big),$$
${\vdash_{\GzeroLukA} \mathcal{H}'}$, and
\,${\vdash_{\GzeroLukA} \mathcal{H}''}$.
Then 
$${\vdash_{\GzeroLukA} \Big(\, \mathcal{G} \,\big|\, \big[ \Gamma_i, \Pi' \Rightarrow \Sigma', \Delta_i \big]_{i\in 1..(n-1)} \,\big|\, \Gamma_n, \Pi'' \Rightarrow \Sigma'', \Delta_n \,\Big)}.$$
\end{llemma}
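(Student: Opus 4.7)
The plan is to combine $\mathcal{H}'$ and $\mathcal{H}''$ using the hp-admissible rules $\textnormal{(split)}^0$ and $\textnormal{(ec)}^0$ together with the admissible rule $\textnormal{(mix)}^0$ from Lemma \ref{Lem:StructRulesAdmGzero}. First, I apply $\textnormal{(split)}^0$ to the $n$-th sequent of $\mathcal{H}'$, obtaining
\begin{gather*}
\mathcal{H}^{(1)} = \mathcal{G} \,\big|\, \big[ \Gamma_i, \Pi' \Rightarrow \Sigma', \Delta_i \big]_{i\in 1..(n-1)} \,\big|\, \Gamma_n \Rightarrow \Delta_n \,\big|\, \Pi' \Rightarrow \Sigma'.
\end{gather*}
In parallel, I apply $\textnormal{(split)}^0$ to each of the first $n-1$ sequents of $\mathcal{H}''$ and then $\textnormal{(ec)}^0$ repeatedly to contract the resulting $n-1$ copies of $\Pi'' \Rightarrow \Sigma''$, obtaining
\begin{gather*}
\mathcal{H}^{(2)} = \mathcal{G} \,\big|\, \big[ \Gamma_i \Rightarrow \Delta_i \big]_{i\in 1..(n-1)} \,\big|\, \Pi'' \Rightarrow \Sigma'' \,\big|\, \Gamma_n, \Pi'' \Rightarrow \Sigma'', \Delta_n.
\end{gather*}
From here, the aim is to fuse the isolated $\Gamma_n \Rightarrow \Delta_n$ from $\mathcal{H}^{(1)}$ with the isolated $\Pi'' \Rightarrow \Sigma''$ from $\mathcal{H}^{(2)}$ via $\textnormal{(mix)}^0$, producing $\Gamma_n, \Pi'' \Rightarrow \Sigma'', \Delta_n$ at the $n$-th position of the target hypersequent.

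The main obstacle is that $\textnormal{(mix)}^0$ requires its two premises to share a common hypersequent context, whereas $\mathcal{H}^{(1)}$ and $\mathcal{H}^{(2)}$ differ both in their first $n-1$ positions (carrying $\Gamma_i, \Pi' \Rightarrow \Sigma', \Delta_i$ versus $\Gamma_i \Rightarrow \Delta_i$) and in their extra isolated sequents ($\Pi' \Rightarrow \Sigma'$ versus $\Pi'' \Rightarrow \Sigma''$). These discrepancies cannot be reconciled by the admissible structural rules of $\GzeroLukA$ alone, since a right-weakening rule is unsound in this calculus and so $\Sigma'$ cannot simply be added on the right of each $\Gamma_i \Rightarrow \Delta_i$ in $\mathcal{H}^{(2)}$.

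To circumvent this, rather than treating $\textnormal{(mix)}^0$ as a black box I would mirror the strategy used in its admissibility proof (obtained, according to Lemma \ref{Lem:StructRulesAdmGzero}, from the proof of Lemma 8 in \cite{Ger2017}). Concretely, I would carry out a joint structural induction on the heights of the $\GzeroLukA$-proofs of $\mathcal{H}'$ and $\mathcal{H}''$, tracking the ancestors of $\Gamma_n, \Pi' \Rightarrow \Sigma', \Delta_n$ in the former and of the sequents $\Gamma_i, \Pi'' \Rightarrow \Sigma'', \Delta_i$ ($i \in 1..(n-1)$) in the latter, and coordinating the transformation of both proofs in tandem. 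The base case (both proofs end at axioms) follows from the density of $\mathbb{R}$: a semantic argument shows that the atomic part of the target hypersequent is valid whenever those of $\mathcal{H}'$ and $\mathcal{H}''$ are. The inductive steps in which the last rule of one proof acts on a non-tracked sequent are straightforward recursions; I anticipate the principal difficulty to arise in the cases where the last rule acts on a tracked sequent occurrence, since these require a careful case-by-case coordination of the two proofs, analogous to the analysis of \cite{Ger2017}.
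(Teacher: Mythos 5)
Your proposal does not go through as written. The first half stalls at exactly the point where the paper's proof succeeds, and the diagnosis you give there is wrong: you claim the mismatch between $\mathcal{H}^{(1)}$ and $\mathcal{H}^{(2)}$ ``cannot be reconciled by the admissible structural rules of $\GzeroLukA$ alone.'' In fact the paper's entire proof of this lemma uses nothing but the admissible rules $\textnormal{(ew)}^0$, $\textnormal{(ec)}^0$, $\textnormal{(split)}^0$, and $\textnormal{(mix)}^0$ as black boxes. The idea you are missing is not right-weakening inside a sequent but a \emph{mix-then-resplit exchange}: pad $\mathcal{H}'$ with $\textnormal{(ew)}^0$ so that its component set matches that of an intermediate hypersequent $\mathcal{H}_k$ up to one sequent on each side, then apply $\textnormal{(mix)}^0$ to the two leftover \emph{whole} sequents $\Gamma_n,\Pi'\Rightarrow\Sigma',\Delta_n$ and $\Gamma_k,\Pi''\Rightarrow\Sigma'',\Delta_k$, and re-split the resulting sequent $\Gamma_n,\Pi',\Gamma_k,\Pi''\Rightarrow\Sigma',\Delta_n,\Sigma'',\Delta_k$ along the \emph{other} partition, as $\Gamma_k,\Pi'\Rightarrow\Sigma',\Delta_k$ and $\Gamma_n,\Pi''\Rightarrow\Sigma'',\Delta_n$. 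This swaps $\Pi',\Sigma'$ with $\Pi'',\Sigma''$ between positions $k$ and $n$; the duplicates created by $\textnormal{(ew)}^0$ are then removed by $\textnormal{(ec)}^0$. Iterating this for $k=1,\ldots,n-1$, starting from $\mathcal{H}_1$ (obtained from $\mathcal{H}''$ by $\textnormal{(ew)}^0$), converts all of the first $n-1$ components back to the $\Pi',\Sigma'$ form while leaving the $n$-th in the $\Pi'',\Sigma''$ form, which is the target.

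Your fallback plan --- a joint induction on the heights of the two proofs, tracking ancestors in both simultaneously --- is not a proof but a proposal to build one, and you defer precisely the cases you identify as the principal difficulty (the last rule acting on a tracked sequent occurrence). Coordinating ancestor-tracking across two proof trees at once is at least as delicate as the admissibility proof of $\textnormal{(mix)}^0$ itself (a full page of tree surgery in the appendix), and nothing in your sketch indicates how the two recursions would be interleaved or why the logical-rule cases close. Since the lemma is needed as a black box inside the density-admissibility induction, leaving its hardest cases as an anticipation is a genuine gap; the structural-rule derivation above closes it in a few lines.
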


\begin{proof}[\textsc{Proof}]
For each ${k \in 1..n}$, we put
$${\mathcal{H}_k = \Big(\, \mathcal{G} \,\big|\, \big[ \Gamma_i, \Pi' \Rightarrow \Sigma', \Delta_i \big]_{i\in 1..(n-1)} \,\big|\, \big[ \Gamma_i, \Pi'' \Rightarrow \Sigma'', \Delta_i \big]_{i\in k..n} \,\Big)}.$$
We can get $\mathcal{H}_1$ from $\mathcal{H}''$ 
by the rule $\textnormal{(ew)}^0$.
For each ${k \in 1..(n-1)}$,
Figure \ref{fig:GenRulesAdmGzeroObtainHkPlusOne} shows how to obtain
$\mathcal{H}_{k+1}$ from $\mathcal{H}'$ and $\mathcal{H}_k$ 
using the rules $\textnormal{(ew)}^0$, $\textnormal{(ec)}^0$, 
$\textnormal{(split)}^0$, and $\textnormal{(mix)}^0$.
\begin{figure}[!tb]
\begin{center} \small
\def\ScoreOverhang{0pt}
\AxiomC{$\mathcal{H}'$} 
\LeftLabel{$\genfrac{}{}{0pt}{}{\displaystyle \textnormal{(ew)}^0 \!\times}{\displaystyle (n-k)}$}
\UnaryInfC{$\mathcal{G} \,\big|\, \big[ \Gamma_i, \Pi' \Rightarrow \Sigma', \Delta_i \big]_{i\in 1..n} \,\big|\, \big[ \Gamma_i, \Pi'' \Rightarrow \Sigma'', \Delta_i \big]_{i\in (k+1)..n}$;} 
\AxiomC{$\mathcal{H}_k$} 
\RightLabel{$\textnormal{(mix)}^0$}
\BinaryInfC{$\genfrac{}{}{0pt}{}{\displaystyle 
    \mathcal{G} \,\big|\, \big[ \Gamma_i, \Pi' \Rightarrow \Sigma', \Delta_i \big]_{i\in 1..(n-1)} \,\big|\, \big[ \Gamma_i, \Pi'' \Rightarrow \Sigma'', \Delta_i \big]_{i\in (k+1)..n}
  }{\displaystyle 
    \big|\, \Gamma_n,\Pi', \Gamma_k,\Pi'' \Rightarrow \Sigma',\Delta_n, \Sigma'',\Delta_k 
  }$}
\RightLabel{$\textnormal{(split)}^0$}
\UnaryInfC{$\genfrac{}{}{0pt}{}{\displaystyle 
    \mathcal{G} \,\big|\, \big[ \Gamma_i , \Pi' \Rightarrow \Sigma', \Delta_i \big]_{i\in 1..(n-1)} \,\big|\, \big[ \Gamma_i, \Pi'' \Rightarrow \Sigma'', \Delta_i \big]_{i\in (k+1)..n}
  }{\displaystyle 
    \big|\, \Gamma_k, \Pi' \Rightarrow \Sigma', \Delta_k \,\big|\, \Gamma_n, \Pi'' \Rightarrow \Sigma'', \Delta_n 
  }$}
\RightLabel{$\textnormal{(ec)}^0 \!\!\times\! 2$}
\UnaryInfC{$\mathcal{G} \,\big|\, \big[ \Gamma_i, \Pi' \Rightarrow \Sigma', \Delta_i \big]_{i\in 1..(n-1)} \,\big|\, \big[ \Gamma_i, \Pi'' \Rightarrow \Sigma'', \Delta_i \big]_{i\in (k+1)..n}$} 
\DisplayProof
\caption{Obtaining the bottom hypersequent~$\mathcal{H}_{k+1}$.} 
\label{fig:GenRulesAdmGzeroObtainHkPlusOne}
\end{center}
\end{figure}
Recall that these four rules are admissible for $\GzeroLukA$
by Lemma \ref{Lem:StructRulesAdmGzero}.
So \,${\vdash_{\GzeroLukA} \mathcal{H}_n}$ as required.
\end{proof}

\begin{llemma}  \label{Lem:GzeroAdmGenForallRight}
Suppose that ${n \geqslant 1}$,
$${\vdash_{\GzeroLukA} \Big(\, \mathcal{G} \,\big|\, \big[ \Gamma_i \Rightarrow \RePl{A}{x}{a}, \Delta_i \big]_{i\in 1..n} \,\Big)},$$
${\big[ \Gamma_i \Rightarrow \forall x A, \Delta_i \big]_{i\in 1..n} \subseteq \mathcal{G}}$,\,
and the parameter $a$ does not occur in $\mathcal{G}$.
Then \,${\vdash_{\GzeroLukA} \mathcal{G}}$.
\end{llemma}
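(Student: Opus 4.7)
The plan is to proceed by induction on $n$, in close analogy with the proof of Lemma~\ref{Lem:GzeroAdmGenToRightRule}. For the base case $n=1$, a single application of $(\Rightarrow\forall)^0$ with principal sequent $\Gamma_1 \Rightarrow \forall x A, \Delta_1$ (which lies in $\mathcal{G}$) and proper parameter $a$ will turn the given provable hypersequent $\mathcal{G} \,|\, \Gamma_1 \Rightarrow \RePl{A}{x}{a}, \Delta_1$ into $\mathcal{G}$; the side condition $a \notin \mathcal{G}$ is exactly one of our hypotheses.

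The inductive step $n \geq 2$ is where the main difficulty lies. The obstacle is that $(\Rightarrow\forall)^0$ forbids its proper parameter from occurring in the conclusion, yet $a$ is shared across all $n$ copies of $\RePl{A}{x}{a}$, so no single application of that rule can strip off one sequent in isolation. To overcome this, I will first decouple the $n$-th copy from the rest by a fresh-parameter rename. Concretely, I pick a parameter $b$ not occurring in the given proof or in $\mathcal{G}$, and apply hp-admissible parameter substitution $a \mapsto b$ to the given proof (observe that $a$ cannot be a proper parameter of any rule in that proof, since it appears in the root); because $a \notin \mathcal{G}$, this yields a proof of $\mathcal{G} \,|\, [\Gamma_i \Rightarrow \RePl{A}{x}{b}, \Delta_i]_{i\in 1..n}$ with $\mathcal{G}$ untouched.

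Next, I invoke Lemma~\ref{Lem:GzeroExchange} with $\Pi' = \Pi'' = \varnothing$, $\Sigma' = \{\RePl{A}{x}{a}\}$, and $\Sigma'' = \{\RePl{A}{x}{b}\}$, feeding it the original and the renamed proofs to obtain
$$\vdash_{\GzeroLukA} \Big(\, \mathcal{G} \,\big|\, \big[\Gamma_i \Rightarrow \RePl{A}{x}{a}, \Delta_i\big]_{i\in 1..(n-1)} \,\big|\, \Gamma_n \Rightarrow \RePl{A}{x}{b}, \Delta_n \,\Big).$$
A single application of $(\Rightarrow\forall)^0$ with principal sequent $\Gamma_n \Rightarrow \forall x A, \Delta_n \in \mathcal{G}$ and proper parameter $b$ now removes the last sequent; the freshness of $b$ makes the side condition automatic. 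What remains is $\mathcal{G} \,|\, [\Gamma_i \Rightarrow \RePl{A}{x}{a}, \Delta_i]_{i\in 1..(n-1)}$, and the induction hypothesis at $n-1$ applies directly, since both $a \notin \mathcal{G}$ and $[\Gamma_i \Rightarrow \forall x A, \Delta_i]_{i\in 1..(n-1)} \subseteq \mathcal{G}$ are inherited from our hypotheses; this produces a proof of~$\mathcal{G}$.

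The pivotal step, and the expected main obstacle, is the decoupling: neither a direct application of $(\Rightarrow\forall)^0$ nor a mere uniform renaming $a \mapsto b$ in the given proof is enough, as either leaves the rule's freshness side condition violated by the remaining $\RePl$-sequents. The fresh rename combined with Lemma~\ref{Lem:GzeroExchange} is precisely what produces a hypersequent in which one and only one $\RePl$-sequent uses $b$, enabling a single clean $(\Rightarrow\forall)^0$ and hence the descent in~$n$.
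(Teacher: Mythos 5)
Your proof is correct and follows essentially the same route as the paper's: both decouple the shared parameter by a fresh renaming, combine the original and renamed proofs via Lemma~\ref{Lem:GzeroExchange}, and descend by induction on $n$. The only (cosmetic) difference is that you apply $(\Rightarrow\forall)^0$ once per induction step, whereas the paper first reduces to a version with $n$ distinct fresh parameters and performs all $n$ applications of $(\Rightarrow\forall)^0$ at the end.
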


\begin{proof}[\textsc{Proof}]
We can obtain $\mathcal{G}$ from
\,${\mathcal{G} \,\big|\, \big[ \Gamma_i \Rightarrow \RePl{A}{x}{a_i}, \Delta_i \big]_{i\in 1..n}}$
by $n$ applications of the rule ${(\Rightarrow\forall)^0}$,
provided the parameters $a_1,\ldots,a_n$ are distinct and 
none of them occurs in $\mathcal{G}$.

Therefore, it suffices to prove the following claim for every ${n \geqslant 1}$:
\textsl{suppose that
\,${\mathcal{H}(a) = \Big(\, \mathcal{G}_0 \,\big|\, \big[ \Gamma_i \Rightarrow \RePl{A}{x}{a}, \Delta_i \big]_{i\in 1..n} \,\Big)}$,
\,${\vdash_{\GzeroLukA} \mathcal{H}(a)}$, and
the parameters $a,a_1,\ldots,a_n$ are distinct and 
none of them occurs in
\,$\mathcal{G}_0$, $A$, $\Gamma_i$, $\Delta_i$ \textnormal{(}${i\in 1..n}$\textnormal{)};\,
then
\,${\vdash_{\GzeroLukA} \Big(\, \mathcal{G}_0 \,\big|\, \big[ \Gamma_i \Rightarrow \RePl{A}{x}{a_i}, \Delta_i \big]_{i\in 1..n} \,\Big)}$}.

We use induction on~$n$.
In the case ${n=1}$, the claim is obvious.

Suppose that ${n \geqslant 2}$.
Clearly, \,${\vdash_{\GzeroLukA} \mathcal{H}(a)}$ implies
\,${\vdash_{\GzeroLukA} \mathcal{H}(a_n)}$.
By Lemma \ref{Lem:GzeroExchange}, from \,${\vdash_{\GzeroLukA} \mathcal{H}(a)}$
and \,${\vdash_{\GzeroLukA} \mathcal{H}(a_n)}$ it follows that
$$\vdash_{\GzeroLukA} \Big(\, \mathcal{G}_0 \,\big|\, \big[ \Gamma_i \Rightarrow \RePl{A}{x}{a}, \Delta_i \big]_{i\in 1..(n-1)} \,\big|\, \Gamma_n \Rightarrow \RePl{A}{x}{a_n}, \Delta_n \,\Big),$$
whence by the induction hypothesis, we get what is required.
\end{proof}

\begin{llemma}  \label{Lem:GzeroAdmGenExistsLeft}
Suppose that ${n \geqslant 1}$,
$${\vdash_{\GzeroLukA} \Big(\, \mathcal{G} \,\big|\, \big[ \Gamma_i, \RePl{A}{x}{a} \Rightarrow \Delta_i \big]_{i\in 1..n} \,\Big)},$$
${\big[ \Gamma_i, \exists x A \Rightarrow \Delta_i \big]_{i\in 1..n} \subseteq \mathcal{G}}$,\,
and the parameter $a$ does not occur in $\mathcal{G}$.
Then \,${\vdash_{\GzeroLukA} \mathcal{G}}$.
\end{llemma}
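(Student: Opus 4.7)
The plan is to mirror the proof of Lemma \ref{Lem:GzeroAdmGenForallRight} essentially verbatim, substituting the role of $(\Rightarrow\forall)^0$ with that of $(\exists\Rightarrow)^0$. Since both rules have the same shape (one-premise, with a proper parameter $a$ that must not occur in the conclusion, and a cumulative form) and the auxiliary Lemma \ref{Lem:GzeroExchange} is symmetric in the roles of antecedent and succedent multisets, no new idea is required.

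Concretely, I would first observe that $\mathcal{G}$ can be obtained from the hypersequent
$$\mathcal{G} \,\big|\, \big[ \Gamma_i, \RePl{A}{x}{a_i} \Rightarrow \Delta_i \big]_{i\in 1..n}$$
by $n$ successive applications of ${(\exists\Rightarrow)^0}$, provided $a_1,\ldots,a_n$ are pairwise distinct parameters none of which occurs in $\mathcal{G}$. So it suffices to prove, for every ${n \geqslant 1}$, the following renaming claim: \emph{if \,${\mathcal{H}(a) = \big(\, \mathcal{G}_0 \,|\, [ \Gamma_i, \RePl{A}{x}{a} \Rightarrow \Delta_i ]_{i\in 1..n} \,\big)}$, \,${\vdash_{\GzeroLukA} \mathcal{H}(a)}$, and $a,a_1,\ldots,a_n$ are pairwise distinct parameters none of which occurs in $\mathcal{G}_0$, $A$, $\Gamma_i$, or $\Delta_i$ (${i\in 1..n}$), then \,${\vdash_{\GzeroLukA} \big(\, \mathcal{G}_0 \,|\, [ \Gamma_i, \RePl{A}{x}{a_i} \Rightarrow \Delta_i ]_{i\in 1..n} \,\big)}$.}

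I would then prove the claim by induction on $n$. The base case ${n = 1}$ is immediate by simply renaming the proper parameter $a$ to $a_1$ throughout the given proof (possible since $a_1$ does not clash with anything in $\mathcal{G}_0$, $A$, $\Gamma_1$, or $\Delta_1$). For the inductive step with ${n \geqslant 2}$, from \,${\vdash_{\GzeroLukA} \mathcal{H}(a)}$ parameter renaming also yields \,${\vdash_{\GzeroLukA} \mathcal{H}(a_n)}$; applying Lemma \ref{Lem:GzeroExchange} with ${\Pi' = \RePl{A}{x}{a}}$, ${\Pi'' = \RePl{A}{x}{a_n}}$, and ${\Sigma' = \Sigma'' = \varnothing}$ gives
$$\vdash_{\GzeroLukA} \Big(\, \mathcal{G}_0 \,\big|\, \big[ \Gamma_i, \RePl{A}{x}{a} \Rightarrow \Delta_i \big]_{i\in 1..(n-1)} \,\big|\, \Gamma_n, \RePl{A}{x}{a_n} \Rightarrow \Delta_n \,\Big).$$
Since $a_n$ no longer occurs in the first ${n-1}$ sequents and $a$ no longer occurs in the last one, the induction hypothesis applied to the first ${n-1}$ sequents (with $\mathcal{G}_0$ augmented by the last sequent) renames $a$ to $a_1,\ldots,a_{n-1}$ in them, yielding the desired hypersequent.

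There is no real obstacle: the combinatorial core of the argument is entirely contained in Lemma \ref{Lem:GzeroExchange}, and the symmetry between $(\Rightarrow\forall)^0$ and $(\exists\Rightarrow)^0$ makes the adaptation purely notational. The only thing to keep track of is the bookkeeping of the parameter freshness conditions as one peels off sequents one at a time, but this is ensured by the hypothesis that $a,a_1,\ldots,a_n$ are pairwise distinct and absent from $\mathcal{G}_0$, $A$, $\Gamma_i$, and $\Delta_i$.
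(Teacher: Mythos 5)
Your proposal is correct and is exactly what the paper intends: its proof of this lemma is literally the remark that it ``is similar to the proof of Lemma~\ref{Lem:GzeroAdmGenForallRight}'', and your adaptation --- reducing to a parameter-renaming claim proved by induction on $n$ via Lemma~\ref{Lem:GzeroExchange} with $\Pi'=\RePl{A}{x}{a}$, $\Pi''=\RePl{A}{x}{a_n}$, $\Sigma'=\Sigma''=\varnothing$, then closing with $n$ applications of $(\exists\Rightarrow)^0$ --- is precisely that argument. The freshness bookkeeping you note is handled correctly.
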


\begin{proof}[\textsc{Proof}\nopunct]
is similar to the proof of Lemma~\ref{Lem:GzeroAdmGenForallRight}.
\end{proof}

For a finite multiset $\Delta$, by $\#(\Delta)$ we denote the number of 
its elements, taking their multiplicities into account.

\begin{llemma}[admissibility of a generalization of $(\textnormal{den}_0)$ 
                for $\GzeroLukA$]
\label{Lem:DenZeroAdmGzero}
Suppose that ${m\geqslant 1}$, ${n\geqslant 1}$, 
\begin{gather*}
{\mathcal{H} = \Big(\, \mathcal{G} \,\big|\, \big[ \Gamma_i, \SpV{p}_0 \Rightarrow \Delta_i \big]_{i\in 1..m} \,\big|\, \big[ \Pi_j \Rightarrow \SpV{p}_0, \Sigma_j \big]_{j\in 1..n} \,\Big)},\\
{\mathcal{H}' = \Big(\, \mathcal{G} \,\big|\, \big[ \Gamma_i, \Pi_j \Rightarrow \Delta_i, \Sigma_j \big]^{i\in 1..m}_{j\in 1..n} \,\Big)},
\end{gather*}
$\SpV{p}_0$ does not occur in $\mathcal{H}'$,
the sequent \,${\SpV{p}_0 \Rightarrow C}$ occurs in $\mathcal{H}$,
and \,${\vdash_{\GzeroLukA} \mathcal{H}}$.
Then \,${\vdash_{\GzeroLukA} \mathcal{H}'}$.
\end{llemma}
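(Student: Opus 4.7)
The plan is to mirror the proof of Lemma \ref{Lem:DenOneAdmGzero}, dualizing the roles of type-1 and type-0 semipropositional variables. The key dualizations to keep in mind are that $\SpV{p}_0$ ranges over $[0, +\infty)$ (instead of $(-\infty, 1]$) and that the sequent $\SpV{p}_0 \Rightarrow C$ in $\mathcal{H}$ places $\SpV{p}_0$ on the left, dual to $C \Rightarrow \SpV{p}_1$ in Lemma \ref{Lem:DenOneAdmGzero}.

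First I would establish a dual of Lemma \ref{Lem:GzeroProofOfAxiomForDenOne}: if ${\mathcal{H}_0 = (\mathcal{G}_0 \,|\, \SpV{p}_0 \Rightarrow C)}$ is an axiom of $\GzeroLukA$, a proof of $\mathcal{H}_0$ can be constructed in which each leaf contains an atomic sequent of the form \,${\Gamma_\mathcal{L}, \SpV{p}_0 \Rightarrow \Delta_\mathcal{L}}$ with \,${\#(\Delta_\mathcal{L}) \leqslant 1}$. The construction strips outer quantifiers of $C$ via backward applications of the single-premise rules ${(\Rightarrow\forall)^0}$ and ${(\Rightarrow\exists)^0}$; if the residue is an implication ${A \to B}$, we apply ${(\Rightarrow\:\to)^0}$ backward, obtaining a first premise containing \,${\SpV{p}_0 \Rightarrow}$ (with \,${\#(\Delta) = 0}$) and a second premise containing \,${\SpV{p}_0, A \Rightarrow B}$ (with \,${\#(\Delta) = 1}$). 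In the second premise, if $A$ or $B$ is non-atomic, further backward applications of appropriate logical rules recursively break them down while preserving the invariant that the \emph{target} distinguished sequent has ${\#(\Delta) \leqslant 1}$, until it becomes atomic.

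Given such a proof $D$ of $\mathcal{H}$, I would then induct on the height of $D$, paralleling items 1--2 in the proof of Lemma \ref{Lem:DenOneAdmGzero}. In the axiom case, writing
$${\mathcal{H}_{at} = \Big(\, \mathcal{G}_{at} \,\big|\, \big[ \Gamma_i, \SpV{p}_0 \Rightarrow \Delta_i \big]_{i\in 1..k} \,\big|\, \big[ \Pi_j \Rightarrow \SpV{p}_0, \Sigma_j \big]_{j\in 1..l} \,\Big)}$$
with ${i = 1}$ chosen so that ${\#(\Delta_1) \leqslant 1}$, assume for contradiction ${\nvDash \mathcal{H}'_{at}}$ under some $M, \nu$. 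Then ${\MsV{\Delta_i}_{M,\nu} - \MsV{\Gamma_i}_{M,\nu} < \MsV{\Pi_j}_{M,\nu} - \MsV{\Sigma_j}_{M,\nu}}$ for all $i, j$, and by the density of $\mathbb{R}$ we pick $\xi$ with
$${\MsV{\Delta_i}_{M,\nu} - \MsV{\Gamma_i}_{M,\nu} + 1 < \xi < \MsV{\Pi_j}_{M,\nu} - \MsV{\Sigma_j}_{M,\nu} + 1}$$
for all $i, j$. Since $\MsV{\Delta_1}_{M,\nu} \geqslant -\#(\Delta_1)$ and $-\MsV{\Gamma_1}_{M,\nu} \geqslant 0$ (because truth values lie in $[0,1]$), the bound for ${i = 1}$ yields ${\xi > \MsV{\Delta_1}_{M,\nu} - \MsV{\Gamma_1}_{M,\nu} + 1 \geqslant 1 - \#(\Delta_1) \geqslant 0}$, so $\xi$ is a valid type-0 value; defining $M_0$ like $M$ with ${|\SpV{p}_0|_{M_0} = \xi}$ falsifies every sequent of $\mathcal{H}_{at}$ under ${M_0, \nu}$, contradicting ${\vDash \mathcal{H}_{at}}$. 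The case ${k = 0}$ is handled as in item 1.2, via the observation that $\SpV{p}_0$ can be assigned arbitrarily large values.

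For the inductive step, the case analysis of items 2.1--2.5 carries over essentially verbatim, treating each rule of $\GzeroLukA$ whose principal sequent is one of the $\Gamma_i, \SpV{p}_0 \Rightarrow \Delta_i$ by applying the induction hypothesis to its premises and reapplying the rule (possibly $n$ times), invoking Lemmas \ref{Lem:StructRulesAdmGzero}, \ref{Lem:GzeroAdmGenToRightRule}, \ref{Lem:GzeroAdmGenForallRight}, and \ref{Lem:GzeroAdmGenExistsLeft} as appropriate. The main technical obstacle is the dual of Lemma \ref{Lem:GzeroProofOfAxiomForDenOne}: unlike the single-premise $(\to\:\Rightarrow)^0$, the rule $(\Rightarrow\:\to)^0$ has two premises, so its second premise introduces subformulas that may demand further recursive breakdown, requiring careful induction on the complexity of $C$ to guarantee termination with atomic nice-form leaves.
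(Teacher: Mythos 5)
Your proposal matches the paper's proof essentially step for step: the same leaf-normalization lemma tracking a sequent $\Gamma_\mathcal{L}, \SpV{p}_0 \Rightarrow \Delta_\mathcal{L}$ with $\#(\Delta_\mathcal{L}) \leqslant 1$ (the paper's Lemma~\ref{Lem:GzeroProofOfAxiomForDenZero}, proved by induction on the logical complexity of the tracked sequent, exactly as you describe for the two-premise rule $(\Rightarrow\:\to)^0$), the same induction on proof height, the same density-of-$\mathbb{R}$ argument with the bound $\xi \geqslant 0$ extracted from the tracked sequent, and the same verbatim reuse of item~2 of Lemma~\ref{Lem:DenOneAdmGzero} for the inductive step. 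The one slip is in the degenerate subcase: by your normalization lemma $k = 0$ cannot occur, and the case that actually needs separate treatment is $l = 0$ (where $\mathcal{H}'_{at} = \mathcal{G}_{at}$) --- the ``arbitrarily large values of $\SpV{p}_0$'' argument you invoke is precisely the right one for $l = 0$, since large $\SpV{p}_0$ falsifies the sequents $\Gamma_i, \SpV{p}_0 \Rightarrow \Delta_i$ but would make the $\Pi_j \Rightarrow \SpV{p}_0, \Sigma_j$ true.
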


\begin{proof}[\textsc{Proof}]
Using Lemma \ref{Lem:GzeroProofOfAxiomForDenZero} below,
we find a ($\GzeroLukA$-)proof $D$ of $\mathcal{H}$ in which
each leaf hypersequent $\mathcal{L}$ contains an atomic sequent of the form
\,${\Gamma_\mathcal{L}, \SpV{p}_0 \Rightarrow \Delta_\mathcal{L}}$,
where \,${\#(\Delta_\mathcal{L}) \leqslant 1}$ and
no semipropositional variable occurs in $\Gamma_\mathcal{L}$ or 
$\Delta_\mathcal{L}$.
We show that \,${\vdash_{\GzeroLukA} \mathcal{H}'}$
by induction on the height of~$D$.

1. Suppose that $\mathcal{H}$ is an axiom; i.e., ${\vDash \mathcal{H}_{at}}$.
We can harmlessly assume that
$$\mathcal{H}_{at} = \Big(\, \mathcal{G}_{at} \,\big|\, 
  \big[ \Gamma_i, \SpV{p}_0 \Rightarrow \Delta_i \big]_{i\in 1..k} \,\big|\, 
  \big[ \Pi_j \Rightarrow \SpV{p}_0, \Sigma_j \big]_{j\in 1..l} \,\Big),$$
where \,${0 < k \leqslant m}$, \,${0 \leqslant l \leqslant n}$\, and
\,${\#(\Delta_1) \leqslant 1}$.
We put \,${\mathcal{H}'_{at} = (\mathcal{H}')_{at}}$.

1.1. Consider the case where ${l \neq 0}$. We have
$${\mathcal{H}'_{at} = \Big(\, \mathcal{G}_{at} \,\big|\, \big[ \Gamma_i, \Pi_j \Rightarrow \Delta_i, \Sigma_j \big]^{i\in 1..k}_{j\in 1..l} \,\Big)}.$$

Suppose for a contradiction that \,${\nvDash \mathcal{H}'_{at}}$; i.e., 
for some hs-in\-ter\-pre\-ta\-tion $M$ and some $M$-valuation $\nu$, 
there is no true sequent in $\mathcal{G}_{at}$, and
for all ${i\in 1..k}$ and ${j\in 1..l}$, 
$$\MsV{\Delta_i}_{M,\nu} - \MsV{\Gamma_i}_{M,\nu} < \MsV{\Pi_j}_{M,\nu} - \MsV{\Sigma_j}_{M,\nu}.$$
Hence, for some real number $\xi$ and for all ${i\in 1..k}$ and ${j\in 1..l}$,
$$\MsV{\Delta_i}_{M,\nu} - \MsV{\Gamma_i}_{M,\nu} < \xi-1 < \MsV{\Pi_j}_{M,\nu} - \MsV{\Sigma_j}_{M,\nu}.$$
In particular, ${\xi > \MsV{\Delta_1}_{M,\nu} - \MsV{\Gamma_1}_{M,\nu} + 1 \geqslant
                 \MsV{\Delta_1}_{M,\nu} + 1 \geqslant 0}$.

Define an hs-interpretation $M_0$ to be the same as $M$ except that
${|\SpV{p}_0|_{M_0} = \xi}$.
Since $\SpV{p}_0$ does not occur in $\mathcal{G}_{at}$, 
$\Gamma_i$, $\Delta_i$ (${i\in 1..k}$), $\Pi_j$, $\Sigma_j$ (${j\in 1..l}$), 
it follows that $\mathcal{H}_{at}$ has no true sequent
under the hs-in\-ter\-pre\-ta\-tion $M_0$ and $M_0$-valuation $\nu$.
So ${\nvDash \mathcal{H}_{at}}$, a contradiction.

Thus ${\vDash \mathcal{H}'_{at}}$ and $\mathcal{H}'$ is an axiom.

1.2. Now consider the case where ${l = 0}$. Then
\,$\mathcal{H}_{at} \!=\! \Big( \mathcal{G}_{at} \,\big|\, 
    \big[ \Gamma_i, \SpV{p}_0 \Rightarrow \Delta_i \big]_{i\in 1..k} \Big)$
and \,${\mathcal{H}'_{at} = \mathcal{G}_{at}}$.
Since $\SpV{p}_0$ does not occur in $\mathcal{G}_{at}$, $\Gamma_i$, $\Delta_i$ 
($i\in 1..k$), and $\SpV{p}_0$ can assume arbitrarily large values
under hs-in\-ter\-pre\-ta\-ti\-ons, we see that
${\vDash \mathcal{H}_{at}}$ implies ${\vDash \mathcal{G}_{at}}$.
So ${\vDash \mathcal{H}'_{at}}$ and $\mathcal{H}'$ is an axiom.

2. It remains to consider the case where the root hypersequent $\mathcal{H}$ 
in $D$ is the conclusion of a rule application.
But the argument for this case can be obtained from
item 2 of the proof of Lemma \ref{Lem:DenOneAdmGzero} 
by replacing $\SpV{p}_1$ with~$\SpV{p}_0$.
\end{proof}

\begin{llemma} \label{Lem:GzeroProofOfAxiomForDenZero}
Suppose that 
\,${\mathcal{H} = (\mathcal{G} \,|\, \Gamma, \SpV{p}_0 \Rightarrow \Delta)}$
is an axiom of $\GzeroLukA$, $\#(\Delta) \leqslant 1$, and
no semipropositional variable occurs in $\Gamma$ or $\Delta$.
Then a $\GzeroLukA$-proof of $\mathcal{H}$ can be constructed in which
each leaf hypersequent $\mathcal{L}$ contains an atomic sequent of the form
\,${\Gamma_\mathcal{L}, \SpV{p}_0 \Rightarrow \Delta_\mathcal{L}}$,
where \,${\#(\Delta_\mathcal{L}) \leqslant 1}$ and
no semipropositional variable occurs in $\Gamma_\mathcal{L}$ or 
$\Delta_\mathcal{L}$.
\end{llemma}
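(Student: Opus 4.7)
The plan is to induct on the total number $n$ of logical symbol occurrences in the distinguished sequent $\Gamma, \SpV{p}_0 \Rightarrow \Delta$, at each step applying a $\GzeroLukA$-rule backward to some non-atomic formula of that sequent and appealing to the inductive hypothesis on the resulting premises. The key structural fact I use throughout is that every rule of $\GzeroLukA$ is cumulative: the conclusion occurs as a sub-hypersequent of each premise, so a backward application preserves the axiom property that is assumed in the hypothesis of the lemma.

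For $n = 0$, the distinguished sequent is itself atomic and already satisfies the required conditions, so the one-node tree consisting of $\mathcal{H}$ is the desired proof. For $n \geqslant 1$, I would pick any non-atomic formula $F$ in $\Gamma \cup \Delta$ and apply backward the logical rule of $\GzeroLukA$ whose principal formula is $F$, using a fresh parameter when required by $(\Rightarrow\forall)^0$ or $(\exists\Rightarrow)^0$ and an arbitrary closed term when required by $(\forall\Rightarrow)^0$ or $(\Rightarrow\exists)^0$. If $F$ is a quantified formula in $\Gamma$ or in $\Delta$, or if $F = A\to B \in \Delta$ (where necessarily $\Delta = \{A\to B\}$), then each premise of the rule application contains a new sequent which is a direct descendant of the distinguished sequent and which still carries $\SpV{p}_0$ on the left, still has no semipropositional variables in its side-formulas, still contains at most one formula on the right, and has strictly fewer logical symbols. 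I would invoke the inductive hypothesis on each such premise with this direct descendant as the new distinguished sequent (all other sequents of the premise being absorbed into the new $\mathcal{G}$-part), and splice in the rule application to obtain the desired proof of $\mathcal{H}$.

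The delicate case, and the main obstacle, is $F = A\to B \in \Gamma$. The backward application of $(\to\:\Rightarrow)^0$ produces a single premise containing three sequents beyond the prior $\mathcal{G}$: the original $\Gamma_0, A\to B, \SpV{p}_0 \Rightarrow \Delta$, the ``simplified'' sequent $\Gamma_0, \SpV{p}_0 \Rightarrow \Delta$, and the ``swapped'' sequent $\Gamma_0, B, \SpV{p}_0 \Rightarrow A, \Delta$. When $\#(\Delta) = 1$ the swapped sequent has two formulas on the right, so it violates the invariant $\#(\Delta_\mathcal{L}) \leqslant 1$ and cannot serve as the new distinguished sequent. The way out is to choose instead the simplified sequent $\Gamma_0, \SpV{p}_0 \Rightarrow \Delta$ as the new distinguished sequent — it meets all of the required invariants and has one fewer logical symbol than the previous distinguished sequent — and to absorb both the original and swapped sequents into the new $\mathcal{G}$-part. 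The inductive hypothesis then supplies a proof of this premise whose leaves each contain an atomic sequent of the required form, completing the induction.
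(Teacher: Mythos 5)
Your proposal is correct and follows essentially the same route as the paper's proof: induction on the number of logical symbol occurrences in the distinguished sequent, backward applications of the cumulative rules (which preserve axiomhood), and --- exactly as in the paper --- resolution of the delicate $(\to\:\Rightarrow)^0$-on-the-left case by re-distinguishing the ``simplified'' sequent $\Gamma_0, \SpV{p}_0 \Rightarrow \Delta$ from the premise while absorbing the other sequents into $\mathcal{G}$. The only cosmetic difference is that the paper uniformly takes a fresh parameter as the proper term in the quantifier cases, whereas you allow an arbitrary closed term for $(\forall\Rightarrow)^0$ and $(\Rightarrow\exists)^0$; both choices work.
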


\begin{proof}[\textsc{Proof}\nopunct]
is by induction on the number of logical symbol occurrences in
the sequent ${S = (\Gamma, \SpV{p}_0 \Rightarrow \Delta)}$.
If $S$ is atomic, then $\mathcal{H}$ is the desired proof.
Otherwise, $S$ has one of the forms given in items 1--4 below. 

1. Suppose that ${S = (\Gamma', A\to B, \SpV{p}_0 \Rightarrow \Delta)}$.
By applying the rule ${(\to\:\Rightarrow)^0}$ backward to 
the distinguished occurrence of ${A\to B}$ in $\mathcal{H}$,
we get the axiom 
${\mathcal{H}_1 = (\mathcal{G} \,|\, S \,|\, \Gamma', \SpV{p}_0 \Rightarrow \Delta \,|\, \Gamma', B, \SpV{p}_0 \Rightarrow A, \Delta)}$.
By the induction hypothesis applied to $\mathcal{H}_1$ with
${S = (\Gamma', \SpV{p}_0 \Rightarrow \Delta)}$,
we obtain the desired proof of~$\mathcal{H}$.

2. Suppose that ${S = (\Gamma, \SpV{p}_0 \Rightarrow A\to B)}$.
Applying the rule ${(\Rightarrow\:\to)^0}$ backward to
the distinguished occurrence of ${A\to B}$ in $\mathcal{H}$
yields the axioms 
${(\mathcal{G} \,|\, S \,|\, \Gamma, \SpV{p}_0 \Rightarrow)}$ and
${(\mathcal{G} \,|\, S \,|\, \Gamma, A, \SpV{p}_0 \Rightarrow B)}$,
to each of which the induction hypothesis applies.

3. Suppose that ${S = (\Gamma, \SpV{p}_0 \Rightarrow Q x A)}$, 
where $Q$ is a quantifier.
We apply the rule ${(\Rightarrow\forall)^0}$ or
${(\Rightarrow\exists)^0}$ backward to 
the distinguished occurrence of $Q x A$ in $\mathcal{H}$ 
with a new parameter $a$ as the proper parameter or proper term, respectively. 
Thus we get the axiom 
${(\mathcal{G} \,|\, S \,|\, \Gamma, \SpV{p}_0 \Rightarrow \RePl{A}{x}{a})}$
and then use the induction hypothesis.

4. The case where ${S = (\Gamma', Q x A, \SpV{p}_0 \Rightarrow \Delta)}$, 
with $Q$ being a quantifier, is treated similarly to case~3.
\end{proof}

\begin{remark} \label{rem:DensityAdmGzero}
The proofs of Lemmas \ref{Lem:DenOneAdmGzero} and \ref{Lem:DenZeroAdmGzero}
can be easily combined to establish the admissibility of 
the nonstandard density rule (given in Remark \ref{rem:DensitySound} 
on p.~\pageref{rem:DensitySound}) for $\GzeroLukA$ (with the notion of
a hypersequent expanded as mentioned in Remark~\ref{rem:DensitySound}).\footnote{
 See also Section \ref{secApp:DenAdmGzero} (of the appendix)
 on p.~\pageref{secApp:DenAdmGzero}.
}
\end{remark}

\begin{ltheorem}[equivalence of $\GzeroLukA$, $\GonezLukA$, and $\GthreeLukA$]
\label{Th:GzeroGonezGthreeEquiv}
\ ${\vdash_{\GzeroLukA} \mathcal{H}}$ iff \:${\vdash_{\GonezLukA} \mathcal{H}}$ 
iff \:${\vdash_{\GthreeLukA} \mathcal{H}}$.
\end{ltheorem}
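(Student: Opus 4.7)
The plan is to close a cycle of implications. Two of the three directions have already been done: Theorem \ref{Th:GzeroImpGonez} gives ${\vdash_{\GzeroLukA} \mathcal{H}} \Rightarrow {\vdash_{\GonezLukA} \mathcal{H}}$, and Theorem \ref{Th:GonezImpGthree} gives ${\vdash_{\GonezLukA} \mathcal{H}} \Rightarrow {\vdash_{\GthreeLukA} \mathcal{H}}$. So it suffices to prove the remaining implication ${\vdash_{\GthreeLukA} \mathcal{H}} \Rightarrow {\vdash_{\GzeroLukA} \mathcal{H}}$.

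For this, I invoke Lemma \ref{Lem:IfDeniAdmGzero}: it reduces the task to showing that the rules $(\text{den}_1)$ and $(\text{den}_0)$ are admissible for $\GzeroLukA$. But these admissibilities are exactly the special cases \,${m = n = 1}$\, of Lemmas \ref{Lem:DenOneAdmGzero} and \ref{Lem:DenZeroAdmGzero}, respectively: in the former, the premise ${\mathcal{G} \,|\, \Gamma, \SpV{p}_1 \Rightarrow \Delta \,|\, C \Rightarrow \SpV{p}_1}$ of $(\text{den}_1)$ plays the role of $\mathcal{H}$ and contains the required sequent \,${C \Rightarrow \SpV{p}_1}$, while the conclusion ${\mathcal{G} \,|\, \Gamma, C \Rightarrow \Delta}$ plays the role of $\mathcal{H}'$; and dually for $(\text{den}_0)$ and Lemma \ref{Lem:DenZeroAdmGzero}. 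Thus Lemma \ref{Lem:IfDeniAdmGzero} yields ${\vdash_{\GthreeLukA} \mathcal{H}} \Rightarrow {\vdash_{\GzeroLukA} \mathcal{H}}$.

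Chaining the three implications gives all three biconditionals, completing the proof. There is essentially no obstacle at this stage: all the substantive work (density elimination via Lemmas \ref{Lem:DenOneAdmGzero} and \ref{Lem:DenZeroAdmGzero}, and the simulation of $\GthreeLukA$-rules in $\GzeroLukA$ via Lemma \ref{Lem:IfDeniAdmGzero}) has already been done, and the conversion in the other direction was handled in Theorems \ref{Th:GzeroImpGonez} and \ref{Th:GonezImpGthree}.
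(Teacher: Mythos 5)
Your proposal is correct and follows exactly the paper's route: the first two implications are Theorems \ref{Th:GzeroImpGonez} and \ref{Th:GonezImpGthree}, and the third is obtained from Lemma \ref{Lem:IfDeniAdmGzero} together with Lemmas \ref{Lem:DenOneAdmGzero} and \ref{Lem:DenZeroAdmGzero}. Your explicit check that the $m=n=1$ instances of those lemmas (with $\Pi_1=\{C\}$, $\Sigma_1=\varnothing$, resp.\ $\Gamma_1=\varnothing$, $\Delta_1=\{C\}$) are precisely the rules $(\text{den}_1)$ and $(\text{den}_0)$ is accurate, and merely spells out what the paper leaves implicit.
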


\begin{proof}[\textsc{Proof}]
\,${\vdash_{\GzeroLukA} \mathcal{H}}$ implies
\,${\vdash_{\GonezLukA} \mathcal{H}}$ by Theorem \ref{Th:GzeroImpGonez}.
If \,${\vdash_{\GonezLukA} \mathcal{H}}$, then
\,${\vdash_{\GthreeLukA} \mathcal{H}}$ by Theorem \ref{Th:GonezImpGthree}.
By Lemmas \ref{Lem:IfDeniAdmGzero}, \ref{Lem:DenOneAdmGzero}, and
\ref{Lem:DenZeroAdmGzero}, from \,${\vdash_{\GthreeLukA} \mathcal{H}}$
it follows that \,${\vdash_{\GzeroLukA} \mathcal{H}}$.
\end{proof}

\begin{ltheorem}  \label{Th:GzeroGonezGthreeConsExtGone}
For each ${i=0,\hat{1},3}$, the calculus $\GiLukA$ is
a conservative extension of \,$\GoneLukA$; i.e., 
for any $\RPLA^1$-hypersequent $\mathcal{H}$,
\,${\vdash_{\GiLukA} \mathcal{H}}$ iff \:${\vdash_{\GoneLukA} \mathcal{H}}$.
\end{ltheorem}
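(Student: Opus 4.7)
The plan is to derive this theorem as a short corollary of two earlier results: Theorem \ref{Th:GzeroGonezGthreeEquiv} (equivalence of $\GzeroLukA$, $\GonezLukA$, and $\GthreeLukA$ on arbitrary hypersequents) and Remark \ref{rem:GonezGone} (which identifies $\GonezLukA$- and $\GoneLukA$-proofs when the endhypersequent is an $\RPLA^1$-hypersequent). No new inductive argument should be needed.

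Let $\mathcal{H}$ be an $\RPLA^1$-hypersequent. First I would invoke Theorem \ref{Th:GzeroGonezGthreeEquiv} to fold the three ``extended'' calculi into a single one: for each $i \in \{0, \hat{1}, 3\}$,
\[
\vdash_{\GiLukA} \mathcal{H} \;\text{iff}\; \vdash_{\GonezLukA} \mathcal{H}.
\]
This step is legitimate because Theorem \ref{Th:GzeroGonezGthreeEquiv} is stated for arbitrary hypersequents, so in particular it applies to $\RPLA^1$-hypersequents. Next, because $\mathcal{H}$ contains no semipropositional variable of type $0$, Remark \ref{rem:GonezGone} yields
\[
\vdash_{\GonezLukA} \mathcal{H} \;\text{iff}\; \vdash_{\GoneLukA} \mathcal{H}.
\]
Chaining the two equivalences gives $\vdash_{\GiLukA} \mathcal{H}$ iff $\vdash_{\GoneLukA} \mathcal{H}$ for each $i \in \{0, \hat{1}, 3\}$, which is exactly the conservative-extension assertion (the ``extension'' direction being immediate, since any $\GoneLukA$-proof is, via Remark \ref{rem:GonezGone}, a $\GonezLukA$-proof, and hence convertible to $\GzeroLukA$- and $\GthreeLukA$-proofs by Theorem \ref{Th:GzeroGonezGthreeEquiv}).

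There is essentially no obstacle left at this stage: the substantive content — density admissibility in the form of Lemmas \ref{Lem:DenOneAdmGzero} and \ref{Lem:DenZeroAdmGzero} that powers Theorem \ref{Th:GzeroGonezGthreeEquiv}, and the syntactic observation that $\GonezLukA$-rules never introduce type-$0$ semipropositional variables that underlies Remark \ref{rem:GonezGone} — has already been discharged. The only thing to verify while writing out the proof is that the statement of Theorem \ref{Th:GzeroGonezGthreeEquiv} genuinely applies to $\mathcal{H}$ without any side condition (it does, as it is formulated for arbitrary hypersequents in the common language of $\GzeroLukA$, $\GonezLukA$, and $\GthreeLukA$), so the invocation is unconditional.
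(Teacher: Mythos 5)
Your proposal is correct and matches the paper's own proof exactly: the paper also derives this theorem immediately from Remark \ref{rem:GonezGone} together with Theorem \ref{Th:GzeroGonezGthreeEquiv}. Your additional spelled-out chaining of the two equivalences is just an expansion of the same one-line argument.
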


\begin{proof}[\textsc{Proof}]
Follows from Remark \ref{rem:GonezGone} on p.~\pageref{rem:GonezGone}
and Theorem \ref{Th:GzeroGonezGthreeEquiv}.
\end{proof}

\begin{ltheorem}  \label{Th:GzeroGoneGonezGthreeConsExtGLukA}
For each ${i=0,1,\hat{1},3}$, the calculus $\GiLukA$ is
a conservative extension of \,$\GLukA$; i.e., 
for any $\LukA$-hypersequent $\mathcal{H}$,
\,${\vdash_{\GiLukA} \mathcal{H}}$ iff \:${\vdash_{\GLukA} \mathcal{H}}$.
\end{ltheorem}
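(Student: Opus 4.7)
The plan is to piece together the already-established results via a short chain of equivalences. The one observation to make up front is that any $\LukA$-hypersequent $\mathcal{H}$ contains no semipropositional variables at all (neither of type~0 nor of type~1), so $\mathcal{H}$ is in particular an $\RPLA^1$-hypersequent, and it also lies in the language of $\GzeroLukA$, $\GonezLukA$, and $\GthreeLukA$.

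First I would handle $i=0$, which is literally the content of Theorem~\ref{Th:GzeroConsExtGLukA}. Then for $i\in\{\hat{1},3\}$, I would apply Theorem~\ref{Th:GzeroGonezGthreeEquiv} to get that $\vdash_{\GiLukA}\mathcal{H}$ iff $\vdash_{\GzeroLukA}\mathcal{H}$, and compose this with the $i=0$ case to conclude $\vdash_{\GiLukA}\mathcal{H}$ iff $\vdash_{\GLukA}\mathcal{H}$.

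Finally, the case $i=1$ is reduced to the case $i=\hat{1}$ via Remark~\ref{rem:GonezGone}: since $\mathcal{H}$ is an $\RPLA^1$-hypersequent, a $\GonezLukA$-proof of $\mathcal{H}$ is nothing but a $\GoneLukA$-proof, and conversely, so $\vdash_{\GoneLukA}\mathcal{H}$ iff $\vdash_{\GonezLukA}\mathcal{H}$, which by the preceding step is equivalent to $\vdash_{\GLukA}\mathcal{H}$.

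There is no real obstacle here; the substance has already been done in Theorems~\ref{Th:GzeroConsExtGLukA} and \ref{Th:GzeroGonezGthreeEquiv}, and the present statement is essentially a corollary that collects the four calculi $\GzeroLukA$, $\GoneLukA$, $\GonezLukA$, $\GthreeLukA$ under the single umbrella of conservative extensions of $\GLukA$ when restricted to $\LukA$-hypersequents.
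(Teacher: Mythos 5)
Your proposal is correct and follows essentially the same route as the paper: the paper derives the result immediately from Theorem~\ref{Th:GzeroConsExtGLukA} and Theorem~\ref{Th:GzeroGonezGthreeConsExtGone}, and the latter is itself just Remark~\ref{rem:GonezGone} combined with Theorem~\ref{Th:GzeroGonezGthreeEquiv}, which is exactly the chain you spell out. You merely inline the proof of Theorem~\ref{Th:GzeroGonezGthreeConsExtGone} instead of citing it.
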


\begin{proof}[\textsc{Proof}]
Immediate from Theorems \ref{Th:GzeroConsExtGLukA} and 
\ref{Th:GzeroGonezGthreeConsExtGone}.
\end{proof}

For $\RPLA$- and $\LukA$-formulas, the preceding three theorems yield
the next result.

\begin{lcorollary}  \label{Cor:GiConsExtForFormulas}
\mbox{ }\\
\textnormal{(a)} For each ${i=0,1,\hat{1}}$ and any $\RPLA$-formula $A$,
\,${\vdash_{\GiLukA} A}$ iff \:${\vdash_{\GthreeLukA} A}$.\\
\textnormal{(b)} For each ${i=0,1,\hat{1},3}$ and any $\LukA$-formula $A$,
\,${\vdash_{\GiLukA} A}$ iff \:${\vdash_{\GLukA} A}$.
\end{lcorollary}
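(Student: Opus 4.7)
The plan is to specialize the three preceding theorems to hypersequents of the form ${\Rightarrow A}$, recalling that, by definition, a proof of $A$ in any of the hypersequent calculi under consideration is a proof of this hypersequent.

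For part~(a), I would fix an arbitrary $\RPLA$-formula $A$. Since no semipropositional variable can occur inside an $\RPLA$-formula, the hypersequent ${\Rightarrow A}$ contains none either, so in particular it qualifies as an $\RPLA^1$-hypersequent. I then invoke Theorem~\ref{Th:GzeroGonezGthreeConsExtGone} three times---once with ${i=0}$, once with ${i=\hat{1}}$, and once with ${i=3}$---to conclude that each of \,${\vdash_{\GzeroLukA} (\Rightarrow A)}$, \,${\vdash_{\GonezLukA} (\Rightarrow A)}$, and \,${\vdash_{\GthreeLukA} (\Rightarrow A)}$ is equivalent to \,${\vdash_{\GoneLukA} (\Rightarrow A)}$. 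Chaining these three equivalences through $\GoneLukA$, and noting the trivial case ${i=1}$, yields \,${\vdash_{\GiLukA} A}$ iff \,${\vdash_{\GthreeLukA} A}$ for each ${i \in \{0,1,\hat{1}\}}$.

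For part~(b), I would fix an arbitrary $\LukA$-formula $A$. Then ${\Rightarrow A}$ contains no semipropositional variables and no truth constant other than possibly $\bar{0}$, so it is an $\LukA$-hypersequent. A single application of Theorem~\ref{Th:GzeroGoneGonezGthreeConsExtGLukA} to this hypersequent, taken in turn for each ${i \in \{0,1,\hat{1},3\}}$, directly gives \,${\vdash_{\GiLukA} A}$ iff \,${\vdash_{\GLukA} A}$.

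There is no real obstacle here; the substantive work has already gone into the cited theorems (and, behind them, into the density-admissibility lemmas of the present section). The only point to verify---and it is essentially notational---is that the hypersequent ${\Rightarrow A}$ inherits the relevant hypersequent class, either $\RPLA^1$- or $\LukA$-hypersequent, from the formula~$A$.
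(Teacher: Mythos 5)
Your proposal is correct and matches the paper's intent exactly: the paper derives the corollary by specializing the three preceding theorems to hypersequents of the form ${\Rightarrow A}$, which is precisely what you do (using Theorem~\ref{Th:GzeroGonezGthreeConsExtGone} for part~(a) and Theorem~\ref{Th:GzeroGoneGonezGthreeConsExtGLukA} for part~(b)). The only check needed is the one you make, namely that ${\Rightarrow A}$ is an $\RPLA^1$-hypersequent (resp.\ $\LukA$-hypersequent) when $A$ is an $\RPLA$-formula (resp.\ $\LukA$-formula).
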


\section{Some relationships between $\GzeroLukA$, $\GthreeLukA$, and\\
Hilbert-type calculi for $\RPLA$ and $\HLukA$}
\label{sec:HilRelat}

In this section we compare the hypersequent calculi $\GzeroLukA$ and
$\GthreeLukA$ with a Hilbert-type calculus $\HRPLA$ for the logic $\RPLA$
(cf. \cite{Hajek1998}), as well as with a Hilbert-type calculus $\HLukA$ 
for the logic $\LukA$ (cf. \cite{Hajek1998}).
First we formulate $\HRPLA$ and $\LukA$.

The axiom schemes of $\HRPLA$ are:

(\L1)  $A \to (B \to A)$;

(\L2)  $(A \to B) \to ((B \to C) \to (A \to C))$;

(\L3)  $(\neg A \to \neg B) \to (B \to A)$,
       \ where $\neg C$ is short for $(C \to \bar{0})$;

(\L4)  $((A \to B) \to B) \to ((B \to A) \to A)$;

(tc1)  $(\bar{r}_1 \to \bar{r}_2) \to \bar{r}$,
       \ where \,$r = \min(1-r_1+r_2, \,1)$;

(tc2)  $\bar{r} \to (\bar{r}_1 \to \bar{r}_2)$,
       \ where \,$r = \min(1-r_1+r_2, \,1)$;

($\forall$1)  $\forall x A \to \RePl{A}{x}{t}$,
              \ where $t$ is a term (not necessarily closed)
              free for $x$ in $A$;

($\forall$2)  $\forall x (A \to B) \to (A \to \forall x B)$,
              \ where $x$ does not occur free in $A$;

($\exists$1)  $\RePl{A}{x}{t} \to \exists x A$,
              \ where $t$ is a term (not necessarily closed)
              free for $x$ in $A$;

($\exists$2)  $\forall x (A \to B) \to (\exists x A \to B)$,
              \ where $x$ does not occur free in $B$.

The inference rules of $\HRPLA$ are:
$$\dfrac{A;  \quad  A \to B}{B}~\text{(mp)},  \quad\quad\quad  
  \dfrac{A}{\forall x A}~\text{(gen)}.$$

To obtain $\HLukA$ from $\HRPLA$, we require 
$A$, $B$, and $C$ to be $\LukA$-for\-mu\-las in the formulation of $\HRPLA$ 
and remove the axiom schemes (tc1) and (tc2) from it.

As hypersequent counterparts of the rules (mp) of $\HRPLA$ and $\HLukA$,
we consider the following cut rules (cf., e.g., \cite[Section 4.2]{MOG2009}):
$$\dfrac{\mathcal{G}\,|\, \Gamma_1 \Rightarrow C,\Delta_1;  \quad  \mathcal{G}\,|\, \Gamma_2,C \Rightarrow \Delta_2}
   {\mathcal{G}\,|\, \Gamma_1,\Gamma_2\Rightarrow\Delta_1,\Delta_2} 
   \text{\small \arraycolsep=0.5mm 
     $\left\{ 
        \begin{array}{ll} 
          \text{(cut)}, & \text{where $C$ is an $\RPLA$-formula,}\\
          \text{(\L cut)}, & \text{where $C$ is an $\LukA$-formula.}
      \end{array} \right.$
   }
$$

\begin{lproposition}  \label{Pr:CutNotAdm}
The rule \textnormal{(\L cut)} 
\textnormal{(}and hence \textnormal{(cut)}\textnormal{)} 
is admissible neither for $\GzeroLukA$ nor for $\GthreeLukA$.
\end{lproposition}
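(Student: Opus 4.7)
My plan is to witness the non-admissibility of (\L cut) for $\GzeroLukA$ by exhibiting a specific $\LukA$-hypersequent $\mathcal{H}_0$ together with a cut formula $C$ and premise hypersequents $\mathcal{H}_1, \mathcal{H}_2$ satisfying: (a) $\mathcal{H}_0$ arises from $\mathcal{H}_1, \mathcal{H}_2$ by a single application of (\L cut) on $C$; (b) both $\mathcal{H}_1$ and $\mathcal{H}_2$ are cut-free $\GzeroLukA$-provable; (c) $\mathcal{H}_0$ is \emph{not} cut-free $\GzeroLukA$-provable. Once such an $\mathcal{H}_0$ is produced, the non-admissibility of (\L cut) for $\GthreeLukA$ follows at once from the equivalence of $\GzeroLukA$ and $\GthreeLukA$ established in Theorem~\ref{Th:GzeroGonezGthreeEquiv}; and the corresponding failures of admissibility of (cut) come for free, because every $\LukA$-formula is in particular an $\RPLA$-formula, so any instance of (\L cut) is an instance of (cut).

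Verifying (b) for the chosen candidate is routine: one exhibits explicit cut-free derivations of $\mathcal{H}_1$ and $\mathcal{H}_2$ using the logical rules of $\GzeroLukA$ together with the structural rules shown admissible in Lemma~\ref{Lem:StructRulesAdmGzero}. The only care required is in sequencing the backward applications of $(\to\:\Rightarrow)^0$, $(\Rightarrow\:\to)^0$ and the quantifier rules so that every branch terminates at a leaf whose atomic part is valid, i.e., at an axiom of $\GzeroLukA$.

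The main obstacle is (c). My preferred approach is a structural case-analysis on candidate cut-free proofs of $\mathcal{H}_0$. Because every rule of $\GzeroLukA$ is cumulative---each premise contains the conclusion---any candidate proof tree of $\mathcal{H}_0$ is essentially determined by its sequence of backward logical-rule applications and the instantiations chosen in them, and the axiom predicate $\vDash \mathcal{L}_{at}$ depends only on the atomic part of each leaf. One then enumerates the available backward decompositions (observing that, up to renaming of parameters, only finitely many essentially distinct instantiations need be considered in $(\forall\Rightarrow)^0$, $(\Rightarrow\exists)^0$, $(\Rightarrow\forall)^0$, and $(\exists\Rightarrow)^0$) and verifies that every candidate proof tree must contain at least one leaf whose atomic part fails the validity condition, so that no cut-free proof of $\mathcal{H}_0$ exists. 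A conceptually cleaner alternative, hinted at by the paper's own use of semipropositional variables and by Remark~\ref{rem:DensitySound}, would be to isolate a semantic invariant preserved by every axiom and rule of $\GzeroLukA$ but violated by $\mathcal{H}_0$: namely, an extended class of hs-interpretation-like structures under which cut-free provability entails a strictly stronger satisfaction condition than validity, and which (\L cut) does not respect even though both premises do.
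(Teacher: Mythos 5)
There is a genuine gap: your argument is a plan for a proof, not a proof. You never exhibit the hypersequent $\mathcal{H}_0$, the cut formula $C$, or the premises $\mathcal{H}_1,\mathcal{H}_2$, and the entire burden of the proposition rests on step (c) --- showing that some concrete $\mathcal{H}_0$ is \emph{not} $\GzeroLukA$-provable --- which you acknowledge as ``the main obstacle'' and then only gesture at. Moreover, the method you sketch for (c) is doubtful as stated: the claim that ``only finitely many essentially distinct instantiations need be considered'' in the quantifier rules is unjustified, since the proper term $t$ in $(\forall\Rightarrow)^0$ and $(\Rightarrow\exists)^0$ ranges over all closed terms of a signature with infinitely many parameters, and because the rules of $\GzeroLukA$ are cumulative, the same compound formula occurrence can be decomposed backward arbitrarily many times with different instantiations; so the proposed enumeration of candidate proof trees is not obviously finite or terminating. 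The ``semantic invariant'' alternative is left entirely unspecified. In short, the hardest part of the proposition is missing.

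The paper's route avoids establishing unprovability in $\GzeroLukA$ from scratch. It imports from \cite[p.~268]{Metcalfe2011} an $\LukA$-sentence $A$ with $\nvdash_{\GLukA} A$ together with a derivation of $\Rightarrow A$ by a single application of (\L cut) from two cut-free $\GLukA$-provable hypersequents $\mathcal{H}_1,\mathcal{H}_2$. Then Theorem~\ref{Th:GzeroGoneGonezGthreeConsExtGLukA} transfers $\GLukA$-provability of $\mathcal{H}_1,\mathcal{H}_2$ to $\GiLukA$ for $i=0,3$, while Corollary~\ref{Cor:GiConsExtForFormulas} transfers the \emph{un}provability of $A$ from $\GLukA$ to $\GiLukA$. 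This is exactly the leverage your proposal misses: the conservative-extension results already proved in the paper reduce the non-admissibility of (\L cut) for $\GzeroLukA$ and $\GthreeLukA$ to the known failure of cut admissibility for $\GLukA$, so no fresh unprovability argument is needed. (Your observations that non-admissibility of (\L cut) entails that of (cut), and that the result transfers between $\GzeroLukA$ and $\GthreeLukA$ via Theorem~\ref{Th:GzeroGonezGthreeEquiv}, are both correct and consistent with the paper.)
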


\begin{proof}[\textsc{Proof}]
In \cite[p.~268]{Metcalfe2011}, 
for some $\LukA$-sentence $A$, it is shown that \,${\nvdash_{\GLukA} A}$,\, and
a proof is constructed of the form
\begin{center}
\def\ScoreOverhang{0pt}
\AxiomC{$D_1$} \noLine 
\UnaryInfC{$\mathcal{H}_1$;}
\AxiomC{$D_2$} \noLine 
\UnaryInfC{$\mathcal{H}_2$}
\RightLabel{(\L cut),}
\BinaryInfC{$\Rightarrow A$}
\DisplayProof
\end{center}
where $D_1$ and $D_2$ are $\GLukA$-proofs.
By Theorem \ref{Th:GzeroGoneGonezGthreeConsExtGLukA}, for ${i=0,3}$, we get
\,${\vdash_{\GiLukA} \mathcal{H}_1}$ and \,${\vdash_{\GiLukA} \mathcal{H}_2}$,
whence \,${\vdash_{\GiLukA+\textnormal{(\L cut)}} A}$.
But by Corollary \ref{Cor:GiConsExtForFormulas},
we have \,${\nvdash_{\GiLukA} A}$\, for ${i=0,3}$.
\end{proof}

In the rest of this section we establish the following two theorems.

\begin{ltheorem}  \label{Th:GzeroGthreeHRPLA}
For any $\RPLA$-sentence $A$,
\begin{center}
${\vdash_{\HRPLA} A}$ implies \,${\vdash_{\GthreeLukA+\textnormal{(cut)}} A}$,
which in turn implies \,${\vdash_{\GzeroLukA+\textnormal{(cut)}} A}$.
\end{center}
\end{ltheorem}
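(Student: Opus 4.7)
The plan is to handle the two implications separately. The easier second implication follows from Theorem~\ref{Th:GzeroGonezGthreeEquiv}: its proof (via Lemma~\ref{Lem:IfDeniAdmGzero} combined with Lemmas~\ref{Lem:DenOneAdmGzero} and~\ref{Lem:DenZeroAdmGzero}) shows that every rule of $\GthreeLukA$ is admissible for $\GzeroLukA$. Hence any $\GthreeLukA+(\textnormal{cut})$-proof can be turned into a $\GzeroLukA+(\textnormal{cut})$-proof by replacing each application of a $\GthreeLukA$-rule with its derivation in $\GzeroLukA$ while leaving the $(\textnormal{cut})$-steps untouched.

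For the first implication, I would proceed by induction on the length $N$ of an $\HRPLA$-proof $A_1, \ldots, A_N = A$. Since $\HRPLA$-formulas may carry free individual variables whereas the hypersequent calculi use parameters in their place, first fix an injection ${y \mapsto a_y}$ from the individual variables occurring free anywhere in the proof into parameters that do not occur anywhere in the proof. For a formula $B$ write $B^*$ for the result of replacing each free occurrence of every variable $y$ in $B$ by $a_y$; note that ${(\forall x C)^* = \forall x \widetilde{C}}$, where $\widetilde{C}$ denotes $C^*$ with $a_x$ put back as $x$, and analogously for~$\exists$. The inductive claim to maintain is ${\vdash_{\GthreeLukA+(\textnormal{cut})} (\Rightarrow A_k^*)}$.

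Of the axiom cases, (\L1)--(\L4) give valid quantifier-free $\LukA$-formulas, hence are $\GLukA$-provable cut-free by \cite[Theorem~6.24]{MOG2009} and thus $\GthreeLukA$-provable by Theorem~\ref{Th:GzeroGoneGonezGthreeConsExtGLukA}; (tc1) and (tc2) are quantifier-free $\RPLA$-formulas which, after backward applications of $(\Rightarrow\:\to)^3$ and $(\to\:\Rightarrow)^3$, reduce to valid atomic hypersequents and are therefore $\GthreeLukA$-axioms; the quantifier schemes $(\forall 1)$, $(\forall 2)$, $(\exists 1)$, $(\exists 2)$ admit short cut-free derivations in $\GthreeLukA$ via the quantifier rules, choosing the proper term or parameter in accordance with the side-conditions on freshness or on $t$ being free for $x$. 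The rule (gen) applied to $A_j$ to produce $\forall x A_j$ is simulated by a single application of $(\Rightarrow\forall)^3$ with proper parameter $a_x$, using that ${A_j^* = \RePl{\widetilde{A}_j}{x}{a_x}}$ while $a_x$ does not occur in $\forall x \widetilde{A}_j$. The rule (mp) is simulated by two $(\textnormal{cut})$s applied to the premises ${\Rightarrow A_j^*}$ and ${\Rightarrow A_j^* \to A_k^*}$ together with the auxiliary sequent ${A_j^* \to A_k^*, A_j^* \Rightarrow A_k^*}$, which is $\GLukA$-provable cut-free (backward $(\to\:\Rightarrow)$ yields ${A_j^*, A_k^* \Rightarrow A_j^*, A_k^*}$, obtained by (mix) from two instances of (id)) and hence $\GthreeLukA$-provable by conservativity.

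The main obstacle I anticipate is the bookkeeping around the variable-to-parameter translation together with the substitution side-conditions: one has to check that every axiom scheme remains provable after the translation (particularly the quantifier schemes, whose provability hinges on choosing the proper term or parameter correctly), and that the parameter used in each simulation of (gen) is genuinely proper (i.e., does not occur in the conclusion). The remaining work consists of routine applications of the rules of $\GthreeLukA$ and of the conservativity results already established in this paper.
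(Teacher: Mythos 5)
Your treatment of the second implication has a genuine gap. You argue that since every rule of $\GthreeLukA$ is admissible for $\GzeroLukA$, a $\GthreeLukA$+(cut)-proof can be converted by "replacing each application of a $\GthreeLukA$-rule with its derivation in $\GzeroLukA$ while leaving the (cut)-steps untouched." But the rules of $\GthreeLukA$ are \emph{admissible} for $\GzeroLukA$, not \emph{derivable}: there is no local derivation schema to splice in. The admissibility proofs --- in particular those of the density rules $(\textnormal{den}_1)$ and $(\textnormal{den}_0)$ (Lemmas \ref{Lem:DenOneAdmGzero} and \ref{Lem:DenZeroAdmGzero}), on which ${(\to\:\Rightarrow)^3}$, ${(\forall\Rightarrow)^3}$, and ${(\Rightarrow\exists)^3}$ rest --- are global transformations by induction on the height of a \emph{cut-free} $\GzeroLukA$-proof of the premise, with a case analysis over the possible last rules. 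In a $\GthreeLukA$+(cut)-proof the premise of such a rule application will, after transforming the subproof above it, only be $\GzeroLukA$+(cut)-provable, and the inductions of Lemmas \ref{Lem:DenOneAdmGzero} and \ref{Lem:DenZeroAdmGzero} have no case for a proof ending in (cut); note also that (cut) is not admissible for $\GzeroLukA$ (Proposition \ref{Pr:CutNotAdm}), so one cannot first eliminate the cuts. This is exactly the difficulty the paper's proof is organized around: it passes through $\GzeroLukA$+(ccan), where (ccan) is the \emph{cumulative} cancellation rule. Lemma \ref{Lem:GthreeCutImpGzeroCcan} re-proves the structural and density admissibility lemmas for $\GzeroLukA$+(ccan), adding a new induction case for (ccan) (whose cumulativity is what lets the inductions go through), and Lemma \ref{Lem:GzeroCcanImpGzeroCut} then shows that (ccan) and (cut) are interadmissible over $\GzeroLukA$. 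Without some substitute for this detour, your second implication does not follow.

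Your first implication is essentially the paper's route (the paper factors the free-variable bookkeeping into an intermediate calculus $\HcRPLA$ with a parameter-based generalization rule, Lemmas \ref{Lem:HilbCalcEquForSent} and \ref{Lem:HcRPLAImpGthreeCut}, rather than carrying a variable-to-parameter translation through the induction, but the content is the same). One inaccuracy: instances of (\L1)--(\L4) are \emph{not} quantifier-free $\LukA$-formulas, since $A$, $B$, $C$ range over arbitrary $\RPLA$-formulas. The paper repairs this by proving the propositional skeleton (with fresh propositional variables in place of $A$, $B$, $C$) in $\GLukA$ via completeness for quantifier-free $\LukA$-hypersequents, substituting the translated formulas $\widetilde{A},\widetilde{B},\widetilde{C}$ into that $\GLukA$-proof, and then invoking Lemma \ref{Lem:GLukAPropVarsInsteadTruthConstImpGi} to restore the truth constants; your argument needs the same two-step substitution.
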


\begin{ltheorem}  \label{Th:GzeroGthreeHLukA}
For any $\LukA$-sentence $A$,
\begin{center}
  ${\vdash_{\HRPLA} A}$ \ iff\ \: ${\vdash_{\HLukA} A}$ \ iff\ \:
  ${\vdash_{\GthreeLukA+\textnormal{(\L cut)}} A}$ \ iff\\[\smallskipamount]
  ${\vdash_{\GzeroLukA+\textnormal{(\L cut)}} A}$ \ iff\ \:
  ${\vdash_{\GLukA+\textnormal{(\L cut)}} A}$.
\end{center}
\end{ltheorem}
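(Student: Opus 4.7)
The plan is to establish the five-way equivalence as a cycle of implications, combining the Baaz--Metcalfe result $\vdash_{\GLukA+\textnormal{(\L cut)}} A \Leftrightarrow \vdash_{\HLukA} A$ from \cite{BaazMetcalfe2010} (mentioned in the introduction) with the previously proved Theorems \ref{Th:GzeroConsExtGLukA} and \ref{Th:GzeroGonezGthreeEquiv}, and with the standard soundness/completeness of $\HLukA$ and $\HRPLA$ from \cite{Hajek1998}. The implication $\vdash_{\HLukA} A \Rightarrow \vdash_{\HRPLA} A$ is trivial because every axiom scheme and inference rule of $\HLukA$ also belongs to $\HRPLA$; its converse, for an $\LukA$-sentence $A$, follows from soundness of $\HRPLA$ together with the immediate fact that $\LukA$- and $\RPLA$-validity coincide on $\LukA$-formulas (the truth value of such a formula is independent of the interpretation of the constants $\bar{r}$ with $r<1$) and with completeness of $\HLukA$ for $\LukA$-validity.

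For the three hypersequent calculi I would climb $\vdash_{\GLukA+\textnormal{(\L cut)}} A \Rightarrow \vdash_{\GzeroLukA+\textnormal{(\L cut)}} A \Rightarrow \vdash_{\GthreeLukA+\textnormal{(\L cut)}} A$ by lifting Theorems \ref{Th:GzeroConsExtGLukA} and \ref{Th:GzeroGonezGthreeEquiv} to the with-cut setting. This is done by induction on the given proof: every application of a rule of the weaker calculus is replaced by the cut-free simulation that occurs in the proof of the corresponding theorem (Lemma \ref{Lem:GzeroImpGLukA} together with Lemma \ref{Lem:StructRulesAdmGzero} for the first step, and Theorems \ref{Th:GzeroImpGonez}, \ref{Th:GonezImpGthree} together with Lemmas \ref{Lem:IfDeniAdmGzero}--\ref{Lem:DenZeroAdmGzero} for the second), while applications of (\L cut) are passed through unchanged.

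To close the cycle I would derive $\vdash_{\GthreeLukA+\textnormal{(\L cut)}} A \Rightarrow \vdash_{\HLukA} A$ from soundness and completeness. Rule-by-rule soundness of $\GthreeLukA$ for hypersequent validity is routine to check, and soundness of (\L cut) follows from the standard computation on the sums $\MsV{\cdot}_{M,\nu}$ (if $\MsV{\Gamma_1}_{M,\nu} - \MsV{\Delta_1}_{M,\nu} \leqslant |C|_{M,\nu}-1$ and $\MsV{\Gamma_2}_{M,\nu} - \MsV{\Delta_2}_{M,\nu} \leqslant 1 - |C|_{M,\nu}$, adding the inequalities yields the truth of the cut sequent); thus $\vdash_{\GthreeLukA+\textnormal{(\L cut)}} A$ yields $\vDash A$, and completeness of $\HLukA$ for $\LukA$-validity delivers $\vdash_{\HLukA} A$.

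The step I expect to require the most care is the inductive lifting described in the second paragraph: one must be sure that the cut-free derivations used to simulate $\GLukA$- and $\GzeroLukA$-rules compose cleanly with (\L cut) applications under induction on the proof tree. Since those derivations themselves contain no cuts, this is straightforward, but it merits an explicit remark in the write-up to distinguish the present chain of equivalences (which stays within (\L cut)) from Theorem \ref{Th:GzeroGthreeHRPLA}, where (cut) on $\RPLA$-formulas is genuinely used.
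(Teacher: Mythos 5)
Your proposal has a genuine gap: it assumes, in two essential places, that $\HLukA$ is complete for standard $[0,1]$-valued validity of $\LukA$-sentences. That is false. The set of standardly valid $\LukA$-sentences is not recursively enumerable (it is $\Pi_2$-complete by Scarpellini's and Ragaz's results), so no recursively axiomatized Hilbert calculus, in particular neither $\HLukA$ nor $\HRPLA$, proves all valid sentences; H\'ajek's calculi are complete only for the general (safe-model) semantics. Consequently your derivation of ${\vdash_{\HRPLA} A} \Rightarrow {\vdash_{\HLukA} A}$ (soundness of $\HRPLA$ plus ``completeness of $\HLukA$ for $\LukA$-validity'') and, more importantly, your closing step ${\vdash_{\GthreeLukA+\textnormal{(\L cut)}} A} \Rightarrow {\vDash A} \Rightarrow {\vdash_{\HLukA} A}$ both collapse. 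The paper's own concluding section signals this: whether every $\RPLA$-sentence provable in $\GthreeLukA$+(cut) is provable in $\HRPLA$ is posed there as an open problem, which would be trivial if the soundness-plus-completeness route were available.

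The paper avoids semantics entirely at these points. The equivalence ${\vdash_{\HRPLA} A}$ iff ${\vdash_{\HLukA} A}$ is imported as the H\'ajek--Paris--Shepherdson conservativity theorem \cite[Theorem 2.4]{HajekParisShepherdson2000}, a nontrivial result in its own right. The return journey from $\GthreeLukA+\textnormal{(\L cut)}$ is purely syntactic: one passes to $\GzeroLukA$+(\L ccan) (the analogue of Lemma \ref{Lem:GthreeCutImpGzeroCcan}), then to $\GzeroLukA$+(\L cut) (analogue of Lemma \ref{Lem:GzeroCcanImpGzeroCut}), then to $\GLukA$+(\L cut) (analogue of Lemma \ref{Lem:GzeroImpGLukA}), and only then invokes \cite[Theorem 9]{BaazMetcalfe2010}, whose proof is algebraic, to land in $\HLukA$. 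Your forward chain $\HcLukA \Rightarrow \GthreeLukA+\textnormal{(\L cut)}$ and the lifting of the conservativity results to the with-cut setting are broadly in the spirit of the paper (though note that the admissibility proofs for $\textnormal{(ec)}^0$, $\textnormal{(split)}^0$, $\textnormal{(mix)}^0$ and the density rules must be re-run with an extra case for the cut or cancellation rule, as in item 1 of the proof of Lemma \ref{Lem:GthreeCutImpGzeroCcan} --- it is not enough that the simulating derivations are cut-free). But without a syntactic or algebraic replacement for the two semantic steps, the cycle does not close.
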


In proving these theorems, we will use the cumulative cancellation rules
(cf. \cite[Section 4.1]{CiabattoniMetcalfe2003} and 
\cite[Section 4.3.5]{MOG2009})\footnote{
  The (noncumulative) cancellation rule was introduced 
  in \cite{CiabattoniMetcalfe2003}
  as a variant of the cut rule 
  to establish cut elimination 
  for the propositional fragment of the calculus $\GLukA$ 
  via elimination of the cancellation rule.
}
$$\dfrac{\mathcal{G}\,|\, \Gamma\Rightarrow\Delta\,|\, \Gamma, C\Rightarrow C,\Delta}
  {\mathcal{G}\,|\, \Gamma\Rightarrow\Delta}
   \text{\small \arraycolsep=0.5mm 
     $\left\{
       \begin{array}{ll}
         \text{(ccan)}, & \text{where $C$ is an $\RPLA$-formula,}\\
         \text{(\L ccan)}, & \text{where $C$ is an $\LukA$-formula;}
       \end{array} \right.$
   }
$$
and the calculi $\HcRPLA$ and $\HcLukA$ that are obtained from $\HRPLA$ and
$\HLukA$, respectively, thus:
$t$ in the axiom schemes ($\forall$1) and ($\exists$1) is taken to be 
a closed term, and the inference rule (gen) is replaced by the rule
$$\dfrac{\RePl{A}{x}{a}}{\forall x A}~(\widehat{\text{gen}}),$$
where $a$ is a parameter not occurring in~$A$.

\begin{llemma} \label{Lem:HilbCalcEquForSent}
For any $\RPLA$-sentence $A$,
\,${\vdash_{\HRPLA} A}$  iff  \:${\vdash_{\HcRPLA} A}$.
\ For any $\LukA$-sentence $B$,
\,${\vdash_{\HLukA} B}$  iff  \:${\vdash_{\HcLukA} B}$.
\end{llemma}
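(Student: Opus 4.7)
The two halves of the lemma have essentially the same proof; I will describe the plan for the $\RPLA$ version, the $\LukA$ case being identical modulo the absence of the axioms (tc1) and (tc2). The approach is to handle the easy direction by deriving $(\widehat{\text{gen}})$ as a derived rule of $\HRPLA$, and the harder direction by a substitution-closure induction on $\HRPLA$-proofs.

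For the direction $\HcRPLA \Rightarrow \HRPLA$: every $\HcRPLA$-axiom is a special case of the corresponding $\HRPLA$-axiom scheme (the closed-term restriction on ($\forall$1), ($\exists$1) is merely a restriction), so it suffices to show that $(\widehat{\text{gen}})$ is a derivable rule of $\HRPLA$. Given an $\HRPLA$-proof of $\RePl{A}{x}{a}$ with $a$ not occurring in $A$, my plan is to pick a variable $y$ not occurring anywhere in the proof, substitute $y$ for $a$ throughout (this yields a valid $\HRPLA$-proof, since freshness of $y$ blocks any new capture), and thereby obtain a proof of $\RePl{A}{x}{y}$; then (gen) on $y$ gives $\forall y \RePl{A}{x}{y}$, an instance of ($\forall$1) with the term $x$ followed by (mp) gives $A$ (using $y \notin A$ to identify $\RePl{\RePl{A}{x}{y}}{y}{x}$ with $A$), and a final application of (gen) on $x$ yields $\forall x A$.

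For the harder direction $\HRPLA \Rightarrow \HcRPLA$, I plan to prove by induction on the length of the $\HRPLA$-proof the following auxiliary claim: if $\vdash_{\HRPLA} B$ and the free individual variables of $B$ are among $x_1, \ldots, x_n$, then for any closed terms $t_1, \ldots, t_n$, $\vdash_{\HcRPLA} \RePl{B}{x_1, \ldots, x_n}{t_1, \ldots, t_n}$. Applied to a sentence $A$ with $\vdash_{\HRPLA} A$, the tuple is empty and this gives $\vdash_{\HcRPLA} A$ at once. In the inductive step, the schematic axioms (\L1)--(\L4), (tc1), (tc2), ($\forall$2), ($\exists$2) are preserved under substitution in their free variables; for ($\forall$1) and ($\exists$1), substituting closed terms for the free variables makes the term appearing in the substituting position of the instance closed, so the resulting formula is again a (closed-term) instance of the same scheme and hence an $\HcRPLA$-axiom; (mp) commutes with substitution; and a (gen) step $C / \forall x C$ is handled by invoking the induction hypothesis with $\vec{x}$ extended by $x$ and $\vec{t}$ extended by a fresh parameter $a$, whereupon one application of $(\widehat{\text{gen}})$ delivers $\forall x \RePl{C}{\vec{x}}{\vec{t}}$ as required.

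The main technical obstacle will be capture-avoidance bookkeeping in the (gen) and ($\forall$1)/($\exists$1) cases: one must choose the fresh parameter $a$ so as to avoid the bound variables occurring in $C$, and verify that each closed term $t_i$ is free for $x_i$ in the (possibly bound-variable-renamed) intermediate formulas. These conditions are standard and can be secured by preliminary bound-variable renaming, so the argument, though tedious, presents no conceptual difficulty.
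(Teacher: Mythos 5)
Your proposal is correct, and it fills in exactly the standard argument that the paper deliberately omits (the paper states only that the proof ``does not differ from that of a similar assertion for appropriate variants of classical first-order Hilbert-type calculi''): deriving $(\widehat{\text{gen}})$ in $\HRPLA$ via a fresh variable, and a substitution-closed induction showing that closed instances of $\HRPLA$-theorems are $\HcRPLA$-provable. Since all substituted terms are closed (closed terms and fresh parameters), the capture-avoidance worries you flag at the end largely evaporate, so no preliminary bound-variable renaming is actually needed.
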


We omit the proof of Lemma \ref{Lem:HilbCalcEquForSent}, because 
the proof is not complicated and does not differ from that of 
a similar assertion for appropriate variants of 
classical first-order Hilbert-type calculi.

We will also use the following translations and 
Lemma~\ref{Lem:GLukAPropVarsInsteadTruthConstImpGi}.

For an $\RPLA$-formula $A$ and a hypersequent $\mathcal{H}$, let
the translations ${A \mapsto \widetilde{A}}$ and
${\mathcal{H} \mapsto \widetilde{\mathcal{H}}}$ replace each truth constant
$\bar{r} \neq \bar{0}$ by a unique propositional variable~$p_{\bar{r}}$.

\begin{llemma}  \label{Lem:GLukAPropVarsInsteadTruthConstImpGi}
Let $\mathcal{H}$ be a hypersequent not containing semipropositional 
variables, and $\widetilde{\mathcal{H}}$ be the $\LukA$-hypersequent that
results from $\mathcal{H}$ by applying the above translation.
If \:${\vdash_{\GLukA} \widetilde{\mathcal{H}}}$, then
\,${\vdash_{\GiLukA} \mathcal{H}}$ for all ${i=0,1,\hat{1},3}$.
\end{llemma}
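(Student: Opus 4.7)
The plan is to reduce the claim to a syntactic substitution argument, after first invoking conservativity. Since $\mathcal{H}$ contains no semipropositional variables and its truth constants $\bar{r} \neq \bar{0}$ have been replaced by propositional variables $p_{\bar{r}}$, the translated hypersequent $\widetilde{\mathcal{H}}$ is an $\LukA$-hypersequent. Hence from $\vdash_{\GLukA} \widetilde{\mathcal{H}}$, Theorem \ref{Th:GzeroGoneGonezGthreeConsExtGLukA} yields $\vdash_{\GiLukA} \widetilde{\mathcal{H}}$ for every ${i = 0, 1, \hat{1}, 3}$. It therefore suffices to show that the inverse translation---substituting $\bar{r}$ for $p_{\bar{r}}$ uniformly throughout a given $\GiLukA$-proof of $\widetilde{\mathcal{H}}$---yields a $\GiLukA$-proof of $\mathcal{H}$.

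First I would verify preservation of axiomhood. Let $\mathcal{K}$ be a hypersequent occurring in the proof with $\vDash \mathcal{K}_{at}$, and let $\mathcal{K}'$ denote the result of replacing each $p_{\bar{r}}$ by $\bar{r}$. Since each $p_{\bar{r}}$ is a nullary predicate symbol, the substitution leaves the partition of sequents into atomic and non-atomic unchanged, so $\mathcal{K}'_{at}$ is obtained from $\mathcal{K}_{at}$ by the same syntactic replacement. Given any hs-interpretation $M$, let $M^*$ agree with $M$ except that $|p_{\bar{r}}|_{M^*} = r$ for each $p_{\bar{r}}$ appearing in $\mathcal{K}_{at}$; then any sequent of $\mathcal{K}_{at}$ true under $M^*$ corresponds to a sequent of $\mathcal{K}'_{at}$ true under $M$, because $\bar{r}$ under $M$ and $p_{\bar{r}}$ under $M^*$ take the same value $r$. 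Hence $\vDash \mathcal{K}'_{at}$, so $\mathcal{K}'$ is an axiom of $\GiLukA$.

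Next I would check preservation of rule applications. Every inference rule of $\GiLukA$ is schematic in formulas and multisets, and the substitution commutes with connectives, quantifiers, and with the operation $\RePl{A}{x}{t}$---the latter because $x$ does not occur in the 0-ary formulas $p_{\bar{r}}$ or $\bar{r}$. The eigenvariable condition that $a$ not occur in the conclusion is trivially preserved, since $\bar{r}$ contains no parameters; and the side conditions on proper semipropositional variables $\SpV{p}_0$, $\SpV{p}_1$ are unaffected, since these are syntactically distinct from the propositional variables $p_{\bar{r}}$. Therefore each inference step in the $\GiLukA$-proof of $\widetilde{\mathcal{H}}$ becomes an inference step of the same form in a $\GiLukA$-proof of $\mathcal{H}$, and the construction is complete.

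I do not anticipate a serious obstacle here: the substitution $p_{\bar{r}} \mapsto \bar{r}$ relates two objects that play identical syntactic roles---both are 0-ary atomic $\RPLA$-formulas---and agree semantically under the auxiliary interpretation $M^*$ used to witness validity of axioms. The only mild point of care is to ensure that the translation $A \mapsto \widetilde{A}$ picks each $p_{\bar{r}}$ to be distinct from any predicate symbols already in use, so that the reverse substitution is unambiguous on the entire proof tree.
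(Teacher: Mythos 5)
Your proposal is correct and follows essentially the same route as the paper: invoke Theorem \ref{Th:GzeroGoneGonezGthreeConsExtGLukA} to obtain a $\GiLukA$-proof of $\widetilde{\mathcal{H}}$, then substitute $\bar{r}$ for $p_{\bar{r}}$ throughout that proof. The paper states the substitution step without elaboration, whereas you additionally verify that it preserves axioms (via the auxiliary interpretation assigning $p_{\bar{r}}$ the value $r$) and rule applications; this extra detail is sound and consistent with the intended argument.
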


\begin{proof}[\textsc{Proof}]
Fix any ${i \in \{0,1,\hat{1},3\}}$.
By Theorem \ref{Th:GzeroGoneGonezGthreeConsExtGLukA},
from \,${\vdash_{\GLukA} \widetilde{\mathcal{H}}}$ it follows that
there exists a $\GiLukA$-proof $D$ of $\widetilde{\mathcal{H}}$.
For each truth constant $\bar{r} \neq \bar{0}$ occurring in $\mathcal{H}$, 
we replace the propositional variable $p_{\bar{r}}$ by $\bar{r}$ in $D$; 
thus we get a $\GiLukA$-proof of $\mathcal{H}$.
\end{proof}

\begin{proof}[\textsc{Proof of Theorem~\ref{Th:GzeroGthreeHRPLA}}]
For any $\RPLA$-sentence $A$, the following implications hold:
\begin{center}
$\vdash_{\HRPLA} A$
$\stackrel{\ref{Lem:HilbCalcEquForSent}}{\implies}$
$\vdash_{\HcRPLA} A$                            
$\stackrel{\ref{Lem:HcRPLAImpGthreeCut}}{\implies}$
$\vdash_\textnormal{$\GthreeLukA$+(cut)} A$     
$\stackrel{\ref{Lem:GthreeCutImpGzeroCcan}}{\implies}$\\ 
$\vdash_\textnormal{$\GzeroLukA$+(ccan)} A$     
$\stackrel{\ref{Lem:GzeroCcanImpGzeroCut}}{\implies}$
$\vdash_\textnormal{$\GzeroLukA$+(cut)} A$.
\end{center}
Over each of these implications, there is a number of the lemma that
verifies it. 
\end{proof}

\begin{llemma}  \label{Lem:HcRPLAImpGthreeCut}
For any $\RPLA$-formula $A$,
if \:${\vdash_{\HcRPLA} A}$, then \,${\vdash_\textnormal{$\GthreeLukA$+(cut)} A}$.
\end{llemma}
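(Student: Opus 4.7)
The plan is to induct on the length of an $\HcRPLA$-proof of $A$, showing the base case (each axiom scheme) and the two inductive cases (simulating (mp) and $(\widehat{\text{gen}})$) separately. The induction step handles rules cleanly, so most of the work is in the axiom cases.

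For the axiom schemes (\L1)--(\L4), ($\forall$1), ($\forall$2), ($\exists$1), ($\exists$2), I would argue uniformly via Lemma~\ref{Lem:GLukAPropVarsInsteadTruthConstImpGi}. An instance $A$ of any of these schemes, after applying the translation ${A \mapsto \widetilde{A}}$ that replaces every truth constant ${\bar{r} \neq \bar{0}}$ by a fresh propositional variable $p_{\bar{r}}$, becomes an instance of the same scheme whose meta-variables range over $\LukA$-formulas; in particular $\widetilde{A}$ is $\LukA$-valid, so by the completeness of $\GLukA$ established in \cite{BaazMetcalfe2010} we have \,${\vdash_{\GLukA} \widetilde{A}}$, and Lemma~\ref{Lem:GLukAPropVarsInsteadTruthConstImpGi} yields \,${\vdash_{\GthreeLukA} A}$. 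The two truth-constant schemes (tc1) and (tc2) fall outside this argument and must be treated directly: for (tc1) I would apply $(\Rightarrow\:\to)^3$ backward to \,${\Rightarrow (\bar{r}_1 \to \bar{r}_2) \to \bar{r}}$, obtaining the trivially valid empty sequent and ${\bar{r}_1 \to \bar{r}_2 \Rightarrow \bar{r}}$, then apply $(\to\:\Rightarrow)^3$ backward to reach the atomic hypersequent \,${\SpV{p}_1 \Rightarrow \bar{r} \,|\, \bar{r}_2 \Rightarrow \SpV{p}_1, \bar{r}_1}$, whose validity follows from the identity ${r = \min(1-r_1+r_2,\,1)}$ by a short calculation on $\MsV{\cdot}_M$. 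The scheme (tc2) is analogous: two backward applications of $(\Rightarrow\:\to)^3$ reduce it to atomic sequents whose validity is again immediate from the definition of~$r$.

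For the (mp) step, the IH gives ${\vdash_{\GthreeLukA+\textnormal{(cut)}} \Rightarrow A'}$ and ${\vdash_{\GthreeLukA+\textnormal{(cut)}} \Rightarrow A' \to A}$. I would first derive the sequent \,${A',\, A' \to A \Rightarrow A}$ in $\GthreeLukA$ by the same translation technique as above (its translation is a valid $\LukA$-hypersequent, hence $\GLukA$-provable, hence $\GthreeLukA$-provable by Lemma~\ref{Lem:GLukAPropVarsInsteadTruthConstImpGi}), and then apply (cut) twice, first cutting on $A' \to A$ against $\Rightarrow A' \to A$ to obtain $A' \Rightarrow A$, then cutting on $A'$ against $\Rightarrow A'$ to obtain $\Rightarrow A$. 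For the $(\widehat{\text{gen}})$ step, where ${A = \forall x A_1}$ is obtained from $\RePl{A_1}{x}{a}$ with $a$ not in $A_1$, I would apply $(\Rightarrow\forall)^3$ directly to the IH proof of $\Rightarrow \RePl{A_1}{x}{a}$; the eigenvariable condition of $(\widehat{\text{gen}})$ guarantees that $a$ does not occur in the conclusion \,${\Rightarrow \forall x A_1}$, which is exactly what $(\Rightarrow\forall)^3$ requires.

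The main obstacle I expect is the direct verification of (tc1) and (tc2), since these are the only axioms for which the translation argument fails; the remainder of the argument is a straightforward bookkeeping induction that relies on completeness of $\GLukA$ for $\LukA$ and on the translation lemma.
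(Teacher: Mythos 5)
Your handling of (tc1), (tc2), (mp), and $(\widehat{\text{gen}})$ is sound and essentially matches the paper (your direct backward derivation of (tc1)/(tc2) is even a reasonable alternative to the paper's appeal to the quantifier-free completeness of $\GthreeLukA$). But there is a genuine gap in your uniform treatment of the axiom schemes: for the quantifier schemes ($\forall$1), ($\forall$2), ($\exists$1), ($\exists$2) you infer \,${\vdash_{\GLukA} \widetilde{A}}$ from the validity of $\widetilde{A}$ by invoking ``the completeness of $\GLukA$''. No such completeness theorem is available. The set of valid $\LukA$-sentences is not recursively enumerable, so the cut-free calculus $\GLukA$ cannot be complete for first-order validity; indeed, the paper's Proposition~\ref{Pr:CutNotAdm} rests on an $\LukA$-sentence that is provable in $\GLukA$+(\L cut) (hence valid) but \emph{not} $\GLukA$-provable. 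The only completeness result the paper uses is for \emph{quantifier-free} $\LukA$-hypersequents \cite[Theorem 6.24]{MOG2009}, which does not cover the quantifier axioms. The correct move, which the paper makes, is to exhibit explicit cut-free $\GLukA$-derivations of $\widetilde{Q}$ for each quantifier axiom $Q$ (trivial for ($\forall$1) and ($\exists$1); see Figure~\ref{fig:ExistsTwoProvGLukA} for ($\exists$2), and similarly for ($\forall$2)), and only then apply Lemma~\ref{Lem:GLukAPropVarsInsteadTruthConstImpGi}.

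A secondary instance of the same problem occurs in your treatment of (\L1)--(\L4): an instance $L$ of such a scheme may have quantified formulas substituted for its metavariables, so $\widetilde{L}$ is in general not quantifier-free and the quantifier-free completeness theorem does not apply to it directly. The paper circumvents this by first deriving in $\GLukA$ the purely propositional skeleton $L'$ built from distinct propositional variables (where quantifier-free completeness does apply) and then substituting $\widetilde{A},\widetilde{B},\widetilde{C}$ for those variables in the resulting derivation. The same substitution device is needed for your derivation of \,${A', A'\to A \Rightarrow A}$ in the (mp) step when $A'$ or $A$ contains quantifiers; the paper's Figure~\ref{fig:MpAdmGthreeCut} gives the corresponding explicit two-line $\GLukA$-proof from (id) axioms, which avoids any appeal to completeness there.
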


\begin{proof}[\textsc{Proof}]
The rule $(\widehat{\text{gen}})$ of $\HcRPLA$ is derivable in $\GthreeLukA$,
because the latter calculus contains the rule ${(\Rightarrow\forall)^3}$.

On the left in Figure \ref{fig:MpAdmGthreeCut}, we show how to get
the conclusion of the rule (mp) from its premises and 
the hypersequent ${\mathcal{H} = (A, A \to B \Rightarrow B)}$
using the rule (cut); 
and on the right, we show a $\GLukA$-proof of $\widetilde{\mathcal{H}}$.
\begin{figure}[!tb]
\begin{center} \small
\def\ScoreOverhang{0pt}
\AxiomC{$\Rightarrow A$;}
  \AxiomC{$\Rightarrow A \to B$;}
  \AxiomC{$A, A \to B \Rightarrow B$}
\RightLabel{(cut)}
\BinaryInfC{$A \Rightarrow B$}
\RightLabel{(cut)}
\BinaryInfC{$\Rightarrow B$}
\DisplayProof
\qquad
\AxiomC{$\widetilde{A} \Rightarrow \widetilde{A}$;}
\AxiomC{$\widetilde{B} \Rightarrow \widetilde{B}$}
\RightLabel{(mix)}
\BinaryInfC{$\widetilde{A},\widetilde{B} \Rightarrow \widetilde{A},\widetilde{B}$}
\RightLabel{$(\to\:\Rightarrow)$}
\UnaryInfC{$\widetilde{A}, \widetilde{A} \to \widetilde{B} \Rightarrow \widetilde{B}$}
\DisplayProof
\caption{Proofs for showing the derivability of (mp) in $\GthreeLukA$+(cut).}
\label{fig:MpAdmGthreeCut}
\end{center}
\end{figure}
By Lemma \ref{Lem:GLukAPropVarsInsteadTruthConstImpGi}, we have
\,${\vdash_{\GthreeLukA} \mathcal{H}}$.
So (mp) is derivable in $\GthreeLukA$+(cut).

To conclude the proof, it is sufficient to establish that 
all axioms of $\HcRPLA$ are $\GthreeLukA$-provable.

Let $L$ be an instance of one of the axiom schemes (\L1)--(\L4), say,
$$L = (A \to B) \to ((B \to C) \to (A \to C))$$ 
for some $\RPLA$-formulas $A,B,C$.
Take the propositional $\LukA$-formula 
$$L' = (p_1 \to p_2) \to ((p_2 \to p_3) \to (p_1 \to p_3)),$$
where $p_1,p_2,p_3$ are distinct propositional variables.
Since \,${\vDash L'}$ and
$\GLukA$ is complete for quantifier-free $\LukA$-hypersequents
\cite[Theorem 6.24]{MOG2009}, 
we have a $\GLukA$-proof $D'$ of $L'$.
In $D'$ we replace propositional variables $p_1,p_2,p_3$ with 
$\widetilde{A},\widetilde{B},\widetilde{C}$, 
respectively, producing a $\GLukA$-proof of $\widetilde{L}$.
Then by Lemma \ref{Lem:GLukAPropVarsInsteadTruthConstImpGi}, we get
\,${\vdash_{\GthreeLukA} L}$.

Any instance of any of the axiom schemes (tc1) and (tc2) 
is $\GthreeLukA$-provable, 
because it is valid and $\GthreeLukA$ is complete for quantifier-free
hypersequents \cite[Proposition 1]{Ger201t}. 

Finally, let $Q$ be a quantifier axiom of $\HcRPLA$.
Then we can construct a $\GLukA$-proof of $\widetilde{Q}$.
Indeed, in the cases of ($\forall$1) and ($\exists$1), this is trivial;
in the case of ($\exists$2), such a $\GLukA$-proof is given 
in Figure \ref{fig:ExistsTwoProvGLukA} 
(where $a$ does not occur in $\widetilde{Q}$);
\begin{figure}[!tb]
\begin{center} \small
\def\ScoreOverhang{0pt}
\AxiomC{$\Rightarrow$\,;}
  \AxiomC{$\Rightarrow$}
  \LeftLabel{(wl)}
  \UnaryInfC{$\forall x (\widetilde{A} \to \widetilde{B}) \Rightarrow$\,;}

  \AxiomC{$\RePl{\widetilde{A}}{x}{a} \Rightarrow \RePl{\widetilde{A}}{x}{a}$;}
  \AxiomC{$\widetilde{B} \Rightarrow \widetilde{B}$}
  \RightLabel{(mix)}
  \BinaryInfC{$\RePl{\widetilde{A}}{x}{a}, \widetilde{B} \Rightarrow \RePl{\widetilde{A}}{x}{a}, \widetilde{B}$}
  \RightLabel{$(\to\:\Rightarrow)$}
  \UnaryInfC{$\RePl{\widetilde{A}}{x}{a} \to \widetilde{B}, \RePl{\widetilde{A}}{x}{a} \Rightarrow \widetilde{B}$}
  \RightLabel{$(\forall\Rightarrow)$}  
  \UnaryInfC{$\forall x (\widetilde{A} \to \widetilde{B}), \RePl{\widetilde{A}}{x}{a} \Rightarrow \widetilde{B}$}
  \RightLabel{$(\exists\Rightarrow)$}  
  \UnaryInfC{$\forall x (\widetilde{A} \to \widetilde{B}), \exists x \widetilde{A} \Rightarrow \widetilde{B}$}
\RightLabel{$(\Rightarrow\:\to)$}
\BinaryInfC{$\forall x (\widetilde{A} \to \widetilde{B}) \Rightarrow (\exists x \widetilde{A} \to \widetilde{B})$}
\RightLabel{$(\Rightarrow\:\to)$}
\BinaryInfC{$\Rightarrow \forall x (\widetilde{A} \to \widetilde{B}) \to (\exists x \widetilde{A} \to \widetilde{B})$}
\DisplayProof
\caption{A $\GLukA$-proof of $\widetilde{(\exists2)}$.}
\label{fig:ExistsTwoProvGLukA}
\end{center}
\end{figure}
and in the case of ($\forall$2), a $\GLukA$-proof of $\widetilde{Q}$ 
is constructed similarly.
Hence \,${\vdash_{\GthreeLukA} Q}$ 
by Lemma \ref{Lem:GLukAPropVarsInsteadTruthConstImpGi}.
\end{proof}

\begin{llemma}  \label{Lem:GthreeCutImpGzeroCcan}
If \:${\vdash_\textnormal{$\GthreeLukA$+(cut)} \mathcal{H}_0}$,
then \,${\vdash_\textnormal{$\GzeroLukA$+(ccan)} \mathcal{H}_0}$.
\end{llemma}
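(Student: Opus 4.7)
The plan is to proceed by induction on the height of a given $\GthreeLukA+(\textnormal{cut})$-proof $D$ of $\mathcal{H}_0$, showing at each step that the rule used at the root is admissible for $\GzeroLukA+(\textnormal{ccan})$. If the root of $D$ is an axiom, there is nothing to do, since $\GthreeLukA$ and $\GzeroLukA+(\textnormal{ccan})$ share the same notion of axiom (namely, hypersequents $\mathcal{H}$ with ${\vDash \mathcal{H}_{at}}$). Otherwise, by the induction hypothesis each premise of the final rule admits a $\GzeroLukA+(\textnormal{ccan})$-proof, and it remains to handle the rule itself.

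If the final rule of $D$ is one of the rules of $\GthreeLukA$, then by Lemma~\ref{Lem:IfDeniAdmGzero} together with Lemmas~\ref{Lem:DenOneAdmGzero} and~\ref{Lem:DenZeroAdmGzero} (which lie at the heart of Theorem~\ref{Th:GzeroGonezGthreeEquiv}) this rule is already admissible for $\GzeroLukA$, and hence \emph{a fortiori} for the extension $\GzeroLukA+(\textnormal{ccan})$. So this case is immediate.

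If the final rule of $D$ is $(\textnormal{cut})$, with premises $\mathcal{G}\,|\,\Gamma_1\Rightarrow C,\Delta_1$ and $\mathcal{G}\,|\,\Gamma_2,C\Rightarrow\Delta_2$ and conclusion $\mathcal{G}\,|\,\Gamma_1,\Gamma_2\Rightarrow\Delta_1,\Delta_2$, I would simulate it as follows. First apply the admissible rule $\textnormal{(ew)}^0$ (Lemma~\ref{Lem:StructRulesAdmGzero}) to each premise, adjoining a copy of the sequent $\Gamma_1,\Gamma_2\Rightarrow\Delta_1,\Delta_2$. Then apply the admissible rule $\textnormal{(mix)}^0$ to the resulting hypersequents, with the common side-hypersequent $\mathcal{G}\,|\,\Gamma_1,\Gamma_2\Rightarrow\Delta_1,\Delta_2$ and with the mixed sequents $(\Gamma_1\Rightarrow C,\Delta_1)$ and $(\Gamma_2,C\Rightarrow\Delta_2)$; this yields $\mathcal{G}\,|\,\Gamma_1,\Gamma_2\Rightarrow\Delta_1,\Delta_2\,|\,\Gamma_1,\Gamma_2,C\Rightarrow C,\Delta_1,\Delta_2$. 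Finally, one application of $(\textnormal{ccan})$ discards the auxiliary sequent and produces the conclusion.

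The main obstacle is not this last derivation itself, which is a short calculation once the admissibility results are available, but is rather the substantial infrastructure provided by Lemma~\ref{Lem:StructRulesAdmGzero} (giving $\textnormal{(ew)}^0$ and $\textnormal{(mix)}^0$) and by Theorem~\ref{Th:GzeroGonezGthreeEquiv} (giving admissibility of the $\GthreeLukA$-rules for $\GzeroLukA$). With those already in place, the current lemma is a routine bridging argument, using $(\textnormal{ccan})$ only to absorb the duplicated cut-formula that arises when one simulates a cut by a mix.
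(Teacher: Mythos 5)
Your overall architecture (induction on the height of the $\GthreeLukA$+(cut)-proof, reducing everything to admissibility of the individual rules, and the (ew)$^0$/(mix)$^0$/(ccan) simulation of (cut)) matches the paper, and the cut-simulation itself is exactly the paper's derivation. But there is a genuine gap in the step where you dispose of the $\GthreeLukA$-rules: you claim that since each such rule is admissible for $\GzeroLukA$ (via Lemmas \ref{Lem:IfDeniAdmGzero}, \ref{Lem:DenOneAdmGzero}, \ref{Lem:DenZeroAdmGzero}), it is ``\emph{a fortiori}'' admissible for the extension $\GzeroLukA$+(ccan). Admissibility is not monotone under extension of a calculus: it asserts that provability of the premises \emph{in $\mathfrak{C}$} yields provability of the conclusion \emph{in $\mathfrak{C}$}, and in your induction the premises are only known to be provable in $\GzeroLukA$+(ccan), possibly essentially using (ccan) (e.g.\ because a cut occurred higher up in $D$). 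The admissibility proofs of $(\textnormal{den}_1)$ and $(\textnormal{den}_0)$ are global transformations by induction on a \emph{$\GzeroLukA$-proof} of the premise; they simply do not apply to a proof containing (ccan)-inferences. The same objection applies to your use of $\textnormal{(ew)}^0$ and $\textnormal{(mix)}^0$ inside the cut simulation: Lemma \ref{Lem:StructRulesAdmGzero} gives their admissibility for $\GzeroLukA$, not for $\GzeroLukA$+(ccan).

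Closing this gap is precisely the bulk of the paper's proof: one must re-establish the (hp-)admissibility of $\textnormal{(ew)}^0$, $\textnormal{(ec)}^0$, $\textnormal{(split)}^0$, $\textnormal{(mix)}^0$ for $\GzeroLukA$+(ccan) (for $\textnormal{(split)}^0$ this requires an extra case in the induction, splitting the two distinguished sequent occurrences above a (ccan)-inference and recombining with $\textnormal{(ec)}^0$ and (ccan)), and then rerun the density-admissibility arguments of Lemmas \ref{Lem:DenOneAdmGzero} and \ref{Lem:DenZeroAdmGzero} for $\GzeroLukA$+(ccan), adding a new case 2.2.6 in which the rule application at the root is (ccan); there one pushes the $n$-fold combination through and closes with $n$ applications of (ccan). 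Only after this extended infrastructure is in place does the reduction of the $\GthreeLukA$-rules (as in Lemma \ref{Lem:IfDeniAdmGzero}) and of (cut) go through. So your proposal is not merely a shorter route to the same result; as written, its central step is unjustified, and repairing it requires essentially the content of the paper's items 1 and 2.
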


\begin{proof}[\textsc{Proof}]
It suffices to show that all the rules of $\GthreeLukA$+(cut) are admissible for
$\GzeroLukA$+(ccan).

1. Let us demonstrate that the rules $\textnormal{(ew)}^0$, $\textnormal{(ec)}^0$,
and $\textnormal{(split)}^0$ are hp-admissible for $\GzeroLukA$+(ccan), and
the rule $\textnormal{(mix)}^0$ is admissible for $\GzeroLukA$+(ccan)
(these rules are formulated in Lemma~\ref{Lem:StructRulesAdmGzero}).

For $\textnormal{(ew)}^0$, $\textnormal{(ec)}^0$, and $\textnormal{(mix)}^0$,
these assertions are established just as in items 1, 2, and 5 of the proof of 
Lemma~\ref{Lem:StructRulesAdmGzero}.

The proof of the hp-admissibility of $\textnormal{(split)}^0$ 
for $\GzeroLukA$+(ccan) is similar to the proof of Lemma 7 in \cite{Ger2017},\footnote{ 
  See also Section \ref{secApp:SplitMixAdmGzero} 
  (of the appendix) on p.~\pageref{secApp:SplitMixAdmGzero}.
}
we only need to consider one more case.
As in the proof of Lemma 7 in \cite{Ger2017}, by induction on the height of
a proof $D_1$ of
\,${\mathcal{G}\,|\,\Gamma_1,\Gamma_2\Rightarrow\Delta_1,\Delta_2}$\,
(in $\GzeroLukA$+(ccan) now),
we show that $D_1$ can be transformed into a proof of 
\,${\mathcal{G}\,|\,\Gamma_1\Rightarrow\Delta_1\,|\,\Gamma_2\Rightarrow\Delta_2}$\,
whose height is not greater than the height of~$D_1$.
We add the case where the proof $D_1$ has the form:
\begin{center}
\def\ScoreOverhang{0pt}
\AxiomC{$D_0$} \noLine 
\UnaryInfC{$\mathcal{G} \,|\, \Gamma_1,\Gamma_2 \Rightarrow \Delta_1,\Delta_2
  \,|\, \Gamma_1,\Gamma_2,A \Rightarrow A,\Delta_1,\Delta_2$}
\RightLabel{(ccan).}
\UnaryInfC{$\mathcal{G} \,|\, \Gamma_1,\Gamma_2 \Rightarrow \Delta_1,\Delta_2$}
\DisplayProof
\end{center}
In this case, using the induction hypothesis twice, 
we split the two sequent occurrences 
distinguished in the lowest hypersequent in the proof $D_0$ 
to obtain a proof of
$$\mathcal{G} \,|\, \Gamma_1 \Rightarrow \Delta_1 \,|\, \Gamma_2 \Rightarrow \Delta_2
  \,|\, \Gamma_1 \Rightarrow \Delta_1
  \,|\, \Gamma_2,A \Rightarrow A,\Delta_2;$$
whence by the hp-admissible rule $\textnormal{(ec)}^0$,
we construct a proof of
$$\mathcal{G} \,|\, \Gamma_1 \Rightarrow \Delta_1 \,|\, \Gamma_2 \Rightarrow \Delta_2
  \,|\, \Gamma_2,A \Rightarrow A,\Delta_2;$$
and by (ccan), we get the desired proof of
\,${\mathcal{G}\,|\,\Gamma_1\Rightarrow\Delta_1\,|\,\Gamma_2\Rightarrow\Delta_2}$.

2. Let us establish the admissibility of the rules $(\textnormal{den}_1)$ and 
$(\textnormal{den}_0)$ for $\GzeroLukA$+(ccan) 
(these rules are formulated at the beginning of Section~\ref{sec:FurtherRelat}).
With the results of the preceding item, we do this
as in Lemmas \ref{Lem:DenOneAdmGzero} and \ref{Lem:DenZeroAdmGzero}, 
adding to item 2.2 of the proof of Lemma \ref{Lem:DenOneAdmGzero} 
one more case 2.2.6 where $\mathcal{R}$ is (ccan) and 
the proof $D$ (in $\GzeroLukA$+(ccan) now) looks like:
\begin{center}
\def\ScoreOverhang{0pt}
\AxiomC{$D_1$} \noLine 
\UnaryInfC{$\mathcal{H} \,|\, \Gamma_1, A, \SpV{p}_1 \Rightarrow A, \Delta_1$}
\RightLabel{(ccan).} 
\UnaryInfC{$\mathcal{H}$}
\DisplayProof
\end{center}
In this case, using the induction hypothesis, we transform $D_1$ into a proof of
$$\mathcal{H}' \,\big|\, \big[ \Gamma_1, A, \Pi_j \Rightarrow A, \Delta_1, \Sigma_j \big]_{j\in 1..n},$$
whence we get the desired proof of $\mathcal{H}'$ by $n$ applications of (ccan).

3. Now the admissibility for $\GzeroLukA$+(ccan) of each rule of $\GthreeLukA$
can be shown just as in the proof of Lemma \ref{Lem:IfDeniAdmGzero}.
Finally, (cut) is admissible for $\GzeroLukA$+(ccan).
Indeed, the conclusion of (cut) is obtained from its premises thus:
\begin{center}
\def\ScoreOverhang{0pt}
\AxiomC{$\mathcal{G}\,|\, \Gamma_1 \Rightarrow C,\Delta_1$;}
\AxiomC{$\mathcal{G}\,|\, \Gamma_2,C \Rightarrow \Delta_2$}
\RightLabel{$\textnormal{(mix)}^0$}
\BinaryInfC{$\mathcal{G}\,|\, \Gamma_1,\Gamma_2,C \Rightarrow C,\Delta_1,\Delta_2$}
\RightLabel{$\textnormal{(ew)}^0$}
\UnaryInfC{$\mathcal{G}\,|\, \Gamma_1,\Gamma_2 \Rightarrow \Delta_1,\Delta_2 \,|\, \Gamma_1,\Gamma_2,C \Rightarrow C,\Delta_1,\Delta_2$}
\RightLabel{(ccan),}
\UnaryInfC{$\mathcal{G}\,|\, \Gamma_1,\Gamma_2 \Rightarrow \Delta_1,\Delta_2$}
\DisplayProof
\end{center}
$\textnormal{(mix)}^0$ and $\textnormal{(ew)}^0$ being
admissible for $\GzeroLukA$+(ccan) by item 1 of the present proof.
\end{proof}

\begin{llemma}  \label{Lem:GzeroCcanImpGzeroCut}
\ ${\vdash_\textnormal{$\GzeroLukA$+(ccan)} \mathcal{H}}$  iff
\:${\vdash_\textnormal{$\GzeroLukA$+(cut)} \mathcal{H}}$.
\end{llemma}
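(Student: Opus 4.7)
The plan is to establish each direction of the equivalence in turn.

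For the direction $\vdash_{\GzeroLukA+(\text{cut})} \mathcal{H}$ implies $\vdash_{\GzeroLukA+(\text{ccan})} \mathcal{H}$, I would invoke item~3 of the proof of Lemma~\ref{Lem:GthreeCutImpGzeroCcan}, which already exhibits a derivation of any application of (cut) from its premises by $\textnormal{(mix)}^0$, $\textnormal{(ew)}^0$, and (ccan); this establishes the admissibility of (cut) for $\GzeroLukA+(\text{ccan})$. Replacing each (cut) application in a $\GzeroLukA+(\text{cut})$-proof by this derivation immediately produces a $\GzeroLukA+(\text{ccan})$-proof of the same root hypersequent.

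The converse direction is the substantive content, and here the plan is to show that (ccan) is admissible for $\GzeroLukA+(\text{cut})$. I would proceed by induction on the height of a $\GzeroLukA+(\text{cut})$-proof of the (ccan) premise $\mathcal{H}_p = \mathcal{G}\,|\,\Gamma\Rightarrow\Delta\,|\,\Gamma, C \Rightarrow C, \Delta$, producing a $\GzeroLukA+(\text{cut})$-proof of the conclusion $\mathcal{H}_c = \mathcal{G}\,|\,\Gamma\Rightarrow\Delta$. The base case, when $\mathcal{H}_p$ is an axiom, rests on the observation that $\Gamma, C \Rightarrow C, \Delta$ and $\Gamma \Rightarrow \Delta$ have the same truth value under every hs-interpretation (the two occurrences of $C$ cancel in $\MsV{\cdot}_{M,\nu}$, both reducing to $\MsV{\Gamma}_{M,\nu} \leqslant \MsV{\Delta}_{M,\nu}$); hence the validity of $(\mathcal{H}_p)_{at}$ forces that of $(\mathcal{H}_c)_{at}$, making $\mathcal{H}_c$ an axiom as well.

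The inductive step proceeds by case analysis on the last rule $\mathcal{R}$ of the proof. When the principal sequent of $\mathcal{R}$ lies in $\mathcal{G}$ or is $\Gamma\Rightarrow\Delta$, or is $\Gamma, C \Rightarrow C, \Delta$ with principal formula not an occurrence of $C$, the induction hypothesis applies directly to each premise and the conclusion $\mathcal{H}_c$ follows by reapplying $\mathcal{R}$; the same permutation handles the case in which $\mathcal{R}$ is (cut) with cut formula disjoint from the distinguished $C$-occurrences.

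The hard part will be the remaining cases where $\mathcal{R}$'s principal formula is one of the $C$-occurrences in $\Gamma, C \Rightarrow C, \Delta$, or where $\mathcal{R}$ is (cut) with $C$ itself as cut formula. Here I would adapt the standard subformula-descent technique of Gentzen-style cut elimination, using the admissible structural rules $\textnormal{(mix)}^0$, $\textnormal{(split)}^0$, $\textnormal{(ew)}^0$, $\textnormal{(ec)}^0$ from Lemma~\ref{Lem:StructRulesAdmGzero} to rearrange the premises into hypersequents to which fresh applications of (cut) on the immediate subformulas of $C$ can be applied, so that the induction hypothesis can be invoked on shorter proofs with strictly simpler cut formulas. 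Verifying that these manipulations strictly decrease a suitable lexicographic measure (on proof height paired with the complexity of $C$) will be the main technical burden of the argument.
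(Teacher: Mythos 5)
The left-to-right reduction in your proposal (simulating (cut) in $\GzeroLukA$+(ccan) via item~3 of the proof of Lemma~\ref{Lem:GthreeCutImpGzeroCcan}) is exactly what the paper does and is fine. The problem is the other direction. You have missed the one key idea, namely a single cut on the formula $C\to C$, and replaced it with an induction that is both unnecessary and defective. The paper's argument needs no induction at all: $C\Rightarrow C$ is $\GzeroLukA$-provable (its translation $\widetilde{C}\Rightarrow\widetilde{C}$ is a $\GLukA$-axiom, so Lemma~\ref{Lem:GLukAPropVarsInsteadTruthConstImpGi} applies), whence $\Rightarrow C\to C$ by $(\text{ew})^0$ and $(\Rightarrow\:\to)^0$; on the other side, an $(\text{ew})^0$ followed by $(\to\:\Rightarrow)^0$ turns the (ccan) premise $\mathcal{G}\,|\,\Gamma\Rightarrow\Delta\,|\,\Gamma,C\Rightarrow C,\Delta$ into $\mathcal{G}\,|\,\Gamma,C\to C\Rightarrow\Delta$; one application of (cut) on $C\to C$ then yields $\mathcal{G}\,|\,\Gamma\Rightarrow\Delta$. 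Since (cut) is freely available in the target calculus, there is nothing to eliminate and no descent on the structure of $C$ is required.

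Your induction, besides being superfluous, does not go through as described. Even in the case you label easy --- last rule with principal sequent $\Gamma,C\Rightarrow C,\Delta$ and principal formula not one of the distinguished $C$-occurrences --- the cumulative premise acquires new side sequents that still carry both $C$-occurrences (for $(\to\:\Rightarrow)^0$ with principal formula $A\to B$ in $\Gamma=(\Gamma'',A\to B)$ the premise contains $\Gamma'',C\Rightarrow C,\Delta$ and $\Gamma'',B,C\Rightarrow A,C,\Delta$), and these are not of the redex shape $\Pi\Rightarrow\Sigma\,|\,\Pi,C\Rightarrow C,\Sigma$, so the claim that ``the induction hypothesis applies directly and the conclusion follows by reapplying $\mathcal{R}$'' fails; at minimum a strengthened, multi-sequent induction hypothesis in the style of Lemma~\ref{Lem:DenOneAdmGzero} would be needed. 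The genuinely hard cases --- principal formula one of the distinguished $C$-occurrences, or a cut on $C$ itself --- are precisely where you defer to an unspecified ``subformula-descent technique''; that is in effect an attempt at cancellation/cut elimination, which is delicate for {\L}ukasiewicz hypersequent calculi (cf.\ Proposition~\ref{Pr:CutNotAdm}) and cannot be assumed to succeed. The gap is the absence of the $C\to C$ cut, which makes the whole direction a two-line derivation.
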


\begin{proof}[\textsc{Proof}]
For the left-to-right direction, it is enough to establish that 
(ccan) is admissible for $\GzeroLukA$+(cut).
The conclusion of (ccan) is obtained from its premise and
the hypersequents $\Rightarrow$ and \,${\mathcal{H} = (C \Rightarrow C)}$\,
by rules, which are admissible for $\GzeroLukA$+(cut), as follows
(cf. \cite[Section 4.1]{CiabattoniMetcalfe2003}):
\begin{center}
\def\ScoreOverhang{0pt}
  \AxiomC{$\Rightarrow$\,;}
  \AxiomC{$C \Rightarrow C$}
  \LeftLabel{$\textnormal{(ew)}^0$,$(\Rightarrow\:\to)^0$}
  \BinaryInfC{$\Rightarrow C\to C$;}
  
  \AxiomC{$\mathcal{G}\,|\, \Gamma \Rightarrow \Delta \,|\, \Gamma,C \Rightarrow C,\Delta$}
  \RightLabel{$\textnormal{(ew)}^0$,$(\to\:\Rightarrow)^0$}
  \UnaryInfC{$\mathcal{G}\,|\, \Gamma, C\to C \Rightarrow \Delta$}

\RightLabel{$\textnormal{(ew)}^0$,(cut).}
\BinaryInfC{$\mathcal{G}\,|\, \Gamma \Rightarrow \Delta$}
\DisplayProof
\end{center}
The hypersequent $\Rightarrow$ is an axiom of $\GzeroLukA$.
The hypersequent $\widetilde{\mathcal{H}}$ is an axiom of $\GLukA$,
and hence by Lemma \ref{Lem:GLukAPropVarsInsteadTruthConstImpGi}, we get
\,${\vdash_{\GzeroLukA} \mathcal{H}}$.
Thus (ccan) is admissible for $\GzeroLukA$+(cut).

For the right-to-left direction,
it suffices to show that (cut) is admissible for $\GzeroLukA$+(ccan).
This is done in item 3 of the proof of Lemma \ref{Lem:GthreeCutImpGzeroCcan}.
\end{proof}

\begin{proof}[\textsc{Proof of Theorem~\ref{Th:GzeroGthreeHLukA}}]
Let $A$ be an $\LukA$-sentence.
By \cite[Theorem 2.4]{HajekParisShepherdson2000}, we have:
\,${\vdash_{\HRPLA} A}$ iff \,${\vdash_{\HLukA} A}$.
Also the following implications hold:
\begin{center}
$\vdash_{\HLukA} A$  
$\stackrel{\ref{Lem:HilbCalcEquForSent}\phantom{1}}{\implies}$
$\vdash_{\HcLukA} A$  
$\stackrel{\ref{Lem:HcRPLAImpGthreeCut}^{\approx}}{\implies}$
$\vdash_\textnormal{$\GthreeLukA$+(\L cut)} A$    
$\stackrel{\ref{Lem:GthreeCutImpGzeroCcan}^{\approx}}{\implies}$
$\vdash_\textnormal{$\GzeroLukA$+(\L ccan)} A$     
$\stackrel{\ref{Lem:GzeroCcanImpGzeroCut}^{\approx}}{\implies}$\\
$\vdash_\textnormal{$\GzeroLukA$+(\L cut)} A$
$\stackrel{\ref{Lem:GzeroImpGLukA}^{\approx}}{\implies}$
$\vdash_\textnormal{$\GLukA$+(\L cut)} A$
$\implies$  
$\vdash_{\HLukA} A$.
\end{center}
The first implication holds by Lemma \ref{Lem:HilbCalcEquForSent};
the last one, by \cite[Theorem 9]{BaazMetcalfe2010}.
Over each of the other implications, a number (with the symbol~$^{\approx}$) 
is given indicating that the implication is proved by analogy with
the lemma designated by the number.   
\end{proof}

\section{Conclusion}
\label{sec:Concl}

In the present article, we have established that the calculi $\GzeroLukA$ 
and $\GthreeLukA$ for the logic $\RPLA$ are equivalent and
are conservative extensions of the calculus $\GLukA$ for the logic $\LukA$
(see Theorems \ref{Th:GzeroGonezGthreeEquiv} and 
\ref{Th:GzeroGoneGonezGthreeConsExtGLukA}).

The crucial part of our argument is the syntactic proofs of the admissibility
for $\GzeroLukA$ of the nonstandard variants $(\textnormal{den}_1)$ and 
$(\textnormal{den}_0)$ of the density rule
(see Lemmas \ref{Lem:DenOneAdmGzero} and~\ref{Lem:DenZeroAdmGzero}).
These proofs can be easily adapted to show the admissibility 
of the nonstandard density rule for $\GzeroLukA$
(see Remark \ref{rem:DensityAdmGzero} on p.~\pageref{rem:DensityAdmGzero}).
The given proof of the admissibility of $(\textnormal{den}_1)$ for $\GzeroLukA$
provides an algorithm for transforming a proof of a hypersequent 
in $\GzeroLukA$+$(\textnormal{den}_1)$ into a proof of the same hypersequent 
in $\GzeroLukA$,
in other words, establishes elimination of $(\textnormal{den}_1)$ for
$\GzeroLukA$+$(\textnormal{den}_1)$;\footnote{
  Suppose that a rule $\mathcal{R}$ is not an inference rule of 
  a calculus $\mathfrak{C}$.
  It is said that \emph{elimination of $\mathcal{R}$ holds for} 
  $\mathfrak{C}$+$\mathcal{R}$ (as well as \emph{for} $\mathfrak{C}$)
  if a proof of an object in $\mathfrak{C}$+$\mathcal{R}$ can be 
  algorithmically transformed into a proof of the same object in $\mathfrak{C}$
  (cf., e.g., \cite{CiabattoniMetcalfe2003, MetcalfeMontagna2007, MOG2009, 
  MetcalfeTsinakis2017}). 
}
similarly with $(\textnormal{den}_0)$ and the nonstandard density rule.

Density elimination proofs are known for some calculi
(and for some classes of calculi), though for logics 
different from $\LukA$ and $\RPLA$;
see \cite{BaazZach2000, BaazCiabattoniFermuller2003, MetcalfeMontagna2007,
CiabattoniMetcalfe2008, MOG2009, 
Baldi2012, Baldi2014, Baldi2015a, Baldi2015b, MetcalfeTsinakis2017, Baldi2017}.
In all these works except \cite{BaazCiabattoniFermuller2003},
such proofs use the cut rule even if no application of it is in 
an initial formal proof. 
In \cite{BaazCiabattoniFermuller2003} the density elimination proof 
for a single-conclusion hypersequent calculus 
for first-order G{\"o}del logic does not introduce cuts
if no cuts are in an initial formal proof.
Recall that the cut rule is not admissible for $\GzeroLukA$
(see Proposition \ref{Pr:CutNotAdm}).
Our technique for proving the admissibility of $(\textnormal{den}_1)$ 
for $\GzeroLukA$ resembles 
the technique of \cite{BaazCiabattoniFermuller2003}
for proving density elimination,
but was rediscovered 
and elaborated for the multiple-conclusion calculus $\GzeroLukA$ 
for the logic $\RPLA$.

Further, the book \cite{MOG2009} on p.~134 says that
it is unclear whether density elimination can be obtained 
for the propositional fragment of the calculus $\GLukA$.
We have given such a density elimination proof 
for the calculus $\GzeroLukA$, which is a conservative extension of $\GLukA$;
and let us note that a $\GLukA$-proof can be algorithmically transformed into
a $\GzeroLukA$-proof of the same $\LukA$-hypersequent, and conversely
(see Theorem \ref{Th:GzeroConsExtGLukA} and its proof).
Moreover, to the best of our knowledge, 
the given proof is the first syntactic proof of 
density admissibility 
for a multiple-con\-clu\-sion hypersequent calculus in which 
neither the weakening rule nor the contraction rule is admissible.\footnote{
  The weakening and contraction rules are, respectively:
  \begin{center}
  $\dfrac{\mathcal{G}\,|\,\Gamma\Rightarrow\Delta}
         {\mathcal{G}\,|\,\Gamma,\Pi\Rightarrow\Sigma,\Delta}$ 
  \quad and \quad
  $\dfrac{\mathcal{G}\,|\,\Gamma,\Pi,\Pi\Rightarrow\Sigma,\Sigma,\Delta}
         {\mathcal{G}\,|\,\Gamma,\Pi\Rightarrow\Sigma,\Delta}$,
  \end{center}
  where $\mathcal{G}$ is any hypersequent, and
  $\Gamma$, $\Delta$, $\Pi$, and $\Sigma$ are any finite multisets of formulas
  of a language under consideration (see \cite[Section 4.3]{MOG2009}).
}

It would be nice to generalize
the density elimination technique used in \cite{BaazCiabattoniFermuller2003} 
and in the present article to as wide a syntactic class of hypersequent calculi
as possible.
Besides, notice that how complexity of formal proofs varies
has not been investigated for any density elimination proof.

Also in the given article, we have established that the calculi $\HLukA$,
$\GzeroLukA$+(\L cut), and $\GthreeLukA$+(\L cut) prove the same 
$\LukA$-sentences (see Theorem \ref{Th:GzeroGthreeHLukA}).
And we have shown that the provability of an $\RPLA$-sentence in $\HRPLA$ 
implies its provability in $\GthreeLukA$+(cut), which in turn implies 
its provability in $\GzeroLukA$+(cut) (see Theorem \ref{Th:GzeroGthreeHRPLA}).

Thus natural open questions are whether 
any $\RPLA$-sentence provable in $\GthreeLukA$+(cut) is provable in $\HRPLA$,
and a similar question for $\GzeroLukA$+(cut).
To answer both questions affirmatively, 
it suffices to establish that 
any $\RPLA$-sentence provable in $\GzeroLukA$+(cut) is provable in $\HRPLA$,
assuming that hypersequents do not contain semipropositional variables, i.e.,
are built up only from $\RPLA$-formulas.
With this preparation, it is worth trying to extend to $\RPLA$ 
the algebraic technique used in \cite{BaazMetcalfe2010}
for showing that any $\LukA$-sentence provable in $\GLukA$+(\L cut) 
is provable in $\HLukA$.

\begin{small}

\end{small}

\newpage
\appendix
\begin{center} \bf
APPENDIX
\end{center}

\section{The admissibility of the rules $\textnormal{(split)}^0$ and
         $\textnormal{(mix)}^0$ for $\GzeroLukA$}
\label{secApp:SplitMixAdmGzero}

Item 4 of the proof of Lemma \ref{Lem:StructRulesAdmGzero} says that
the proof of the hp-ad\-mis\-si\-bi\-lity of $\textnormal{(split)}^0$ 
for $\GzeroLukA$ is very similar to the proof of Lemma 7 in \cite{Ger2017}.
Besides, in item 1 of the proof of Lemma \ref{Lem:GthreeCutImpGzeroCcan},
we extend the proof of the hp-admissibility of $\textnormal{(split)}^0$ 
for $\GzeroLukA$ with a new case and obtain the proof of
the hp-admissibility of $\textnormal{(split)}^0$ for $\GzeroLukA$+(ccan).

Next, item 5 of the proof of Lemma \ref{Lem:StructRulesAdmGzero} says that
the proof of the admissibility of $\textnormal{(mix)}^0$ for $\GzeroLukA$ 
can be easily obtained from the proof of Lemma 8 in \cite{Ger2017} 
by identifying the notion of a completable ancestor of a sequent occurrence
with the notion of an ancestor of a sequent occurrence
(the former notion being used in \cite{Ger2017}).

Below we give the proof of the hp-admissibility of $\textnormal{(split)}^0$ 
for $\GzeroLukA$ and the proof of the admissibility of $\textnormal{(mix)}^0$ 
for $\GzeroLukA$, adapting the mentioned proofs in \cite{Ger2017}
(and correcting some inaccuracies introduced in \cite{Ger2017} by
a translator of the original Russian article).

\begin{llemma} 
The following rule is hp-admissible for $\GzeroLukA$:
$$\dfrac{\mathcal{G}\,|\,\Gamma_1,\Gamma_2\Rightarrow\Delta_1,\Delta_2}
        {\mathcal{G}\,|\,\Gamma_1\Rightarrow\Delta_1\,|\,\Gamma_2\Rightarrow\Delta_2}~\textnormal{(split)}^0.$$
\end{llemma}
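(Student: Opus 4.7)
The plan is to prove the lemma by induction on the height $h$ of a $\GzeroLukA$-proof $D$ of $\mathcal{H} = (\mathcal{G}\,|\,\Gamma_1,\Gamma_2\Rightarrow\Delta_1,\Delta_2)$, constructing a proof of $\mathcal{H}' = (\mathcal{G}\,|\,\Gamma_1\Rightarrow\Delta_1\,|\,\Gamma_2\Rightarrow\Delta_2)$ of height at most $h$. I will freely invoke the hp-admissibility of $\textnormal{(ew)}^0$ and $\textnormal{(ec)}^0$ (items 1 and 2 of the present lemma), which does not increase proof height.

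For the base case, suppose $\mathcal{H}$ is an axiom, so $\vDash \mathcal{H}_{at}$. If the target sequent $\Gamma_1,\Gamma_2\Rightarrow\Delta_1,\Delta_2$ is atomic, then so are both split halves. Fix any hs-interpretation $M$ and $M$-valuation $\nu$ and pick a true sequent in $\mathcal{H}_{at}$: if it lies in $\mathcal{G}_{at}$ it also lies in $\mathcal{H}'_{at}$; otherwise it is the target sequent, yielding $\MsV{\Gamma_1}_{M,\nu}+\MsV{\Gamma_2}_{M,\nu} \leqslant \MsV{\Delta_1}_{M,\nu}+\MsV{\Delta_2}_{M,\nu}$, and the elementary fact ``if $a+b\leqslant c+d$ then $a\leqslant c$ or $b\leqslant d$'' gives truth of one of the split halves. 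Hence $\vDash \mathcal{H}'_{at}$ and $\mathcal{H}'$ is an axiom. If the target sequent is non-atomic it does not belong to $\mathcal{H}_{at}$, so $\mathcal{H}_{at} = \mathcal{G}_{at}$, while $\mathcal{H}'_{at}$ is obtained from $\mathcal{G}_{at}$ by possibly adjoining the atomic halves of the split; every sequent of $\mathcal{H}_{at}$ is therefore still present in $\mathcal{H}'_{at}$, and $\vDash \mathcal{H}'_{at}$ follows at once.

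For the inductive step, let $\mathcal{R}$ be the last rule of $D$ and $\mathcal{S}$ its principal sequent occurrence. If $\mathcal{S}$ lies in the distinguished occurrence of $\mathcal{G}$, the rule does not touch the target sequent: I apply the IH to each premise (splitting the ancestor of the target sequent) and then reapply $\mathcal{R}$. Otherwise $\mathcal{S}$ is the target sequent, and by symmetry I may assume the principal formula of $\mathcal{R}$ occurs in $\Gamma_1$ or $\Delta_1$. Because every rule of $\GzeroLukA$ is cumulative and acts only on the $\Gamma_1,\Delta_1$ side of $\mathcal{S}$, each premise contains the target sequent together with one or two ``side'' sequents, each of the form $(\Gamma_1',\Gamma_2\Rightarrow\Delta_1',\Delta_2)$. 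I apply the IH to split the target sequent and each side sequent in every premise into a $\Gamma_1$-half and a copy of $\Gamma_2\Rightarrow\Delta_2$; this produces a hypersequent in which the $\Gamma_1$-halves form exactly the triple (or pair) that $\mathcal{R}$ needs, accompanied by several duplicate copies of $\Gamma_2\Rightarrow\Delta_2$. I reapply $\mathcal{R}$ on the $\Gamma_1$-halves and use the hp-admissible $\textnormal{(ec)}^0$ to contract the surplus copies of $\Gamma_2\Rightarrow\Delta_2$. For the eigenvariable rules $(\Rightarrow\forall)^0$ and $(\exists\Rightarrow)^0$, the proper parameter of the original application still does not occur in $\mathcal{H}'$, because $\mathcal{H}$ and $\mathcal{H}'$ consist of the same formula occurrences.

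The main obstacle is the base case: one has to observe that ``splitting'' interacts benignly with the silent removal of non-atomic sequents in passing from $\mathcal{H}$ to $\mathcal{H}_{at}$, and to reduce the validity question to the elementary arithmetic inequality above. The only other delicate point is that reapplying $\mathcal{R}$ after the IH creates duplicate copies of $\Gamma_2\Rightarrow\Delta_2$, which is why the hp-admissibility of $\textnormal{(ec)}^0$ must be established \emph{before} that of $\textnormal{(split)}^0$ in this lemma.
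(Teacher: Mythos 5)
Your proposal is correct and follows essentially the same route as the paper's proof: induction on the height of the given proof, with a case split on whether the principal sequent occurrence lies in the context $\mathcal{G}$ or is the target sequent, splitting all distinguished sequent occurrences in the premise(s) via the induction hypothesis, contracting the resulting duplicate copies of $\Gamma_2\Rightarrow\Delta_2$ with the hp-admissible $\textnormal{(ec)}^0$, and reapplying the rule. Your base case is in fact spelled out in more detail than the paper's (which merely asserts it is ``easy to see''), and your arithmetic observation that $a+b\leqslant c+d$ forces $a\leqslant c$ or $b\leqslant d$ is exactly the fact underlying that assertion.
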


\begin{proof}[\textsc{Proof}]
Let 
$$\mathcal{H}_1 = (\mathcal{G}\,|\,\Gamma_1,\Gamma_2\Rightarrow\Delta_1,\Delta_2),
  \quad 
  \mathcal{H}_2 = (\mathcal{G}\,|\,\Gamma_1\Rightarrow\Delta_1\,|\,\Gamma_2\Rightarrow\Delta_2).$$

Using induction on the height of a ($\GzeroLukA$-)proof $D_1$ of $\mathcal{H}_1$,
we show that $D_1$ can be transformed into a proof of $\mathcal{H}_2$
whose height is not greater than the height of~$D_1$.

1. If $\mathcal{H}_1$ is an axiom, then it is easy to see that 
$\mathcal{H}_2$ is an axiom too.

2. Let the lowest hypersequent $\mathcal{H}_1$ in $D_1$ be the conclusion of
an application $R$ of a rule~$\mathcal{R}$.
We consider the case where $\mathcal{R}$ is ${(\to\:\Rightarrow)^0}$;
the remaining cases are similar.

2.1. Suppose that the principal sequent occurrence in the application $R$ is in
the distinguished occurrence of $\mathcal{G}$ in $\mathcal{H}_1$.
Then the premise $\mathcal{H}_0$ of the application $R$ has the form
\,${\mathcal{G}_0\,|\,\Gamma_1,\Gamma_2\Rightarrow\Delta_1,\Delta_2}$.
By the induction hypothesis for the proof of $\mathcal{H}_0$ 
(which is a subtree of the proof tree $D_1$), we can construct a proof of  
\,${\mathcal{G}_0\,|\,\Gamma_1\Rightarrow\Delta_1\,|\,\Gamma_2\Rightarrow\Delta_2}$.
Applying the rule $\mathcal{R}$, we obtain the required proof of~$\mathcal{H}_2$.

2.2. Suppose that the principal sequent occurrence in the application $R$ is
the distinguished occurrence of
\,${\Gamma_1,\Gamma_2\Rightarrow\Delta_1,\Delta_2}$\, in $\mathcal{H}_1$.
For definiteness we assume that the principal occurrence of a formula
${A_1 \to B_1}$ in the application $R$ is in $\Gamma_1$.
Then ${\Gamma_1 = (\Gamma_1', A_1 \to B_1)}$ for some $\Gamma_1'$.
The proof $D_1$ has the form
\begin{center}
\def\ScoreOverhang{0pt}
\AxiomC{$D_0$} \noLine 
\UnaryInfC{$\genfrac{}{}{0pt}{}{\displaystyle 
    \mathcal{G} \,|\, \Gamma_1', A_1 \to B_1, \Gamma_2 \Rightarrow \Delta_1, \Delta_2
    \,|\, \Gamma_1', \Gamma_2 \Rightarrow \Delta_1, \Delta_2 
  }{\displaystyle 
    |\, \Gamma_1', B_1, \Gamma_2 \Rightarrow A_1, \Delta_1, \Delta_2
  }$}
\RightLabel{$(\to\:\Rightarrow)^0$.}
\UnaryInfC{$\mathcal{G}\,|\, \Gamma_1', A_1 \to B_1, \Gamma_2 \Rightarrow \Delta_1, \Delta_2$}
\DisplayProof
\end{center}

We use the induction hypothesis twice and split all the three sequent occurrences
that are distinguished in the lowest hypersequent in the proof $D_0$.
We obtain a proof of the hypersequent 
\begin{gather*}
 \mathcal{G} \,|\, \Gamma_1', A_1 \to B_1 \Rightarrow \Delta_1 
   \,|\, \Gamma_2 \Rightarrow \Delta_2 
   \,|\, \Gamma_1' \Rightarrow \Delta_1
   \,|\, \Gamma_2 \Rightarrow \Delta_2 \\ 
 |\, \Gamma_1', B_1 \Rightarrow A_1, \Delta_1
   \,|\, \Gamma_2 \Rightarrow \Delta_2.
\end{gather*}

From this hypersequent, we eliminate two occurrences of 
${\Gamma_2 \Rightarrow \Delta_2}$ with the help of 
the hp-admissible rule $\textnormal{(ec)}^0$. 
We get a proof $D_0'$ of the hypersequent 
$$\mathcal{G} \,|\, \Gamma_1', A_1 \to B_1 \Rightarrow \Delta_1 
  \,|\, \Gamma_1' \Rightarrow \Delta_1
  \,|\, \Gamma_1', B_1 \Rightarrow A_1, \Delta_1
  \,|\, \Gamma_2 \Rightarrow \Delta_2.$$

Finally, we apply the rule ${(\to\:\Rightarrow)^0}$ to the lowest hypersequent
in $D_0'$ and obtain the required proof of 
\,${\mathcal{G} \,|\, \Gamma_1', A_1 \to B_1 \Rightarrow \Delta_1 
   \,|\, \Gamma_2 \Rightarrow \Delta_2}$.
\end{proof}

\begin{llemma} 
The following rule is admissible for $\GzeroLukA$:
$$\dfrac{\mathcal{G}\,|\,\Gamma_1\Rightarrow\Delta_1;  \quad  \mathcal{G}\,|\,\Gamma_2\Rightarrow\Delta_2}
       {\mathcal{G}\,|\,\Gamma_1,\Gamma_2\Rightarrow\Delta_1,\Delta_2}~\textnormal{(mix)}^0.$$
\end{llemma}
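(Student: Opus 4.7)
Plan. I strengthen the statement to handle mixing into several distinguished sequents at once: for any ${k\geqslant 1}$ and any $\GzeroLukA$-proofs $D_1$ of ${\mathcal{G}^*\,|\,\big[\Gamma^*_l\Rightarrow\Delta^*_l\big]_{l\in 1..k}}$ and $D_2$ of ${\mathcal{G}^*\,|\,\Gamma_2\Rightarrow\Delta_2}$, I aim to produce a $\GzeroLukA$-proof of ${\mathcal{G}^*\,|\,\big[\Gamma^*_l,\Gamma_2\Rightarrow\Delta_2,\Delta^*_l\big]_{l\in 1..k}}$. The original lemma is the case ${k=1}$. The proof follows the pattern of Lemma~8 in \cite{Ger2017}, with the present notion of \emph{ancestor} playing the role of \emph{completable ancestor} in that reference; it is by induction on the height of $D_1$, and without loss of generality I assume that no proper parameter of $D_2$ occurs in any $\Gamma^*_l$ or $\Delta^*_l$.

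Base case (${\text{ht}(D_1)=0}$). Here ${\vDash \mathcal{H}_1^{at}}$ where ${\mathcal{H}_1 = \mathcal{G}^*\,|\,\big[\Gamma^*_l\Rightarrow\Delta^*_l\big]_{l}}$. I modify $D_2$ by ancestor substitution: every ancestor ${\Pi\Rightarrow\Sigma}$ of the distinguished occurrence of ${\Gamma_2\Rightarrow\Delta_2}$ in $D_2$ is replaced by the $k$ sequents ${\Gamma^*_l,\Pi\Rightarrow\Sigma,\Delta^*_l}$, ${l\in 1..k}$. Every rule application of $D_2$ acting on such an ancestor is expanded into $k$ successive applications of the same rule (one per $l$); cumulativity of the rules of $\GzeroLukA$ makes this valid, since the $k-1$ other mixed copies simply inhabit the ambient hypersequent context. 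Each modified leaf is an axiom by the semantic argument: for any hs-interpretation $M$ and $M$-valuation $\nu$, either a non-ancestor atomic sequent of the original leaf persists and is true, or some ancestor ${\Pi\Rightarrow\Sigma}$ is atomic and true, in which case $\vDash \mathcal{H}_1^{at}$ supplies either a true ${T\in\mathcal{G}^{*\,at}}$ (which descends unchanged by cumulativity) or some atomic true ${\Gamma^*_{l_0}\Rightarrow\Delta^*_{l_0}}$, so that the mixed atomic ancestor ${\Gamma^*_{l_0},\Pi\Rightarrow\Sigma,\Delta^*_{l_0}}$ is true by additivity of $\MsV{\cdot}$.

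Inductive step. Let $R_1$ be the last rule of $D_1$ with principal sequent $\mathcal{S}_1$. If $\mathcal{S}_1$ lies in $\mathcal{G}^*$, I invoke the induction hypothesis on each premise of $R_1$ (after weakening $D_2$ by the hp-admissible rule $\textnormal{(ew)}^0$ to match the enlarged ambient context) and reapply $R_1$. If $\mathcal{S}_1$ is one of the distinguished sequents $\Gamma^*_{l_0}\Rightarrow\Delta^*_{l_0}$, each premise of $R_1$ enlarges the family of distinguished sequents by the principal copy together with the auxiliaries introduced by $R_1$, reaching some new size ${k'>k}$. The strengthened induction hypothesis, applied to this enlarged family at strictly smaller $D_1$-height, produces a proof in which $\Gamma_2,\Delta_2$ has been mixed into all $k'$ distinguished sequents simultaneously. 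A single reapplication of $R_1$, whose principal formula now resides inside the merged sequent ${\Gamma^*_{l_0},\Gamma_2\Rightarrow\Delta_2,\Delta^*_{l_0}}$, collapses the $k'-k+1$ mixed sequents derived from the principal and its auxiliaries back into one, giving the desired proof.

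The main obstacle is the second subcase of the inductive step: when the principal sequent of $R_1$ is one of the $\Gamma^*_{l_0}\Rightarrow\Delta^*_{l_0}$, we must simultaneously mix $\Gamma_2,\Delta_2$ into several ancestor sequents inside one premise of $R_1$. The strengthening to arbitrary ${k\geqslant 1}$ is essential, because performing the required mixings sequentially would force the induction hypothesis to apply to intermediate mixed hypersequents whose proof-height is not bounded in terms of $\text{ht}(D_1)$. Carrying out all mixings within a single invocation of the strengthened induction hypothesis circumvents this difficulty, and is precisely the step where the analogy with Lemma~8 of \cite{Ger2017} is leveraged.
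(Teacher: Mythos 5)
Your proposal is correct and follows essentially the same route as the paper's appendix proof: mix $\Gamma_2,\Delta_2$ into every ancestor of the distinguished $\Gamma_1\Rightarrow\Delta_1$ in $D_1$, graft a suitably modified copy of $D_2$ (with the leaf's ancestor data mixed into the ancestors of $\Gamma_2\Rightarrow\Delta_2$) at the leaves, and check that the resulting leaves are axioms via the additivity of $\MsV{\cdot}$ — your height induction with the strengthened $k$-fold statement is just an inductive repackaging of the paper's direct global tree construction. The one hypothesis you should add is the symmetric renaming condition that no proper parameter of $D_1$ occurs in $\Gamma_2\Rightarrow\Delta_2$ (the paper states exactly this), since otherwise the reapplication of $(\Rightarrow\forall)^0$ or $(\exists\Rightarrow)^0$ in your inductive step could violate the eigenvariable condition against the mixed conclusion.
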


\begin{proof}[\textsc{Proof}]
Let 
\begin{gather*}
 \mathcal{H}_1 = (\mathcal{G}\,|\,\Gamma_1\Rightarrow\Delta_1), \quad
 \mathcal{H}_2 = (\mathcal{G}\,|\,\Gamma_2\Rightarrow\Delta_2), \\
 \mathcal{H}_3 = (\mathcal{G}\,|\,\Gamma_1,\Gamma_2\Rightarrow\Delta_1,\Delta_2).
\end{gather*}
We suppose that \,${\vdash_{\GzeroLukA} \mathcal{H}_1}$ and
\,${\vdash_{\GzeroLukA} \mathcal{H}_2}$, and
show that \,${\vdash_{\GzeroLukA} \mathcal{H}_3}$.
Let $D_1$ be a ($\GzeroLukA$-)proof of $\mathcal{H}_1$ such that
no proper parameter 
from $D_1$ occurs in ${\Gamma_2\Rightarrow\Delta_2}$.

We obtain a proof search tree $D_3^0$ for $\mathcal{H}_3$ as follows.
In $D_1$, for each occurrence $\mathcal{S}$ of a sequent of the form
${\Pi_1\Rightarrow\Sigma_1}$,
if $\mathcal{S}$ is an ancestor of the distinguished occurrence of
the sequent ${\Gamma_1\Rightarrow\Delta_1}$ in the root of $D_1$, 
then we replace $\mathcal{S}$ by an occurrence $\mathcal{S}'$ of the sequent
${\Pi_1,\Gamma_2 \Rightarrow \Sigma_1,\Delta_2}$.
We also mark $\mathcal{S}'$ if
$\mathcal{S}$ is an atomic sequent occurrence in a leaf of $D_1$.
Let $\mathcal{S}_i$, ${i=0,\ldots,l-1}$, be all distinct marked sequent
occurrences in~$D_3^0$.

We expand $D_3^0$, proceeding for each ${i=0,\ldots,l-1}$ as follows. 

(0) Let $\mathcal{S}_i$ be an occurrence of a sequent of the form
${\Pi_1,\Gamma_2 \Rightarrow\! \Sigma_1,\Delta_2}$.

(1) We construct a proof $D_2$ of $\mathcal{H}_2$ such that no proper parameter 
from $D_2$ occurs in~$D_3^i$.

(2) We obtain a proof search tree $\widehat{D}_2$ for 
${\mathcal{G} \,|\, \Pi_1,\Gamma_2 \Rightarrow \Sigma_1,\Delta_2}$ thus:
in $D_2$, for each occurrence of a sequent of the form
${\Pi_2\Rightarrow\Sigma_2}$, if this occurrence is 
an ancestor of the distinguished occurrence of the sequent
${\Gamma_2\Rightarrow\Delta_2}$ in the root of $D_2$, 
then we replace this occurrence by ${\Pi_1,\Pi_2 \Rightarrow \Sigma_1,\Sigma_2}$.

(3) We expand each branch of $D_3^i$ containing the occurrence $\mathcal{S}_i$
as follows: we identify the top node of this branch, which represents 
on occurrence of a hypersequent of the form
${\mathcal{G} \,|\, \Pi_1,\Gamma_2 \Rightarrow \Sigma_1,\Delta_2 
  \,|\, \mathcal{H}}$ for some $\mathcal{H}$, 
with the root of the tree 
obtained from $\widehat{D}_2$ by appending ``$|\,\mathcal{H}$''
to each node hypersequent. 
By $D_3^{i+1}$ we denote the tree resulting from this expansion of~$D_3^i$.

It is not difficult to see that the tree $D_3^l$ is a proof search tree for 
$\mathcal{H}_3$. 
It remains to show that $D_3^l$ is a proof.

We consider an arbitrary leaf $L_3$ of $D_3^l$ and show that $L_3$ is an axiom.
Given $L_3$, we find a unique leaf $L_1$ of $D_1$ that transforms 
into a leaf of $D_3^0$ that, in turn, transforms (in expanding $D_3^0$) into
a node of $D_3^l$ belonging to the same branch as~$L_3$.

Let ${\Pi_{1,i}\Rightarrow\Sigma_{1,i}}$, ${i \in I}$, be all atomic sequents
whose occurrences in $L_1$ are 
ancestors of the distinguished occurrence of 
${\Gamma_1\Rightarrow\Delta_1}$ in the root of $D_1$.
By the construction of $D_3^l$, for each ${i \in I}$,
there exist a proof $D_2^i$ of $\mathcal{H}_2$ and 
its leaf $L_2^i$ such that, for each ${j \in J_i}$,
an atomic sequent 
${\Pi_{1,i}, \Pi^i_{2,j} \Rightarrow \Sigma_{1,i}, \Sigma^i_{2,j}}$
occurs in $L_3$, where
${\Pi^i_{2,j}\Rightarrow\Sigma^i_{2,j}}$, ${j \in J_i}$, are all atomic sequents
whose occurrences in $L_2^i$ are 
ancestors of the distinguished occurrence of ${\Gamma_2\Rightarrow\Delta_2}$ 
in the root of~$D_2^i$.

In addition, $L_3$ contains all atomic sequents $S_{1,k}$, ${k \in K}$,
whose occurrences in $L_1$ are ancestors of sequent occurrences in
the distinguished occurrence of $\mathcal{G}$ in the root of~$D_1$.

Finally, for each ${i \in I}$, the leaf $L_3$ contains all atomic sequents 
$S^i_{2,m}$, ${m \in M_i}$, whose occurrences in $L_2^i$ are 
ancestors of sequent occurrences in the distinguished occurrence of 
$\mathcal{G}$ in the root of~$D_2^i$.

The leaf $L_1$ of the proof $D_1$ is an axiom and 
contains exactly the following atomic sequents: 
${\Pi_{1,i}\Rightarrow\Sigma_{1,i}}$ for each ${i \in I}$ \,and\,
$S_{1,k}$ for each ${k \in K}$.
For each ${i \in I}$, the leaf $L_2^i$ of the proof $D_2^i$ is an axiom and
contains exactly the following atomic sequents: 
${\Pi^i_{2,j}\Rightarrow\Sigma^i_{2,j}} $ for each ${j \in J_i}$ \,and\, 
$S^i_{2,m}$ for each ${m \in M_i}$.
Therefore, the leaf $L_3$ of $D_3^l$, 
which contains the above-mentioned atomic sequents, is an axiom too.
\end{proof}

\section{The soundness of the nonstandard density rule}
\label{secApp:DenSound}

Remark \ref{rem:DensitySound} on p.~\pageref{rem:DensitySound} says that 
the rule 
$$\dfrac{\mathcal{G} \,|\, \Gamma, \SpV{p} \Rightarrow \Delta \,|\, \Pi \Rightarrow \SpV{p}, \Sigma}
  {\mathcal{G} \,|\, \Gamma, \Pi \Rightarrow \Delta, \Sigma}~(\text{den}),$$
(1)~is unsound if $\SpV{p}$ is a propositional variable
not occurring in the conclusion,
but (2)~becomes sound if we expand the notion of a hypersequent by 
special variables interpreted by any real numbers, and 
require $\SpV{p}$ to be such a variable not occurring in the conclusion.

\bigskip

Let us prove (1). 
Recall that the propositional variable $\SpV{p}$ is interpreted by
any real number in $[0,1]$.
Consider the following application of (den):
$$\dfrac{\SpV{p} \Rightarrow \bar{0},\bar{0} \,|\, \bar{0} \Rightarrow \SpV{p}}
  {\bar{0} \Rightarrow \bar{0},\bar{0}}.$$
The premise of this application is valid
(because \,${\vDash (\bar{0} \Rightarrow \SpV{p})}$),
but its conclusion is not valid.
\hfill$\Box$

\bigskip

Now let us prove (2), i.e., that 
\,${\vDash (\mathcal{G} \,|\, \Gamma, \SpV{p} \Rightarrow \Delta \,|\, \Pi \Rightarrow \SpV{p}, \Sigma)}$
implies 
\,${\vDash (\mathcal{G} \,|\, \Gamma, \Pi \Rightarrow \Delta, \Sigma)}$
under the specified restriction on $\SpV{p}$.
To make this proof shorter, assume harmlessly that 
the hypersequent $\mathcal{G}$ is empty.

(a)~$\nvDash (\Gamma, \Pi \Rightarrow \Delta, \Sigma)$ 
  $\iff$ for some hs-interpretation $M$ and $M$-valuation~$\nu$,
  \ $\MsV{\Delta}_{M,\nu} - \MsV{\Gamma}_{M,\nu} < 
     \MsV{\Pi}_{M,\nu} - \MsV{\Sigma}_{M,\nu}$  $\iff$
  (by the density of the set of all real numbers)
  for some hs-interpretation $M$, $M$-valuation $\nu$, and real number~$\xi$,
  \ $\MsV{\Delta}_{M,\nu} - \MsV{\Gamma}_{M,\nu} < \xi-1 < 
     \MsV{\Pi}_{M,\nu} - \MsV{\Sigma}_{M,\nu}$.

(b)~$\nvDash (\Gamma, \SpV{p} \Rightarrow \Delta \,|\, \Pi \Rightarrow \SpV{p}, \Sigma)$ 
  $\iff$  for some hs-interpretation $M'$ and $M'$-valuation~$\nu'$,
  \ $\MsV{\Delta}_{M',\nu'} - \MsV{\Gamma}_{M',\nu'} < |\SpV{p}|_{M'} -1 < 
     \MsV{\Pi}_{M',\nu'} - \MsV{\Sigma}_{M',\nu'}$.

It is easy to see that (a) implies (b): take $\nu'=\nu$ 
and define $M'$ to be the same as $M$ but set ${|\SpV{p}|_{M'} = \xi}$.
\hfill$\Box$

\section{The admissibility of the nonstandard density rule for $\GzeroLukA$}
\label{secApp:DenAdmGzero}

Remark \ref{rem:DensityAdmGzero} on p.~\pageref{rem:DensityAdmGzero} says that
the proofs of Lemmas \ref{Lem:DenOneAdmGzero} and \ref{Lem:DenZeroAdmGzero}
can be easily combined to establish the admissibility for $\GzeroLukA$ 
of the rule
$$\dfrac{\mathcal{G} \,|\, \Gamma, \SpV{p} \Rightarrow \Delta \,|\, \Pi \Rightarrow \SpV{p}, \Sigma}
  {\mathcal{G} \,|\, \Gamma, \Pi \Rightarrow \Delta, \Sigma}~(\text{den}),$$
provided the notion of a hypersequent is expanded by 
special variables interpreted by any real numbers, and 
$\SpV{p}$ is such a variable not occurring in the conclusion.

Let us prove the next lemma on 
the admissibility of a generalization of (den) for $\GzeroLukA$, 
denoting by $\SpV{p}$ a special variable that can assume any real values
under hs-interpretations.

\begin{llemma}[admissibility of a generalization of (den) for $\GzeroLukA$]
Suppose that ${m\geqslant 1}$, ${n\geqslant 1}$, 
\begin{gather*}
{\mathcal{H} = \Big(\, \mathcal{G} \,\big|\, \big[ \Gamma_i, \SpV{p} \Rightarrow \Delta_i \big]_{i\in 1..m} \,\big|\, \big[ \Pi_j \Rightarrow \SpV{p}, \Sigma_j \big]_{j\in 1..n} \,\Big)},\\
{\mathcal{H}' = \Big(\, \mathcal{G} \,\big|\, \big[ \Gamma_i, \Pi_j \Rightarrow \Delta_i, \Sigma_j \big]^{i\in 1..m}_{j\in 1..n} \,\Big)},
\end{gather*}
$\SpV{p}$ does not occur in $\mathcal{H}'$, 
and \,${\vdash_{\GzeroLukA} \mathcal{H}}$.
Then \,${\vdash_{\GzeroLukA} \mathcal{H}'}$.
\end{llemma}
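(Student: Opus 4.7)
The plan is to adapt and merge the proofs of Lemmas \ref{Lem:DenOneAdmGzero} and \ref{Lem:DenZeroAdmGzero}, exploiting the crucial simplification that the special variable $\SpV{p}$ now ranges over all of $\mathbb{R}$ under hs-interpretations, with no one-sided bound. Because the set of admissible witnesses for $|\SpV{p}|$ is all of $\mathbb{R}$, I would not need to invoke the auxiliary reshaping Lemmas \ref{Lem:GzeroProofOfAxiomForDenOne} and \ref{Lem:GzeroProofOfAxiomForDenZero} (which in the earlier proofs ensured that leaves contained sequents of a particular form): I would argue by induction directly on the height of an arbitrary $\GzeroLukA$-proof $D$ of $\mathcal{H}$.

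For the base case, I would assume $\vDash \mathcal{H}_{at}$ and write, after harmless reordering,
\[
\mathcal{H}_{at} = \Big(\mathcal{G}_{at} \,\big|\, \big[\Gamma_i,\SpV{p}\Rightarrow\Delta_i\big]_{i\in 1..k} \,\big|\, \big[\Pi_j\Rightarrow\SpV{p},\Sigma_j\big]_{j\in 1..l}\Big)
\]
with $0\leqslant k \leqslant m$ and $0\leqslant l \leqslant n$, and set $\mathcal{H}'_{at}=(\mathcal{H}')_{at}$. When both $k>0$ and $l>0$, assuming $\nvDash \mathcal{H}'_{at}$ would give an hs-interpretation $M$ and $M$-valuation $\nu$ witnessing $\MsV{\Delta_i}_{M,\nu} - \MsV{\Gamma_i}_{M,\nu} < \MsV{\Pi_j}_{M,\nu} - \MsV{\Sigma_j}_{M,\nu}$ for all relevant $i,j$, and by the density of $\mathbb{R}$ there would be $\xi\in\mathbb{R}$ with $\MsV{\Delta_i}_{M,\nu} - \MsV{\Gamma_i}_{M,\nu} < \xi-1 < \MsV{\Pi_j}_{M,\nu} - \MsV{\Sigma_j}_{M,\nu}$; the hs-interpretation obtained from $M$ by setting $|\SpV{p}|=\xi$ would then falsify every sequent of $\mathcal{H}_{at}$, contradicting its validity. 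When $l=0$ (respectively $k=0$), I would note that $\SpV{p}$ can be chosen arbitrarily large positive (resp.\ arbitrarily negative) to falsify the remaining sequents containing it, so $\vDash \mathcal{H}_{at}$ forces $\vDash \mathcal{G}_{at} = \mathcal{H}'_{at}$.

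For the inductive step, I would analyse the last rule application $R$ in $D$ as in items 2.1--2.2 of the proof of Lemma \ref{Lem:DenOneAdmGzero}. If the principal sequent occurrence of $R$ lies inside the distinguished copy of $\mathcal{G}$, I would apply the induction hypothesis to every premise and reapply the rule. Otherwise the principal sequent occurrence lies in one of the sequents $\Gamma_i,\SpV{p}\Rightarrow\Delta_i$ or $\Pi_j\Rightarrow\SpV{p},\Sigma_j$, and I would treat each logical rule exactly as in the corresponding subcase of Lemma \ref{Lem:DenOneAdmGzero} (or Lemma \ref{Lem:DenZeroAdmGzero} for sequents of the second form): one-premise rules propagate via the induction hypothesis followed by $n$ (respectively $m$) reapplications of the rule, the two-premise rule $(\Rightarrow\:\to)^0$ is handled using Lemma \ref{Lem:GzeroAdmGenToRightRule}, and the eigenvariable rules $(\Rightarrow\forall)^0$ and $(\exists\Rightarrow)^0$ are handled using Lemmas \ref{Lem:GzeroAdmGenForallRight} and \ref{Lem:GzeroAdmGenExistsLeft}, respectively.

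The main obstacle I anticipate is bookkeeping rather than new mathematical content: because $\SpV{p}$ may appear on either side of $\Rightarrow$ in the principal sequent, the inductive step effectively doubles the number of subcases relative to either of the two earlier density lemmas, and one must verify that every invocation of the induction hypothesis preserves the hypotheses of the present lemma, namely that $\SpV{p}$ still does not occur in the intended $\mathcal{H}'$ and that the sequents of both forms remain available to drive the reductions via Lemmas \ref{Lem:GzeroAdmGenToRightRule}, \ref{Lem:GzeroAdmGenForallRight}, and \ref{Lem:GzeroAdmGenExistsLeft}. Once this bookkeeping is set up, no further idea beyond those already present in Lemmas \ref{Lem:DenOneAdmGzero} and \ref{Lem:DenZeroAdmGzero} is required.
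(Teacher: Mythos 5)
Your proposal is correct and follows essentially the same route as the paper's own proof: induction on the height of an arbitrary $\GzeroLukA$-proof of $\mathcal{H}$ with no appeal to the leaf-reshaping lemmas, a base case split on whether $k$ and $l$ vanish (using density of $\mathbb{R}$ when both are positive and the unbounded range of $\SpV{p}$ otherwise), and an inductive step taken verbatim from item~2 of the proof of Lemma~\ref{Lem:DenOneAdmGzero} with $\SpV{p}_1$ replaced by $\SpV{p}$. Your observation that the two-sided range of $\SpV{p}$ is precisely what makes the reshaping lemmas unnecessary is exactly the simplification the paper exploits.
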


\begin{proof}[\textsc{Proof}]
Take a ($\GzeroLukA$-)proof $D$ of $\mathcal{H}$ and
proceed by induction on the height of~$D$.

1. Suppose that $\mathcal{H}$ is an axiom; i.e., ${\vDash \mathcal{H}_{at}}$.
Without loss of generality we assume that
$$\mathcal{H}_{at} = \Big(\, \mathcal{G}_{at} \,\big|\, 
  \big[ \Gamma_i, \SpV{p} \Rightarrow \Delta_i \big]_{i\in 1..k} \,\big|\, 
  \big[ \Pi_j \Rightarrow \SpV{p}, \Sigma_j \big]_{j\in 1..l} \,\Big),$$
where \,${0 \leqslant k \leqslant m}$ and \,${0 \leqslant l \leqslant n}$.
We put \,${\mathcal{H}'_{at} = (\mathcal{H}')_{at}}$.
Consider the following cases 1.1--1.4.

\textsl{Case 1.1: ${k \neq 0}$ and ${l \neq 0}$.}
We have
$${\mathcal{H}'_{at} = \Big(\, \mathcal{G}_{at} \,\big|\, \big[ \Gamma_i, \Pi_j \Rightarrow \Delta_i, \Sigma_j \big]^{i\in 1..k}_{j\in 1..l} \,\Big)}.$$

We want to show that ${\vDash \mathcal{H}'_{at}}$.
Suppose otherwise; i.e.,
for some hs-in\-ter\-pre\-ta\-tion $M$ and some $M$-valuation $\nu$, 
there is no true sequent in $\mathcal{G}_{at}$, and 
for all ${i\in 1..k}$ and ${j\in 1..l}$,
$$\MsV{\Delta_i}_{M,\nu} - \MsV{\Gamma_i}_{M,\nu} < \MsV{\Pi_j}_{M,\nu} - \MsV{\Sigma_j}_{M,\nu}.$$
By the density of the set $\mathbb{R}$ of all real numbers, 
there exists ${\xi \in \mathbb{R}}$ such that, 
for all ${i\in 1..k}$ and ${j\in 1..l}$,
$$\MsV{\Delta_i}_{M,\nu} - \MsV{\Gamma_i}_{M,\nu} < \xi-1 < \MsV{\Pi_j}_{M,\nu} - \MsV{\Sigma_j}_{M,\nu}.$$

Define an hs-interpretation $M_1$ to be like $M$, but set
${|\SpV{p}|_{M_1} = \xi}$.
Since $\SpV{p}$ does not occur in $\mathcal{G}_{at}$, 
$\Gamma_i$, $\Delta_i$ (${i\in 1..k}$), $\Pi_j$, $\Sigma_j$ (${j\in 1..l}$), 
we see that no sequent in $\mathcal{H}_{at}$ is true 
under the hs-interpretation $M_1$ and $M_1$-valuation $\nu$.
Hence ${\nvDash \mathcal{H}_{at}}$, a contradiction.

Therefore ${\vDash \mathcal{H}'_{at}}$, and so $\mathcal{H}'$ is an axiom.

\textsl{Case 1.2: ${k = 0}$ and ${l \neq 0}$.}
Then
$${\mathcal{H}_{at} = \Big(\, \mathcal{G}_{at} \,\big|\, 
    \big[ \Pi_j \Rightarrow \SpV{p}, \Sigma_j \big]_{j\in 1..l} \,\Big)}$$
and \,${\mathcal{H}'_{at} = \mathcal{G}_{at}}$.
Since $\SpV{p}$ does not occur in $\mathcal{G}_{at}$, $\Pi_j$, $\Sigma_j$
(${j\in 1..l}$), and 
hs-interpretations can take $\SpV{p}$ to negative real numbers whose 
absolute values are arbitrarily large, 
we conclude that 
${\vDash \mathcal{H}_{at}}$ implies ${\vDash \mathcal{G}_{at}}$.
Thus ${\vDash \mathcal{H}'_{at}}$ and $\mathcal{H}'$ is an axiom.

\textsl{Case 1.3: ${k \neq 0}$ and ${l = 0}$.}
Then
$${\mathcal{H}_{at} = \Big(\, \mathcal{G}_{at} \,\big|\, 
    \big[ \Gamma_i, \SpV{p} \Rightarrow \Delta_i \big]_{i\in 1..k} \,\Big)}$$
and \,${\mathcal{H}'_{at} = \mathcal{G}_{at}}$.
Since $\SpV{p}$ does not occur in  $\mathcal{G}_{at}$, $\Gamma_i$, $\Delta_i$ 
(${i\in 1..k}$), and $\SpV{p}$ can assume arbitrarily large values 
under hs-interpretations, we see that
${\vDash \mathcal{H}_{at}}$ implies ${\vDash \mathcal{G}_{at}}$.
So ${\vDash \mathcal{H}'_{at}}$ and $\mathcal{H}'$ is an axiom.

\textsl{Case 1.4: ${k = 0}$ and ${l = 0}$.}
Then \,${\mathcal{H}_{at} = \mathcal{G}_{at} = \mathcal{H}'_{at}}$.
Thus ${\vDash \mathcal{H}_{at}}$ means that
${\vDash \mathcal{H}'_{at}}$ and $\mathcal{H}'$ is an axiom.

2. It remains to consider the case where the root hypersequent $\mathcal{H}$ 
in $D$ is the conclusion of a rule application.
But the argument for this case can be obtained from
item 2 of the proof of Lemma \ref{Lem:DenOneAdmGzero} 
by replacing $\SpV{p}_1$ with~$\SpV{p}$.
\end{proof}

\end{document}